\providecommand{\U}[1]{\protect \rule{.1in}{.1in}}
\renewcommand{\emph}[1]{\textcolor{black}{#1}}
\numberwithin{equation}{section}
\newtheorem{theorem}{Theorem}[section]
\newtheorem{condition}[theorem]{Assumption}
\newtheorem{definition}[theorem]{Definition}
\newtheorem{lemma}[theorem]{Lemma}
\newtheorem{notation}[theorem]{Notation}
\newtheorem{proposition}[theorem]{Proposition}
\newtheorem{remark}[theorem]{Remark}
\newenvironment{proof}[1][Proof]{\noindent \textbf{#1.} }{\  \rule{0.5em}{0.5em}}
\begin{document}

\title{Representation of forward performance criteria with random endowment via FBSDE
and its application to forward optimized certainty equivalent\thanks{Author
order is alphabetical. All authors are co-first authors of this paper. The
authors would like to express their cordial gratitude to the editor and the
referees for their exceptionally careful reading and insightful comments.}}
\author{Gechun Liang\thanks{Department of Statistics, University of Warwick, Coventry,
CV4 7AL, U.K. Email: \texttt{g.liang@warwick.ac.uk}. Research partially
supported by National Natural Science Foundation of China (No. 12171169) and
Laboratory of Mathematics for Nonlinear Science, Fudan University.}
\and Yifan Sun\thanks{Department of Applied Mathematics, The Hong Kong Polytechnic
University, Hong Kong, P.R.C. Email: \texttt{yifan-11.sun@polyu.edu.hk}.}
\and Thaleia Zariphopoulou\thanks{Departments of Mathematics and IROM, The
University of Texas at Austin, Austin, Texas 78712, U.S.A. and Oxford-Man
Institute, University of Oxford, Oxford, OX2 6ED, U.K. Email:
\texttt{zariphop@math.utexas.edu}. The author would like to thank the
Mathematical Institute, University of Oxford and the Institute for
Mathematical and Statistical Innovation at the University of Chicago for their
hospitality, a substantial part of this work was completed while visiting
there.}}
\date{December 2023 (first version) \\ October 2025 (this version)}
\maketitle

\begin{abstract}
We extend the notion of forward performance criteria to settings with random
endowment in incomplete markets. Building on these results, we introduce and
develop the novel concept of \textit{forward optimized certainty equivalent
(forward OCE)}, which offers a genuinely dynamic valuation mechanism that
accommodates progressively adaptive market model updates, stochastic risk
preferences, and incoming claims with arbitrary maturities.

In parallel, we develop a new methodology to analyze the emerging stochastic
optimization problems by directly studying the candidate optimal control
processes for both the primal and dual problems. Specifically, we derive two
new systems of forward-backward stochastic differential equations (FBSDEs) and
establish necessary and sufficient conditions for optimality, and various
equivalences between the two problems. This new approach is general and
complements the existing one for forward performance criteria with random
endowment based on backward stochastic partial differential equations
(backward SPDEs) for the related value functions. We, also, consider
representative examples for both forward performance criteria with random
endowment and for forward OCE. Furthermore, for the case of exponential
criteria, we investigate the connection between forward OCE and forward
entropic risk measures.

\end{abstract}

\newpage

\section{Introduction}

This work contributes to the theory of forward performance criteria in
incomplete markets. It studies forward performance processes in the presence
of random endowment and, building on this, introduces the novel concept of
\textit{forward optimized certainty equivalent (forward OCE)}. In parallel, it
develops a new methodological approach to study the emerging forward
stochastic optimization problems through new, interesting on their own right,
\textit{forward-backward stochastic differential equations (FBSDEs)} satisfied
by the optimal control processes of the primal and dual problems.

Random endowments are a central object of study in expected utility
maximization and, furthermore, play an important role in indifference
valuation/certainty equivalent, where they model the payoffs (or liabilities)
to be priced and hedged. The aim herein is twofold: to develop a general
framework to incorporate them within the broader class of forward performance
processes and analyze the related optimal control processes, and to
investigate how they can be used to generalize the widely-used notion of
optimized certainty equivalent.

The motivation for the plan of study herein stems from various shortcomings of
the classical (backward) setting, as highlighted next and further explained
later in the paper. Recalling the standard paradigm in the random endowment
literature, one pre-specifies at initial time $t=0$ a quadruple consisting of
i) the (longest) horizon $[0,T]$ within which random endowments will arrive,
ii) the utility function $U$ at the end of this horizon, iii) the underlying
market model $\mathcal{M}_{[0,T]}$ and iv) the upcoming random endowments. In
other words, these modeling ingredients are chosen statically, once and for
all, at initiation. However, \emph{more flexibility is frequently needed as
new random endowments might arrive at times not known before or arrive at
random times (see \cite{Wang}), the market model might be also revised (see
\cite{Angoshtari2020}) and, furthermore, the risk preferences themselves could
be modified (see \cite{Strub2021})}. Indeed, let us consider the following
simple representative case: for simplicity, it is assumed that there is a
random endowment, given by a random variable $P_{T}$ specified at $t=0$.
However, at some future time $\tau$, with $0<\tau<T$, the utility maximizer
learns that an additional payoff $P_{T_{1}}$ is expected at time $T_{1}<T$.
One now sees that the solution of the problem in $[0,\tau]$ has not considered
this updated information, yielding \textit{a posteriori} time-inconsistent
solutions. The situation becomes even more complex if this new random
endowment actually arrives at time $T_{1}>T$, for the utility $U$ was
pre-defined only at $T$, and not beyond this time. Therefore, a modified
utility maximization problem in $[T,T_{1}]$ needs first to be defined in order
to accommodate the new random endowment. Additional considerations arise if at
an intermediate time, say $\tau^{\prime}<T$, the market model is updated to
$\mathcal{M}_{[\tau^{\prime},T]}^{\prime}$, which will also yield \textit{a
posteriori} time-inconsistent solutions, even if the assumptions on the random
endowments remain the same.

Naturally, these limitations also have undesirable consequences on
indifference prices. Indeed, the expected utility maximization framework does
not allow to price in a time-consistent manner claims arriving at times not
known at initiation, and especially when these new claims mature at instances
beyond the pre-specified horizon, in which case the underlying model is not
even well defined. It, also, fails to price, in a time-consistent manner,
claims when the market model is being dynamically revised, or when the risk
preferences themselves evolve stochastically. These limitations were one of
the main motivations for the third author and M. Musiela to develop the theory
of forward performance criteria in the early 2000s (see, among others,
\cite{MZ0, MZ-Kurtz, MZ1, MZ2, MZ2010}).

Since then, the effort in the forward approach has mainly focused on, from the
one hand, developing a probabilistic characterization of forward performance
processes and, from the other, studying forward indifference prices for the
special class of exponential criteria, see, for example, the stochastic
partial differential equation (SPDE) approach initiated in \cite{MZ2010}, and
also studied in \cite{El_Karoui_2018, El_Karoui_2022, EM2014, El_Karoui_2021}.
When the forward performance process is homothetic, like the exponential case,
a new class of ergodic backward stochastic differential equations (BSDEs) have
been proposed to characterize this class (see \cite{CHLZ2019, LZ2017}). When
the forward performance process is time-monotonic, Widder's theorem has been
applied for their characterization (\cite{Avanesyan_2018, Nadtochiy_2017}). In
addition, the corresponding discrete-time theory has been recently explored
extensively (\cite{Angoshtari, Angoshtari2020, Liang2023, Strub2021}). The
applications of forward performance processes have extended to various
domains, such as relative performance criteria (\cite{Anthropelos2022}),
general semimartingale models (\cite{Bo_2023}), insurance (\cite{Chong_2018}),
duality theory (\cite{Choulli_2007, Z2009}), behavioral finance
(\cite{He_2021}), regime-switching models (\cite{Hu2020}), intertemporal
consumption (\cite{Kallblad_2016}), model uncertainty (\cite{Kallblad_2018}),
mean-field games (\cite{Zariphopoulou_2024}), and maturity-independent risk
measures (\cite{ZZ2010}).

The only studies of the forward performance process with random endowment were
the ones for exponential criteria in the context of exponential forward
indifference prices (see, among others, \cite{LSZ,MZ0,MZ-Kurtz}). The most
general result can be found in Chong et al. \cite{CHLZ2019} who built on the
work of forward entropic risk measures, initially developed in \cite{ZZ2010},
and proposed a BSDE, coupled with an ergodic BSDE, representation of
exponential forward indifference prices. However, how this BSDE/ergodic BSDE
approach may be extended beyond the exponential case remains an open problem.

\subsection{Main contributions}

\textit{FBSDE for forward performance criteria with random endowment}%
\smallskip

Herein, we depart from both the specific class of exponential forward
performance processes and the SPDE approach for general forward performance
processes without random endowment. We take an entirely different approach by
working directly with the optimal policies and the corresponding optimal state
price density processes via the solutions of two FBSDEs, one referred to as
the\textit{ primal FBSDE (\ref{FBSDEsystem})} and the other as the
\textit{dual FBSDE (\ref{DFBSDEsystem})}. These two FBSDEs form a convex dual
relationship akin to the primal and the dual problems of the forward
performance process with random endowment. We demonstrate that both FBSDEs
offer necessary and sufficient conditions for the forward performance process
with random endowment and its convex dual.

We introduce the new methodology, establish necessary and sufficient
optimality conditions via the new FBSDEs and explore equivalences between the
primal and dual problems. \emph{To solve the associated FBSDEs, we employ the
recently developed decoupling field method (\cite{Fthesis, FI2020}), which
enables us to address general forward performance criteria under specific
conditions in a single stochastic factor model. Additionally, the exponential
case allows for the decoupling of the forward and backward equations of the
FBSDE, while the complete markets scenario provides explicit solutions.}

The derivation of the primal and dual FBSDEs draws inspiration from Horst et
al. \cite{HHIRZ}, who considered the FBSDE characterization of utility
maximization with random endowment but only from the primal perspective.
However, our work differs from \cite{HHIRZ} as the static utility functions
therein are being replaced by general forward performance processes. Their
stochastic nature combined with their inherent martingale optimality leads to
simpler and more interpretable terms in the obtained FBSDEs. Furthermore, we
explore the convex dual of the primal FBSDE, presenting a novel approach to
studying the convex duality in the forward setting. The results herein also
provide a new perspective for forward performance processes in the absence of
random endowment, yielding the self-generation property, first studied in
\cite{Z2009} for the exponential case.

For the reader's convenience, we next provide a roadmap of the FBSDE approach
and the underlying equivalences and dual relationships. For completeness, the
roadmap also includes information about the formal derivation of the
corresponding primal and dual backward SPDEs discussed in Section
\ref{Relation between FBSDE and SPDE}.

\begin{figure}[ptb]
\begin{center}
\begin{tikzpicture}[others/.style={draw=black, very thick,fill=white, circle,text width=2.5em,
text centered, minimum height=1em},
FBSDE/.style={draw=black,fill=white,very thick, text width=8em,
text centered, minimum height=3.5em,rounded corners=3ex},
SPDE/.style={draw=black,fill=white,very thick, text width=8em,
text centered, minimum height=2.5em,rounded corners=3ex},
relation/.style={draw=black, very thick,fill=white, circle,text width=6em,
text centered, minimum height=2.5em},
connections/.style={<->,draw=black,line width=1pt,shorten <=7pt,shorten >=7pt},
outpt/.style={->,draw=black,line width=1pt,shorten <=7pt,shorten >=7pt},node distance=2cm,auto
]
\node (pistar) at (0,0) [FBSDE]  {$\pi^{\ast,P}$ via $(X^{P},Y^{P},Z^{P})$};
\node (PFBSDE) at (-5,-2.25) [FBSDE]  {Primal FBSDE\\ $(X^{P},Y^{P},Z^{P})$};
\node (DFBSDE) at (-5,-11.25) [FBSDE]  {Dual FBSDE\\ $(D^{P},\tilde{Y}^{P},\tilde{Z}^{P})$};
\node (PSPDE) at (5,-2.25)[FBSDE]  {Primal \\ Backward SPDE};
\node (DSPDE) at (5,-11.25) [FBSDE]  {Dual \\ Backward SPDE};
\node (qstar) at (0,-4.5) [FBSDE]  {$q^{\ast,P}$ via $(X^{P},Y^{P},Z^{P})$};
\node (PP) at (-1.75,-6.75) [others]  {$u^{P}$};
\node (DP) at (1.75,-6.75) [others]  {$\tilde{u}^{P}$};
\node (Dpistar) at (0,-9) [FBSDE]  {$\pi^{\ast,P}$ via $(D^{P},\tilde{Y}^{P},\tilde{Z}^{P})$};
\node (Dqstar) at (0,-13.5) [FBSDE]  {$q^{\ast,P}$ via $(D^{P},\tilde{Y}^{P},\tilde{Z}^{P})$};
\draw[connections] (pistar) to [above,sloped] node {Thm \ref{FBSDE1}}(PFBSDE);
\draw[outpt] (PFBSDE) to [above,sloped] node {Thm \ref{DO1}}(qstar);
\draw[outpt] (PSPDE) to [above,sloped] node {Prop \ref{PB}}(qstar);
\draw[outpt] (PSPDE) to [above,sloped] node {Prop \ref{PB}}(pistar);
\draw[outpt] (PSPDE) to [above] node {Prop \ref{PB}}(PFBSDE);
\draw[connections] (PFBSDE) to [left] node {Prop \ref{R1}}(DFBSDE);
\draw[outpt] (DSPDE) to [above] node {Prop \ref{DPB}}(DFBSDE);
\draw[outpt] (qstar) to node {Thm \ref{DO1}}(0,-6.5);
\draw[outpt] (Dpistar) to [right] node {Thm \ref{DOP1}}(0,-7);
\draw[connections] (PP) to node {\textbf{Dual}}(DP);
\draw[connections] (PSPDE) to node {Eq (\ref{rSPDE})}(DSPDE);
\draw[outpt] (DFBSDE) to [above,sloped] node {Thm \ref{DOP1}}(Dpistar);
\draw[outpt] (DSPDE) to [above,sloped] node {Prop \ref{DPB}}(Dpistar);
\draw[connections] (Dqstar) to [below,sloped] node {Thm \ref{DFBSDE1}}(DFBSDE);
\draw[outpt] (DSPDE) to [above,sloped] node {Prop \ref{DPB}}(Dqstar);
\end{tikzpicture}
\par
\vspace{0.3in} {\small {Glossary of notation. \vspace{-0.08in}
\[
\linespread{0.6}\selectfont
\begin{array}
[c]{lll}%
P &  & \text{random endowment}\\
u^{P} &  & \text{value function of the primal problem}\\
\tilde{u}^{P} &  & \text{value function of the dual problem}\\
\pi^{\ast,P} &  & \text{optimal control of the primal problem}\\
q^{\ast,P} &  & \text{optimal control of the dual problem}%
\end{array}
\]
} }
\end{center}
\caption{The FBSDE approach road map of main results.}%
\label{road_map}%
\end{figure}

\newpage

\noindent \textit{Forward optimized certainty equivalent}\smallskip

Building on the results on forward performance processes with random
endowment, we introduce the novel concept of the \textit{forward optimized
certainty equivalent (forward OCE)}. Its static counterpart, introduced by
Ben-Tal and Teboulle in \cite{BT1986} and \cite{BT2007} (see also (\ref{OCE})
here), is a decision-making criterion based on expected utility theory and
represents the outcome of an optimal fund allocation where an investor can
choose to allocate a portion of the money from the random endowment to spend.
However, due to the static nature of the utility function, extending the
existing OCE notion to a dynamic setting while maintaining time consistency
poses conceptual challenges. Recent studies by Backhoff-Veraguas et al.
\cite{BT2020} and \cite{BRT2022} began with the convex dual representation of
OCE and generalized it to a dynamic version by introducing an additional
variable to ensure time consistency. While \cite{BT2020} and \cite{BRT2022}
generalize the static case of \cite{BT1986} and \cite{BT2007}, the dynamic OCE
therein is directly tied down to the pre-chosen horizon $T$ even though they
are dynamically time-consistent within $[0,T]$. This is a direct consequence
of the fact that the underlying utility function is by nature tied down to
$T$. Similar horizon dependence is also observed in the classical indifference prices.

One then poses the question if there is a way to construct an OCE-type
valuation mechanism that is \textit{horizon invariant}, or \textit{maturity
independent}, since we typically align the horizon with the longest maturity.
We prefer to use the terminology maturity independent to align it with the
studies in \cite{ZZ2010} and \cite{CHLZ2019}.

We stress that we do not expect that the OCE of a given claim would not depend
on the maturity of it, an obviously wrong result. Rather we seek a valuation
mechanism, like the classical conditional expectation, the exponential forward
indifference pricing or the forward entropic risk measure, that does not
depend \textit{per se} on a specific horizon or maturity. This is what we
develop herein within the OCE framework.

\textit{The key idea of the forward OCE notion:} We present a novel
perspective on the static OCE by interpreting it as the convex dual of the
value function for a utility maximization problem with the random endowment
within an auxiliary financial market whose driving noise is orthogonal to the
random endowment. We, then, replace the static utility function with the
forward performance process within this auxiliary financial market and, in
turn, define a forward OCE as the convex dual of the value function for the
forward performance process with the random endowment. This forward OCE
represents the result of an optimal dynamic allocation of funds, where an
investor faces the choice of saving a portion of their current wealth
alongside the random endowment to maximize the forward performance process.
The auxiliary variable introduced in \cite{BT2020} and \cite{BRT2022} has a
natural interpretation as a stochastic deflator, converting nominal values
into real values. The forward OCE mirrors the optimal balance between
maximizing the forward performance process from the saving alongside the
random endowment and reducing spending due to this saving choice.

Unlike the framework proposed in \cite{BT2020} and \cite{BRT2022}, the newly
defined forward OCE is both time consistent and maturity independent.
Furthermore, in the forward OCE framework, we introduce an auxiliary financial
market where the underlying asset can be utilized to partially hedge the risk
associated with the random endowment. In contrast, in the dynamic OCE
framework presented in \cite{BT2020} and \cite{BRT2022}, the driving noise in
this auxiliary market is assumed to be orthogonal to the random endowment,
resulting in null hedging effects. Finally, we apply the new definition and
the related results to the exponential case and we demonstrate that the
forward OCE aligns with the negative of the forward entropic risk measure
proposed in \cite{CHLZ2019}.\medskip

The paper is organized as follows. In section \ref{Preliminaries}, we
introduce the market and recall the notion of forward performance criteria and
their convex dual. In section \ref{FBSDE for FPP with RE}, we present the
primal and dual problems in the presence of random endowment, and the main
results on the new FBSDEs, \emph{while in section \ref{Example}, we solve the
FBSDEs for the complete market case, the exponential forward performance case,
and the more general Markovian case, respectively.} In section
\ref{Application}, we introduce the notion of forward OCE and investigate the
connection of exponential forward OCE with forward entropic risk measure. In
section \ref{Relation between FBSDE and SPDE}, we discuss the backward SPDE
approach and compare it with the FBSDE one proposed herein. We conclude in
section \ref{sec: conclusion}. For the reader's convenience, all proofs are
deferred to the appendix.

\section{Background results\label{Preliminaries}}

We introduce the incomplete market and review fundamental concepts and results
in forward performance criteria, recalling their definition and the ill-posed
SPDE that governs their evolution. We also review their convex dual and the
corresponding SPDE.

\subsection{The incomplete market\label{Market model}}

Let $W=(W^{1},W^{2})$ be a two-dimensional Brownian motion on a complete
probability space $(\Omega,\mathcal{F},\mathbb{P})$ with the Brownian
filtration $\mathbb{F}=(\mathcal{F}_{t})_{t\geq0}$. The market consists of one
riskless asset, taken to be numeraire and offering with zero interest rate,
and a stock whose (discounted) price solves%
\[
\frac{dS_{t}}{S_{t}}=\mu_{t}dt+\sigma_{t}dW_{t}^{1},\quad t\geq0,\quad S
_{0}=S>0.
\]
The coefficients $\mu \in \mathbb{R}$ and $\sigma>0$ are $\mathbb{F}%
$-progressively measurable processes and the market price of risk
$\theta:=\frac{\mu}{\sigma}$ is assumed to be uniformly bounded.

The market is considered incomplete because the dimension of noise is higher
than the number of stocks. Models with such incompleteness can be readily
extended to multidimensional cases\ for both the traded assets and the sources
of imcompleteness (see, for example, \cite{Angoshtari,CHLZ2019,HHIRZ,LZ2017}),
but for simplicity, we keep the dimensionality low.

Let process $\tilde{\pi}$ be the amount invested in the stock. Assume that it
is self-financing and let $\pi:=\tilde{\pi}\sigma.$ Then, the wealth SDE is
given by
\begin{equation}
dX_{t}^{\pi}=\pi_{t}\left(  \theta_{t}dt+dW_{t}^{1}\right)  ,\quad
t\geq0,\quad X_{0}^{\pi}=x\in \mathbb{R}. \label{wealth}%
\end{equation}

For each $T>0$, we denote by $\mathcal{A}_{[t,T]}$ the set of admissible
trading strategies in horizon $[t,T]$, $0\leq t\leq T$, defined as%
\[
\mathcal{A}_{[t,T]}:=\left \{  \pi \in \mathbb{L}_{BMO}^{2}\left[  t,T\right]
\text{, }\pi_{s}\in \mathbb{R}\text{, }t\leq s\leq T\right \}  ,
\]
where%
\begin{align*}
\mathbb{L}_{BMO}^{2}\left[  t,T\right]  :=\Bigg \{  &  \left(  \pi_{s}\right)
_{t\leq s\leq T}:\pi \text{ is }\mathbb{F}\text{-progressively measurable and
}\mathbb{E}\left[  \left.  \int_{\tau}^{T}\left \vert \pi_{r}\right \vert
^{2}dr\right \vert \mathcal{F}_{\tau}\right]  \leq C\text{, }\mathbb{P}%
\text{-a.s.,}\\
&  \text{for some constant }C>0\text{ and all }\mathbb{F}\text{-stopping times
}\tau \text{ valued in }[t,T]\Bigg \}.
\end{align*}
We denote by $\mathcal{A}:=\cup_{t\geq0}\mathcal{A}_{\left[  0,t\right]  }$
the set of admissible trading strategies for all $t\geq0$.


This market model admits multiple state price densities (or equivalent
martingale measures), which we parameterize via a process $q$ on $[t,T]$,
$0\leq t\leq T<\infty$. Specifically, these price densities are represented
as
\[
\emph{M_{s}^{t,q}}=\exp \left(  -\int_{t}^{s}\left(  \theta_{r}dW_{r}^{1}%
+q_{r}dW_{r}^{2}\right)  -\frac{1}{2}\int_{t}^{s}\left(  \left \vert \theta
_{r}\right \vert ^{2}+\left \vert q_{r}\right \vert ^{2}\right)  dr\right)
,\quad0\leq t\leq s\leq T,
\]
where $q\in \mathcal{Q}_{[t,T]}$ with%
\[
\mathcal{Q}_{[t,T]}:=\left \{  q\in \mathbb{L}_{BMO}^{2}\left[  t,T\right]
,q_{s}\in \mathbb{R}\text{, }t\leq s\leq T\right \}  ,\quad0\leq t\leq T.
\]
Therefore, for $q\in \mathcal{Q}_{[t,T]}$, the corresponding state price
density process satisfies the SDE%
\begin{equation}
dM_{s}^{t,q}=-M_{s}^{t,q}\left(  \theta_{s}dW_{s}^{1}+q_{s}dW_{s}^{2}\right)
,\quad0\leq t\leq s\leq T, \label{price_density}%
\end{equation}
with initial condition $M_{t}^{t,q}=1,$ $0\leq t\leq T$. Moreover,
$M_{s}^{t,q}$ is a true $\mathbb{F}$-martingale since $\theta$ is uniformly
bounded and $q\in \mathcal{Q}_{[t,T]}$ (see \cite{KBOOK}). We also have the
following property, which implies that the wealth process is a true
$\mathbb{F}$-martingale under each probability measure transformed by
$q\in \mathcal{Q}_{[t,T]}$.

\begin{lemma}
\label{XZmart}For each $T>0$, $0\leq t\leq T$, $\pi \in \mathcal{A}_{[t,T]}$ and
$q\in \mathcal{Q}_{[t,T]}$, the process
\[
M_{s}^{t,q}\cdot \int_{t}^{s}\pi_{r}\left(  \theta_{r}dr+dW_{r}^{1}\right)
,\quad t\leq s\leq T,
\]
is a true $\mathbb{F}$-martingale on $[t,T]$.
\end{lemma}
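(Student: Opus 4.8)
The plan is to apply Itô's product rule to the two factors and then upgrade the resulting continuous local martingale to a true one using the BMO structure of the coefficients. Write $M_s := M_s^{t,q}$ and $N_s := \int_t^s \pi_r(\theta_r\,dr + dW_r^1)$, so that $N_t = 0$ and, by (\ref{price_density}) and (\ref{wealth}),
\[
dM_s = -M_s(\theta_s\,dW_s^1 + q_s\,dW_s^2), \qquad dN_s = \pi_s\theta_s\,ds + \pi_s\,dW_s^1.
\]
First I would compute $d(M_sN_s) = M_s\,dN_s + N_s\,dM_s + d\langle M,N\rangle_s$. The only contribution to the covariation comes from the common $dW^1$-component, giving $d\langle M,N\rangle_s = -M_s\theta_s\pi_s\,ds$, which exactly cancels the finite-variation term $M_s\pi_s\theta_s\,ds$ arising from $M_s\,dN_s$. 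Hence the drift vanishes and
\[
d(M_sN_s) = M_s(\pi_s - \theta_sN_s)\,dW_s^1 - M_sN_sq_s\,dW_s^2,
\]
so $M_\cdot N_\cdot$ is a continuous local martingale on $[t,T]$ started at $M_tN_t = 0$.

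It then remains to show that this local martingale is in fact a true martingale, which is where the real work lies. I would establish the stronger class-(D) bound $\mathbb{E}\big[\sup_{t\le s\le T}|M_sN_s|\big] < \infty$, since a continuous local martingale dominated by an integrable random variable is automatically a uniformly integrable, hence true, martingale. To control the two factors separately, I would use Hölder's inequality with a pair of conjugate exponents $p,p'$ chosen as follows. The density $M$ is the stochastic exponential of the martingale $L := -\int(\theta\,dW^1 + q\,dW^2)$, which is BMO because $\theta$ is uniformly bounded and $q\in\mathcal{Q}_{[t,T]}\subset\mathbb{L}^2_{BMO}[t,T]$. By the reverse Hölder inequality for stochastic exponentials of BMO martingales (see \cite{KBOOK}), $M$ therefore satisfies $R_p$ for some $p>1$; in particular $\mathbb{E}[M_T^p]<\infty$, and Doob's $L^p$-inequality yields $\mathbb{E}[\sup_s M_s^p]<\infty$.

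For the factor $N$, I would exploit that $\pi\in\mathbb{L}^2_{BMO}[t,T]$. The martingale part $\int_t^\cdot\pi_r\,dW_r^1$ is a BMO martingale, and BMO martingales belong to $L^r$ for every finite $r$ (via the energy and John--Nirenberg estimates); the finite-variation part $\int_t^\cdot\pi_r\theta_r\,dr$ is likewise bounded in every $L^r$, using the boundedness of $\theta$ together with these same estimates. Consequently $\sup_s|N_s|\in L^{p'}$ for the conjugate exponent $p'$, and indeed for all orders. Combining the two bounds through Hölder's inequality gives $\mathbb{E}[\sup_s|M_sN_s|]\le \|\sup_s M_s\|_{L^p}\,\|\sup_s|N_s|\|_{L^{p'}}<\infty$, which completes the argument.

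I expect the main obstacle to be the integrability of the density $M$: the reverse Hölder inequality only guarantees some $p>1$, possibly very close to $1$, so the argument must be arranged so that the potentially large conjugate exponent $p'$ falls on $N$, whose BMO origin supplies moments of every order. Verifying that the drift term $\int\pi_r\theta_r\,dr$ inherits these high moments, rather than just the stochastic integral, is the one point that requires care, but it follows routinely from the boundedness of $\theta$ and the BMO energy estimates.
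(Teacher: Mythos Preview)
Your proposal is correct and follows essentially the same route as the paper: both compute the It\^o product to exhibit the process as a continuous local martingale, then upgrade it to a true martingale by combining the reverse H\"older inequality for the BMO stochastic exponential $M$ (giving $\sup_s M_s\in L^{p_0}$ for some $p_0>1$) with the BMO energy inequality (giving $\sup_s|N_s|\in L^r$ for all $r$), and finish with H\"older. The only organizational difference is that you bound $\sup_s|M_sN_s|$ directly via H\"older on the two factors, whereas the paper applies BDG to each of the two stochastic-integral terms in the local-martingale decomposition separately before invoking the same integrability estimates; your packaging is marginally cleaner but the substance is identical.
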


\subsection{Forward performance criteria and their convex dual
representation\label{FPP and its convex conjugate}}

We recall the definition of forward performance processes and their SPDE
representation introduced by Musiela and Zariphopoulou in \cite{MZ0, MZ-Kurtz}
and established in \cite{MZ2010}, respectively. We also recall their convex
dual representation introduced by El Karoui and Mrad in \cite{EM2014}.

For $t\geq0$, let $L^{0}(\mathcal{F}_{t})$ denote the space of $\mathcal{F}%
_{t}$-measurable random variables. Similarly,\ let $L^{p}(\mathcal{F}_{t})$,
$p\geq1$, be the space of $L^{p}$-integrable $\mathcal{F}_{t}$-measurable
random variables and $L^{\infty}(\mathcal{F}_{t})$ the space of bounded
$\mathcal{F}_{t}$-measurable random variables.

\begin{definition}
\label{defFPP}Let $\mathbb{D}=[0,\infty)\times \mathbb{R}$. A process $U(t,x)$,
$(t,x)\in \mathbb{D}$, is called a forward performance process if

\begin{enumerate}
\item[(i)] for each $x\in \mathbb{R}$, $U(t,x)$ is $\mathbb{F}$-progressively measurable,

\item[(ii)] for each $t\geq0$, the mapping $x\mapsto U(t,x)$ is strictly
increasing and strictly concave,

\item[(iii)] \emph{for any bounded $\mathbb{F}$-stopping time $\tau$ and
$0\leq t\leq \tau$,
\begin{equation}
U\left(  t,\xi \right)  =\operatorname*{esssup}_{\pi \in \mathcal{A}%
_{[t,\emph{\tau}]}}\mathbb{E}\left[  \left.  U\left(  \emph{\tau},\xi+\int
_{t}^{\emph{\tau}}\pi_{r}\left(  \theta_{r}dr+dW_{r}^{1}\right)  \right)
\right \vert \mathcal{F}_{t}\right]  \text{ for any}\  \xi \in \cap_{p\geq1}%
L^{p}(\mathcal{F}_{t}).\label{FPP}%
\end{equation}
}
\end{enumerate}
\end{definition}

\emph{Here we choose $\xi \in \cap_{p\geq1}L^{p}(\mathcal{F}_{t})$ to ensure
that all its moments are well-defined, a property needed for the related BMO
processes and admissible trading strategies we will be using herein.} We will
frequently refer to (\ref{FPP}) as the \textit{self-generation} property (see
\cite{Z2009}).

In \cite{MZ2010}, the forward performance process $U$ was shown to be
associated with the so-called forward performance SPDE
\begin{equation}
dU\left(  t,x\right)  =\beta \left(  t,x\right)  dt+\alpha^{\top}\left(
t,x\right)  dW_{t},\quad(t,x)\in \mathbb{D}, \label{FSPDE}%
\end{equation}
with the drift $\beta$ given by
\[
\beta \left(  t,x\right)  =\frac{1}{2}\frac{\left \vert U_{x}\left(  t,x\right)
\theta_{t}+\alpha_{x}^{1}\left(  t,x\right)  \right \vert ^{2}}{U_{xx}\left(
t,x\right)  },
\]
and the volatility $\alpha=(\alpha^{1},\alpha^{2})$ being an $\mathbb{R}^{2}%
$-valued $\mathbb{F}$-progressively measurable process. \emph{We recall that,
contrary to the classical framework of maximal expected utility where the
volatility is endogenously determined from the dynamics in the It\^{o}-Ventzel
expansion of the value function process, the forward volatility coefficient
$\alpha(t,x)$ in the forward approach is an independent modeling input.}

The optimal control process is, in turn, expressed in terms of the solution to
the above SPDE. Specifically, if there exists a strong solution to
\begin{equation}
dX_{t}^{\ast}=-\frac{U_{x}\left(  t,X_{t}^{\ast}\right)  \theta_{t}+\alpha
_{x}^{1}\left(  t,X_{t}^{\ast}\right)  }{U_{xx}\left(  t,X_{t}^{\ast}\right)
}\left(  \theta_{t}dt+dW_{t}^{1}\right)  ,\quad t\geq0, \label{SDE}%
\end{equation}
and the feedback control
\begin{equation}
\pi_{t}^{\ast}:=-\frac{U_{x}\left(  t,X_{t}^{\ast}\right)  \theta_{t}%
+\alpha_{x}^{1}\left(  t,X_{t}^{\ast}\right)  }{U_{xx}\left(  t,X_{t}^{\ast
}\right)  },\quad t\geq0, \label{FPPpistar}%
\end{equation}
is in the admissible set $\mathcal{A}$, then $\pi^{\ast}$ is optimal.

Throughout, we assume that $U(t,x)$ solves the SPDE (\ref{FSPDE}) for a
volatility process $\alpha(t,x)$, and SDE
(\ref{SDE}) admits a strong solution $X^{\ast}$. Furthermore, we will assume
the following conditions hold.

\begin{condition}
\label{Assump}

\begin{enumerate}
\item[(i)] \emph{The progressive random field $U(t,x)$, $(t,x)\in
\mathbb{D}$, is a $\mathcal{K}_{loc}^{2,\delta}$-semimartingale for
some $\delta \in(0,1]$ and
$U\in \mathcal{C}^{3}$ (see Appendix
\ref{I-V}).}

\item[(ii)] There exist positive constants $C_{l},C_{u}$ and $C_{\alpha}$ such
that, a.s.,
\[
0<C_{l}\leq-\frac{U_{x}\left(  t,x\right)  }{U_{xx}\left(  t,x\right)  }\leq
C_{u}\quad \text{and}\quad \left \vert \frac{\alpha_{x}^{i}\left(  t,x\right)
}{U_{xx}\left(  t,x\right)  }\right \vert \leq C_{\alpha},\text{ }%
(t,x)\in \mathbb{D},\quad i=1,2.
\]

\end{enumerate}
\end{condition}

\begin{remark}
\emph{In Appendix \ref{I-V}, we provide the differential rule and
It\^{o}-Ventzel formula as well as the spaces of regular random fields.
According to Theorem \ref{DI}, Assumption \ref{Assump} (i) guarantees that the
differential rule for $U$ holds, namely,
\begin{equation}
dU_{x}\left(  t,x\right)  =\beta_{x}\left(  t,x\right)  dt+\alpha_{x}^{\top
}\left(  t,x\right)  dW_{t},\quad \emph{(t,x)\in \mathbb{D}},\label{DU}%
\end{equation}
where%
\begin{align}
\beta_{x}\left(  t,x\right)  =  &  \left(  U_{x}\left(  t,x\right)  \theta
_{t}+\alpha_{x}^{1}\left(  t,x\right)  \right)  \theta_{t}+\frac{\left(
U_{x}\left(  t,x\right)  \theta_{t}+\alpha_{x}^{1}\left(  t,x\right)  \right)
\alpha_{xx}^{1}\left(  t,x\right)  }{U_{xx}\left(  t,x\right)  }\label{DB}\\
&  -\frac{1}{2}\frac{\left \vert U_{x}\left(  t,x\right)  \theta_{t}+\alpha
_{x}^{1}\left(  t,x\right)  \right \vert ^{2}U_{xxx}\left(  t,x\right)
}{\left \vert U_{xx}\left(  t,x\right)  \right \vert ^{2}}.\nonumber
\end{align}
Furthermore, the It\^{o}-Ventzel formula can be applied to both $U(t,X_{t}%
^{\pi})$ and $U_{x}(t,X_{t}^{\pi})$.}
\end{remark}

\begin{remark}
In Assumption \ref{Assump} (ii), $C_{u}$ provides an upper bound for the risk
tolerance $-U_{x}/U_{xx}$, $1/C_{l}$ provides an upper bound for the risk
aversion $-U_{xx}/U_{x}$. \emph{We emphasize that Assumption \ref{Assump} (ii)
is necessary to ensure that the optimal control processes remain within the
admissible set. However, this condition can be further relaxed, for instance,
by allowing the optimal control processes to belong to $\mathbb{L}_{BMO}%
^{2}[t,T]$.}
\end{remark}

Next, we recall the convex dual of the process $U(t,x)$, introduced in
\cite{EM2014}. To this end, for a given forward performance process $U(t,x)$,
its convex conjugate $\tilde{U}(t,z)$ is defined as the Fenchel-Legendre
transform of $-U(t,-x)$, namely,
\begin{equation}
\tilde{U}\left(  t,z\right)  :=\sup_{x\in \mathbb{R}}\left(  U\left(
t,x\right)  -xz\right)  ,\quad(t,z)\in \mathbb{D}^{+}\text{,} \label{dr}%
\end{equation}
where $\mathbb{D}^{+}=[0,\infty)\times \mathbb{R}^{+}$.

One may readily verify that $\tilde{U}(t,z)$ satisfies:

\begin{enumerate}
\item[(i)] For each $t\geq0$, the mapping $z\mapsto \tilde{U}(t,z)$ is strictly
convex and $\tilde{U}\in \mathcal{C}^{3}$.

\item[(ii)] The inverse of the marginal utility $U_{x}$ is the negative of the
marginal of the conjugate utility $\tilde{U}_{z}$:
\begin{equation}%
\begin{array}
[c]{ll}%
-\tilde{U}_{z}\left(  t,U_{x}\left(  t,x\right)  \right)  =x, & (t,x)\in
\mathbb{D}\text{,}\\
U_{x}\left(  t,-\tilde{U}_{z}\left(  t,z\right)  \right)  =z,\text{\  \ } &
(t,z)\in \mathbb{D}^{+}.
\end{array}
\label{dr1}%
\end{equation}

\item[(iii)] The bidual relation
\begin{equation}
U\left(  t,x\right)  =\inf_{z>0}\left(  \tilde{U}\left(  t,z\right)
+xz\right)  ,\quad(t,x)\in \mathbb{D}\text{,} \label{bidr}%
\end{equation}
holds.

\item[(iv)] For $(t,x)\in \mathbb{D}$ and $(t,z)\in \mathbb{D}^{+}$,%
\begin{equation}
U\left(  t,x\right)  =\tilde{U}\left(  t,U_{x}\left(  t,x\right)  \right)
+xU_{x}\left(  t,x\right)  ,\quad \tilde{U}\left(  t,z\right)  =U\left(
t,-\tilde{U}_{z}\left(  t,z\right)  \right)  +z\tilde{U}_{z}\left(
t,z\right)  , \label{dr0}%
\end{equation}%
\begin{equation}
\tilde{U}_{zz}\left(  t,U_{x}\left(  t,x\right)  \right)  =-\frac{1}%
{U_{xx}\left(  t,x\right)  },\quad U_{xx}\left(  t,-\tilde{U}_{z}\left(
t,z\right)  \right)  =-\frac{1}{\tilde{U}_{zz}\left(  t,z\right)  },
\label{dr5}%
\end{equation}
and%
\begin{equation}
\tilde{U}_{zzz}\left(  t,U_{x}\left(  t,x\right)  \right)  =\frac
{U_{xxx}\left(  t,x\right)  }{\left(  U_{xx}\left(  t,x\right)  \right)  ^{3}%
},\quad U_{xxx}\left(  t,-\tilde{U}_{z}\left(  t,z\right)  \right)
=-\frac{\tilde{U}_{zzz}\left(  t,z\right)  }{\left(  \tilde{U}_{zz}\left(
t,z\right)  \right)  ^{3}}. \label{dr5'}%
\end{equation}

\end{enumerate}

As the analysis herein shows, the process $\tilde{U}_{z}(t,z)$ is a key
quantity in establishing the FBSDE representations in Section
\ref{FBSDE for FPP with RE}. It turns out that stronger regularity assumptions
are needed for $\tilde{U}$ and $\tilde{U}_{z}$, introduced below; see
\cite{El_Karoui_2018,EM2014} for similar assumptions.

\begin{condition}
\emph{The progressive random field $\tilde{U}(t,z)$, $(t,z)\in
\mathbb{D}^{+}$, is a $\mathcal{K}_{loc}^{2,\delta}$-semimartingale for some $\delta \in (0,1]$.}
\end{condition}

Recalling Theorem 1.5 in \cite{EM2014}, we obtain the dynamics of $\tilde
{U}_{z}$,%
\begin{equation}
d\tilde{U}_{z}\left(  t,z\right)  =\tilde{\beta}_{z}\left(  t,z\right)
dt+\tilde{\alpha}_{z}^{\top}\left(  t,z\right)  dW_{t},\quad(t,z)\in
\mathbb{D}^{+}, \label{DUSPDE}%
\end{equation}
where%
\begin{align}
\tilde{\beta}_{z}\left(  t,z\right)  =  &  \frac{\beta_{x}\left(  t,-\tilde
{U}_{z}\left(  t,z\right)  \right)  }{U_{xx}\left(  t,-\tilde{U}_{z}\left(
t,z\right)  \right)  }+\frac{1}{2}\frac{U_{xxx}\left(  t,-\tilde{U}_{z}\left(
t,z\right)  \right)  \left \vert \alpha_{x}\left(  t,-\tilde{U}_{z}\left(
t,z\right)  \right)  \right \vert ^{2}}{\left(  U_{xx}\left(  t,-\tilde{U}%
_{z}\left(  t,z\right)  \right)  \right)  ^{3}}\nonumber \\
&  -\frac{\alpha_{x}^{\top}\left(  t,-\tilde{U}_{z}\left(  t,z\right)
\right)  \alpha_{xx}\left(  t,-\tilde{U}_{z}\left(  t,z\right)  \right)
}{\left \vert U_{xx}\left(  t,-\tilde{U}_{z}\left(  t,z\right)  \right)
\right \vert ^{2}}\nonumber \\
=  &  -\tilde{U}_{zz}\left(  t,z\right)  z\left \vert \theta_{t}\right \vert
^{2}+\tilde{\alpha}_{zz}^{1}\left(  t,z\right)  z\theta_{t}+\tilde{\alpha}%
_{z}^{1}\left(  t,z\right)  \theta_{t}-\frac{1}{2}\tilde{U}_{zzz}\left(
t,z\right)  \left \vert z\theta_{t}\right \vert ^{2}\nonumber \\
&  -\frac{1}{2}\frac{\tilde{U}_{zzz}\left(  t,z\right)  |\tilde{\alpha}%
_{z}^{2}\left(  t,z\right)  |^{2}}{|\tilde{U}_{zz}\left(  t,z\right)  |^{2}%
}+\frac{\tilde{\alpha}_{z}^{2}\left(  t,z\right)  \tilde{\alpha}_{zz}%
^{2}\left(  t,z\right)  }{\tilde{U}_{zz}\left(  t,z\right)  }, \label{B2}%
\end{align}
and%
\begin{equation}
\tilde{\alpha}_{z}\left(  t,z\right)  =\frac{\alpha_{x}\left(  t,-\tilde
{U}_{z}\left(  t,z\right)  \right)  }{U_{xx}\left(  t,-\tilde{U}_{z}\left(
t,z\right)  \right)  }=-\tilde{U}_{zz}\left(  t,z\right)  \alpha_{x}\left(
t,-\tilde{U}_{z}\left(  t,z\right)  \right)  . \label{DDA}%
\end{equation}

We conclude by stressing that it is not immediately clear whether the convex
conjugate $\tilde{U}(t,z)$ satisfies a self-generation type property like
(\ref{FPP}) for the primal forward process $U(t,x)$. This question was studied
in \cite{Z2009}. Therein, the author proved the equivalence between the
self-generation properties of $U(t,x)$ and $\tilde{U}(t,z)$ and, in turn,
provided a dual characterization of\ the forward criterion. In this paper, the
FBSDE approach we develop can be also used to establish similar results for
forward processes, see Proposition \ref{SG}.

\section{\emph{Forward performance criteria with random endowment: an FBSDE
approach}\label{FBSDE for FPP with RE}}

We start by choosing an arbitrary time $T>0$ at which we introduce a random
endowment, represented as a random variable $P\in L^{\infty}(\mathcal{F}_{T}%
)$. We then introduce the \textit{primal forward problem with random
endowment}
\begin{equation}
u^{P}\left(  t,\xi;T\right)  =\operatorname*{esssup}_{\pi \in \mathcal{A}%
_{[t,T]}}\mathbb{E}\left[  \left.  U\left(  T,\xi+\int_{t}^{T}\pi_{r}\left(
\theta_{r}dr+dW_{r}^{1}\right)  +P\right)  \right \vert \mathcal{F}_{t}\right]
,\quad0\leq t\leq T\text{, }\emph{\xi \in \cap_{p\geq1}L^{p}(\mathcal{F}_{t})}.
\label{P}%
\end{equation}

The associated \textit{dual forward problem with random endowment} is then
given by
\begin{equation}
\tilde{u}^{P}\left(  t,\eta;T\right)  =\operatorname*{essinf}_{q\in
\mathcal{Q}_{[t,T]}}\mathbb{E}\left[  \left.  \tilde{U}\left(  T,\eta
M_{T}^{t,q}\right)  +\eta M_{T}^{t,q}P\right \vert \mathcal{F}_{t}\right]
,\quad0\leq t\leq T\text{, }\eta \in L^{0,+}\left(  \mathcal{F}_{t}\right)  ,
\label{CP}%
\end{equation}
where $\tilde{U}$ is the convex conjugate of $U$ as (\ref{dr}), and
$L^{0,+}(\mathcal{F}_{t})$ denotes the space of positive $\mathcal{F}_{t}%
$-measurable random variables.

A conventional approach to solving the primal and dual problems (\ref{P}) and
(\ref{CP}) is to first characterize their value functions, which are, in
general, expected to solve related backward SPDEs due to the underlying
non-Markovian model. The corresponding optimal controls for (\ref{P}) and
(\ref{CP}) are then expressed in terms of the solutions to these SPDEs.
However, as we demonstrate in Section \ref{Relation between FBSDE and SPDE},
their derivation is formal and their solvability is far from clear.

In this section, we take a different approach by directly characterizing the
optimal control processes for both the primal and dual problems using an FBSDE
approach. By substituting the optimal controls into (\ref{P}) and (\ref{CP}),
we can in turn obtain the value functions. Our main contributions are the
derivation of two FBSDEs: the primal FBSDE (\ref{FBSDEsystem}) and the dual
FBSDE (\ref{DFBSDEsystem}), both of which provide a means to characterize the
solutions for both the primal and dual problems. These two FBSDEs are closely
connected, forming a convex dual relationship, similar to what (\ref{P}) and
(\ref{CP}) represent.

\subsection{Optimal policy characterization using the primal
FBSDE\label{Primal FBSDE}}

\emph{The first main result is the derivation of both necessary and sufficient
conditions for the optimal control process for the primal problem (\ref{P})
through the FBSDE below, which we will refer as the \textit{primal FBSDE}.}
Specifically, consider the FBSDE, for $0\leq t\leq s\leq T$,
\begin{equation}
\left \{
\begin{array}
[c]{l}%
X_{s}^{P}=\xi-%
{\displaystyle \int_{t}^{s}}
\left(  \dfrac{U_{x}\left(  r,X_{r}^{P}+Y_{r}^{P}\right)  \theta_{r}%
+\alpha_{x}^{1}\left(  r,X_{r}^{P}+Y_{r}^{P}\right)  }{U_{xx}\left(
r,X_{r}^{P}+Y_{r}^{P}\right)  }+Z_{r}^{P,1}\right)  \left(  \theta
_{r}dr+dW_{r}^{1}\right)  ,\\
Y_{s}^{P}=P+%
{\displaystyle \int_{s}^{T}}
\left(  -Z_{r}^{P,1}\theta_{r}+\dfrac{1}{2}\dfrac{U_{xxx}\left(  r,X_{r}%
^{P}+Y_{r}^{P}\right)  }{U_{xx}\left(  r,X_{r}^{P}+Y_{r}^{P}\right)
}\left \vert Z_{r}^{P,2}\right \vert ^{2}+\dfrac{\alpha_{xx}^{2}\left(
r,X_{r}^{P}+Y_{r}^{P}\right)  }{U_{xx}\left(  r,X_{r}^{P}+Y_{r}^{P}\right)
}Z_{r}^{P,2}\right)  dr\\
\text{ \  \  \  \  \  \  \ }-%
{\displaystyle \int_{s}^{T}}
\left(  Z_{r}^{P}\right)  ^{\top}dW_{r},
\end{array}
\right.  \label{FBSDEsystem}%
\end{equation}
with initial-terminal condition $(\xi,P)$ for $\xi \in \cap_{p\geq1}%
L^{p}(\mathcal{F}_{t})$ and $P\in L^{\infty}(\mathcal{F}_{T})$. \emph{Here we
use the superscript $P$ to highlight the dependence of the random endowment.}

\begin{theorem}
\label{FBSDE1}Let $T>0$, $P\in L^{\infty}(\mathcal{F}_{T})$ and $\xi \in
\cap_{p\geq1}L^{p}(\mathcal{F}_{t})$, $0\leq t\leq T$.

\begin{enumerate}
\item[(i)] Assume $\pi^{\ast,P}\in \mathcal{A}_{[t,T]}$ is an optimal control
process of the primal problem (\ref{P}) and denote
\begin{equation}
X_{s}^{\ast,P}:=\xi+\int_{t}^{s}\pi_{r}^{\ast,P}(\theta_{r}dr+dW_{r}%
^{1}),\quad t\leq s\leq T. \label{WS}%
\end{equation}
Furthermore, assume that%
\begin{equation}
\mathbb{E}\mathcal{[}|U_{x}(T,X_{T}^{\ast,P}+P)|^{2}]<\infty, \label{pa}%
\end{equation}
and
\begin{equation}%
\begin{array}
[c]{c}%
\frac{1}{\varepsilon}\left \vert U\left(  T,\xi+%
{\textstyle \int_{t}^{T}}
\left(  \pi_{s}^{\ast,P}+\varepsilon \pi_{s}\right)  \left(  \theta
_{s}ds+dW_{s}^{1}\right)  +P\right)  -U\left(  T,X_{T}^{\ast,P}+P\right)
\right \vert \\
\text{is uniformly integrable in }\varepsilon \in(0,1)\text{ for any bounded
}\pi \in \mathcal{A}_{[t,T]}\text{,}%
\end{array}
\label{pb}%
\end{equation}
Then, there exists a triplet $(X^{P},Y^{P},Z^{P})$ satisfying FBSDE
(\ref{FBSDEsystem}). Furthermore, $\pi^{\ast,P}$ can be represented as
\begin{equation}
\pi_{s}^{\ast,P}=-\frac{U_{x}\left(  s,X_{s}^{P}+Y_{s}^{P}\right)  \theta
_{s}+\alpha_{x}^{1}\left(  s,X_{s}^{P}+Y_{s}^{P}\right)  }{U_{xx}\left(
s,X_{s}^{P}+Y_{s}^{P}\right)  }-Z_{s}^{P,1},\quad t\leq s\leq T\text{,}
\label{Opi}%
\end{equation}
and $X_{s}^{\ast,P}=X_{s}^{P}$, $t\leq s\leq T$.

\item[(ii)] Suppose $(X^{P},Y^{P},Z^{P})$ is a solution to FBSDE
(\ref{FBSDEsystem}) satisfying $Z^{P,i}\in \mathbb{L}_{BMO}^{2}[t,T]$, $i=1,2$.
Then, the control process $\pi^{\ast,P}$ defined by (\ref{Opi}) is optimal for
the primal problem (\ref{P}) and $X_{s}^{\ast,P}=X_{s}^{P}$, $t\leq s\leq T$.
Specifically, $\pi^{\ast,P}\in \mathcal{A}_{[t,T]}$ and
\begin{equation}
u^{P}\left(  t,\xi;T\right)  =\mathbb{E}\left[  \left.  U\left(  T,\xi
+\int_{t}^{T}\pi_{r}^{\ast,P}\left(  \theta_{r}dr+dW_{r}^{1}\right)
+P\right)  \right \vert \mathcal{F}_{t}\right]  =\mathbb{E}\left[  \left.
U\left(  T,X_{T}^{P}+P\right)  \right \vert \mathcal{F}_{t}\right]  .
\label{sss}%
\end{equation}

\end{enumerate}
\end{theorem}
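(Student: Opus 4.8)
The plan is to prove the two implications separately, using the forward performance SPDE~(\ref{FSPDE}), the differential rule~(\ref{DU}) for $U_x$, and the It\^o--Ventzel formula of Appendix~\ref{I-V} as the analytic engine, with Lemma~\ref{XZmart} providing the bridge to the state-price-density structure. For part~(i), I would begin from a variational (first-order) condition. Since $\pi^{\ast,P}$ is optimal and $\pi^{\ast,P}+\varepsilon\pi$ remains admissible for bounded $\pi\in\mathcal{A}_{[t,T]}$ and small $\varepsilon$, the map $\varepsilon\mapsto\mathbb{E}[U(T,X_T^{\ast,P}+\varepsilon\int_t^T\pi_r(\theta_r dr+dW_r^1)+P)\,|\,\mathcal{F}_t]$ has an interior maximum at $\varepsilon=0$. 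Differentiating under the conditional expectation---precisely what the uniform integrability in~(\ref{pb}) together with the square integrability~(\ref{pa}) are designed to license---gives the Euler equation $\mathbb{E}[U_x(T,X_T^{\ast,P}+P)\int_t^T\pi_r(\theta_r dr+dW_r^1)\,|\,\mathcal{F}_t]=0$ for every bounded $\pi$. I would then introduce the strictly positive martingale $\zeta_s:=\mathbb{E}[U_x(T,X_T^{\ast,P}+P)\,|\,\mathcal{F}_s]$ and write its Brownian representation as $d\zeta_s=-\zeta_s(\lambda_s^1 dW_s^1+\lambda_s^2 dW_s^2)$. Computing the drift of $\zeta_s\int_t^s\pi_r(\theta_r dr+dW_r^1)$ and invoking the Euler equation for arbitrary bounded $\pi$ forces $\lambda^1\equiv\theta$, so that $\zeta_s=\zeta_t M_s^{t,q^\ast}$ with $q^\ast:=\lambda^2$; that is, $\zeta$ is a state price density up to its $\mathcal{F}_t$-value.

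With $\zeta$ in hand, I would define $Y^P$ implicitly through $U_x(s,X_s^{\ast,P}+Y_s^P)=\zeta_s$, equivalently $X_s^{\ast,P}+Y_s^P=-\tilde{U}_z(s,\zeta_s)$ by~(\ref{dr1}); strict monotonicity of $U_x$ together with $\zeta_T=U_x(T,X_T^{\ast,P}+P)$ yields the terminal value $Y_T^P=P$. Applying It\^o--Ventzel to $U_x(s,X_s^{\ast,P}+Y_s^P)$ with the dynamics~(\ref{DU}) and matching the $dW^1$- and $dW^2$-diffusion coefficients against those of $d\zeta_s=-\zeta_s(\theta_s dW_s^1+q_s^\ast dW_s^2)$ then identifies $Z^{P,1}$ and $Z^{P,2}$: the $dW^1$-matching is exactly the feedback representation~(\ref{Opi}) (whence $X^P=X^{\ast,P}$), while the vanishing of the drift of the martingale $\zeta$ reproduces, after substituting the explicit $\beta_x$ from~(\ref{DB}), the backward drift of~(\ref{FBSDEsystem}). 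This completes part~(i).

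For part~(ii), given a solution $(X^P,Y^P,Z^P)$ with $Z^{P,i}\in\mathbb{L}_{BMO}^{2}[t,T]$, admissibility of $\pi^{\ast,P}$ defined by~(\ref{Opi}) is immediate from Assumption~\ref{Assump}(ii), which bounds $U_x/U_{xx}$ and $\alpha_x^1/U_{xx}$, together with the boundedness of $\theta$ and $Z^{P,1}\in\mathbb{L}_{BMO}^{2}[t,T]$. Writing $B_s^\ast:=X_s^P+Y_s^P$, a forward It\^o--Ventzel computation---in which the backward drift of~(\ref{FBSDEsystem}) is exactly what cancels the drift---shows that $U(s,B_s^\ast)$ is a local martingale and that $U_x(s,B_s^\ast)=U_x(t,B_t^\ast)M_s^{t,q^\ast}$, where $q_s^\ast:=-(\alpha_x^2(s,B_s^\ast)+U_{xx}(s,B_s^\ast)Z_s^{P,2})/U_x(s,B_s^\ast)$ belongs to $\mathcal{Q}_{[t,T]}$ by the bounds in Assumption~\ref{Assump}(ii) and $Z^{P,2}\in\mathbb{L}_{BMO}^{2}[t,T]$; the BMO estimates then promote these local martingales to true martingales. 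To establish optimality without confronting the mismatch between the argument $X_s^\pi+Y_s^P$ of a competing strategy and $B_s^\ast$, I would argue by concavity: for any admissible $\pi$,
\[
U(T,X_T^\pi+P)\le U(T,B_T^\ast)+U_x(T,B_T^\ast)\big(X_T^\pi-X_T^{\ast,P}\big).
\]
Taking conditional expectations, the first term contributes $\mathbb{E}[U(T,B_T^\ast)\,|\,\mathcal{F}_t]=U(t,B_t^\ast)$ by the martingale property, while the second vanishes, since $U_x(T,B_T^\ast)=U_x(t,B_t^\ast)M_T^{t,q^\ast}$ and $X_T^\pi-X_T^{\ast,P}=\int_t^T(\pi_r-\pi_r^{\ast,P})(\theta_r dr+dW_r^1)$, so that Lemma~\ref{XZmart} makes $M_s^{t,q^\ast}\int_t^s(\pi_r-\pi_r^{\ast,P})(\theta_r dr+dW_r^1)$ a martingale with zero initial value. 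As $B_t^\ast=\xi+Y_t^P$ is independent of $\pi$ and the choice $\pi=\pi^{\ast,P}$ attains equality, this gives both optimality and the value identity~(\ref{sss}).

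I expect the main obstacle to lie in part~(i): the rigorous interchange of differentiation and conditional expectation in this non-Markovian, random-field setting---exactly where hypotheses~(\ref{pa}) and~(\ref{pb}) are used---and the verification that each It\^o--Ventzel application is licensed by the $\mathcal{K}_{loc}^{2,\delta}$-regularity of Assumption~\ref{Assump}(i). By contrast, the drift-matching computation, though lengthy, is routine once $\beta_x$ from~(\ref{DB}) and the dual relations~(\ref{dr1}) are substituted.
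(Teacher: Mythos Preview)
Your overall strategy for both parts matches the paper's: a first-order variational condition leading to the martingale $N_s=\mathbb{E}[U_x(T,X_T^{\ast,P}+P)\,|\,\mathcal{F}_s]$ for part~(i), and concavity of $U$ combined with the state-price-density structure of $U_x(s,X_s^P+Y_s^P)$ for part~(ii). The minor differences are only in presentation: you identify $\lambda^1=\theta$ before introducing $Y^P$ and apply It\^o--Ventzel to $U_x$ directly, whereas the paper defines $Y^P$ first, applies It\^o--Ventzel to $\tilde U_z(s,N_s)$, and only then uses the Euler equation to pin down $\pi^{\ast,P}$; both orderings are valid.

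There is, however, a genuine error in your part~(ii). The process $U(s,B_s^\ast)$ with $B_s^\ast=X_s^P+Y_s^P$ is \emph{not} a local martingale in general. What the backward drift of~(\ref{FBSDEsystem}) cancels is the drift of $U_x(s,B_s^\ast)$---this is precisely Lemma~\ref{NM}---not of $U(s,B_s^\ast)$. If you carry out the It\^o--Ventzel computation for $U(s,B_s^\ast)$, the terms involving $\theta$ and $Z^{P,1}$ do cancel, but there remain the $Z^{P,2}$-contributions
\[
\tfrac{1}{2}\Bigl(U_{xx}-\tfrac{U_xU_{xxx}}{U_{xx}}\Bigr)|Z^{P,2}|^2+\Bigl(\alpha_x^2-\tfrac{U_x\alpha_{xx}^2}{U_{xx}}\Bigr)Z^{P,2},
\]
which vanish only in special cases (e.g.\ $Z^{P,2}\equiv 0$, or the exponential criterion of Section~\ref{expfpp}, where indeed $u^P(t,\xi;T)=U(t,\xi+Y_t^P)$). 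In general $\mathbb{E}[U(T,B_T^\ast)\,|\,\mathcal{F}_t]\neq U(t,B_t^\ast)$.

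The good news is that this step is unnecessary. Your concavity bound already gives
\[
\mathbb{E}\bigl[U(T,X_T^\pi+P)\,\big|\,\mathcal{F}_t\bigr]\le \mathbb{E}\bigl[U(T,B_T^\ast)\,\big|\,\mathcal{F}_t\bigr]+\mathbb{E}\bigl[U_x(T,B_T^\ast)(X_T^\pi-X_T^{\ast,P})\,\big|\,\mathcal{F}_t\bigr],
\]
and the second term vanishes exactly as you argue via Lemma~\ref{XZmart}. This yields $\mathbb{E}[U(T,X_T^\pi+P)\,|\,\mathcal{F}_t]\le\mathbb{E}[U(T,X_T^P+P)\,|\,\mathcal{F}_t]$ with equality at $\pi=\pi^{\ast,P}$, which is precisely optimality and~(\ref{sss}); there is no need to evaluate the right-hand side further. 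So drop the martingale claim for $U(s,B_s^\ast)$ and conclude directly from the inequality.
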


\begin{remark}
\emph{Condition (\ref{pb}) is not needed when $U(t,x)$ satisfies
\[
U_{x}(t,x+y)\leq CU_{x}(t,x)(1+e^{ky}),\quad t\geq0, \quad x,y\in
\mathbb{R}\text{,}%
\]
for some constants $C>0$ and $k\in \mathbb{R}$. Indeed, under this additional
growth condition and following a similar argument as in \cite[Section
3]{HHIRZ}, one can verify that condition (\ref{pa}) implies condition
(\ref{pb}). }
\end{remark}

\begin{lemma}
\label{NM}Suppose $(X^{P},Y^{P},Z^{P})$ is a solution to FBSDE
(\ref{FBSDEsystem}) satisfying $Z^{P,i}\in \mathbb{L}_{BMO}^{2}[t,T]$, $i=1,2$.
Define
\begin{equation}
q_{s}^{\ast,P}:=-\frac{U_{xx}\left(  s,X_{s}^{P}+Y_{s}^{P}\right)  Z_{s}%
^{P,2}+\alpha_{x}^{2}\left(  s,X_{s}^{P}+Y_{s}^{P}\right)  }{U_{x}\left(
s,X_{s}^{P}+Y_{s}^{P}\right)  },\quad0\leq t\leq s\leq T. \label{qstar}%
\end{equation}
Then $q^{\ast,P}\in \mathcal{Q}_{[t,T]}$ and the corresponding state price
density is given by
\[
M_{s}^{t,q^{\ast,P}}=\frac{U_{x}\left(  s,X_{s}^{P}+Y_{s}^{P}\right)  }%
{U_{x}\left(  t,\xi+Y_{t}^{P}\right)  },\quad0\leq t\leq s\leq T.
\]
Furthermore, $M^{t,q^{\ast,P}}$ is a true $\mathbb{F}$-martingale.
\end{lemma}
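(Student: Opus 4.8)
The plan is to show that the marginal forward performance $U_x(s, X^P_s + Y^P_s)$, normalized at time $t$, solves the state price density SDE (\ref{price_density}) with parameter $q^{\ast,P}$, and then identify the two by uniqueness. Write $\Xi_s := X_s^P + Y_s^P$ for brevity, and recall that $X_t^P = \xi$ gives $\Xi_t = \xi + Y_t^P$. Since $x \mapsto U(t,x)$ is strictly increasing and strictly concave (Definition \ref{defFPP}(ii)), we have $U_x > 0$ and $U_{xx} < 0$, so the denominator in (\ref{qstar}) never vanishes and $q^{\ast,P}$ is well defined. I would proceed in three steps: (a) verify $q^{\ast,P} \in \mathcal{Q}_{[t,T]}$; (b) compute the dynamics of $U_x(s,\Xi_s)$ by the It\^o--Ventzel formula and match them with (\ref{price_density}); (c) conclude by uniqueness and deduce the true martingale property.

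For step (a), rewrite (\ref{qstar}) as $q_s^{\ast,P} = -\frac{U_{xx}(s,\Xi_s)}{U_x(s,\Xi_s)}Z_s^{P,2} - \frac{\alpha_x^2(s,\Xi_s)}{U_x(s,\Xi_s)}$. Assumption \ref{Assump}(ii) gives $C_l \le -U_x/U_{xx} \le C_u$, hence $|U_{xx}/U_x| \le 1/C_l$, and $|\alpha_x^2/U_x| = |\alpha_x^2/U_{xx}|\,|U_{xx}/U_x| \le C_\alpha/C_l$. Therefore $|q_s^{\ast,P}| \le \tfrac{1}{C_l}|Z_s^{P,2}| + \tfrac{C_\alpha}{C_l}$, and for every $\mathbb{F}$-stopping time $\tau$ valued in $[t,T]$, $\mathbb{E}[\int_\tau^T |q_r^{\ast,P}|^2 dr \mid \mathcal{F}_\tau] \le \tfrac{2}{C_l^2}\mathbb{E}[\int_\tau^T |Z_r^{P,2}|^2 dr \mid \mathcal{F}_\tau] + \tfrac{2C_\alpha^2}{C_l^2}T$. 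Since $Z^{P,2} \in \mathbb{L}_{BMO}^{2}[t,T]$ by hypothesis, the right-hand side is uniformly bounded, so $q^{\ast,P} \in \mathbb{L}_{BMO}^{2}[t,T]$; being real-valued and progressively measurable, $q^{\ast,P} \in \mathcal{Q}_{[t,T]}$.

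For step (b), adding the forward and backward equations of (\ref{FBSDEsystem}) makes the $Z^{P,1}\theta$ drift terms cancel and gives $d\Xi_s = \big(-A_s\theta_s - \tfrac{1}{2}\tfrac{U_{xxx}}{U_{xx}}|Z_s^{P,2}|^2 - \tfrac{\alpha_{xx}^2}{U_{xx}}Z_s^{P,2}\big)ds - A_s\,dW_s^1 + Z_s^{P,2}\,dW_s^2$, where $A_s := \tfrac{U_x(s,\Xi_s)\theta_s + \alpha_x^1(s,\Xi_s)}{U_{xx}(s,\Xi_s)}$ and all fields are evaluated at $(s,\Xi_s)$. Applying the It\^o--Ventzel formula to $U_x(s,\Xi_s)$ — justified under Assumption \ref{Assump}(i), which by the remark following (\ref{DU}) guarantees both (\ref{DU}) and the applicability of the formula to $U_x$ — the diffusion coefficients collapse using $U_{xx}A_s = U_x\theta + \alpha_x^1$ and the definition of $q^{\ast,P}$: the $dW^1$-coefficient is $\alpha_x^1 - U_{xx}A_s = -U_x\theta_s$, and the $dW^2$-coefficient is $\alpha_x^2 + U_{xx}Z_s^{P,2} = -U_x q_s^{\ast,P}$. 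The decisive point is that the drift vanishes identically: substituting the expression (\ref{DB}) for $\beta_x$ (noting $U_x\theta + \alpha_x^1 = U_{xx}A_s$), the $\beta_x$ contributions, the weighted drift $U_{xx}$ times the drift of $\Xi$, the It\^o correction $\tfrac12 U_{xxx}(|A_s|^2 + |Z_s^{P,2}|^2)$ and the cross-variation term $-\alpha_{xx}^1 A_s + \alpha_{xx}^2 Z_s^{P,2}$ cancel term by term. Hence $dU_x(s,\Xi_s) = -U_x(s,\Xi_s)\big(\theta_s\,dW_s^1 + q_s^{\ast,P}\,dW_s^2\big)$.

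For step (c), set $N_s := U_x(s,\Xi_s)/U_x(t,\xi + Y_t^P)$; then $N_t = 1$ and, by step (b), $dN_s = -N_s(\theta_s\,dW_s^1 + q_s^{\ast,P}\,dW_s^2)$, which is exactly (\ref{price_density}) with parameter $q^{\ast,P}$. Since both $N$ and $M^{t,q^{\ast,P}}$ are the stochastic exponential of $-\int_t^{\cdot}(\theta_r\,dW_r^1 + q_r^{\ast,P}\,dW_r^2)$ with the same initial value $1$, they coincide, giving $M_s^{t,q^{\ast,P}} = U_x(s,\Xi_s)/U_x(t,\xi + Y_t^P)$. Finally, because $q^{\ast,P} \in \mathcal{Q}_{[t,T]}$ (step (a)) and $\theta$ is uniformly bounded, the cited result in Section \ref{Market model} yields that $M^{t,q^{\ast,P}}$ is a true $\mathbb{F}$-martingale. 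The \textbf{main obstacle} is the drift cancellation in step (b); it is purely computational but lengthy, and it works precisely because the backward driver in (\ref{FBSDEsystem}) was engineered so that the marginal forward performance $U_x(s,\Xi_s)$ is a local martingale of the state-price-density form.
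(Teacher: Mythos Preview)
Your proof is correct and follows exactly the approach the paper takes: the paper's argument is the one-line observation that Assumption~\ref{Assump}(ii) gives $q^{\ast,P}\in\mathcal{Q}_{[t,T]}$, together with the computation (via the It\^o--Ventzel formula applied to $U_x$ using (\ref{DU}) and the FBSDE (\ref{FBSDEsystem})) that $dU_x(s,X_s^P+Y_s^P)=-U_x(s,X_s^P+Y_s^P)(\theta_s\,dW_s^1+q_s^{\ast,P}\,dW_s^2)$, which is precisely your step~(b). Your write-up simply fills in the details the paper leaves implicit, including the explicit BMO bound in step~(a) and the drift cancellation in step~(b).
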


The proof follows easily by Assumption \ref{Assump} (ii) and the fact that
\begin{equation}
dU_{x}\left(  s,X_{s}^{P}+Y_{s}^{P}\right)  =-U_{x}\left(  s,X_{s}^{P}%
+Y_{s}^{P}\right)  \left(  \theta_{s}dW_{s}^{1}+q_{s}^{\ast,P}dW_{s}%
^{2}\right)  ,\quad0\leq t\leq s\leq T, \label{2402}%
\end{equation}
where we use FBSDE (\ref{FBSDEsystem}) and the dynamics of $U_{x}$ in
(\ref{DU}). Comparing with the SDE for the state price density in
(\ref{price_density}), we have thus identified $M^{t,q^{\ast,P}}$ as a
candidate process for the optimal state price density.

The above lemma is used to verify that FBSDE (\ref{FBSDEsystem}) serves as a
\textit{sufficient} condition for the primal problem (\ref{P}) in Theorem
\ref{FBSDE1} (ii) above, and also for the dual problem (\ref{CP}) in Theorem
\ref{DO1} below. \emph{We also mention that process $q^{\ast,P}$ depends on
the solution of primal FBSDE (\ref{FBSDEsystem}) with initial-terminal
condition $(\xi,P)$, and so is the state price density $M^{t,q^{\ast,P}}$.}

\begin{remark}
\label{PC1}When $P\equiv0$, equation (\ref{sss}) reduces to
\begin{equation}
u^{0}\left(  t,\xi;T\right)  =\operatorname*{esssup}_{\pi \in \mathcal{A}%
_{[t,T]}}\mathbb{E}\left[  \left.  U\left(  T,\xi+\int_{t}^{T}\pi_{r}\left(
\theta_{r}dr+dW_{r}^{1}\right)  \right)  \right \vert \mathcal{F}_{t}\right]
=U\left(  t,\xi \right)  . \label{2201}%
\end{equation}
It then follows directly that the triplet $(X^{0},Y^{0},Z^{0})=(X^{\ast}%
,0,0)$, with $X^{\ast}$ satisfying (\ref{SDE}) with initial condition
$X_{t}^{\ast}=\xi$, solves FBSDE (\ref{FBSDEsystem}). Thus, Theorem
\ref{FBSDE1} (ii) yields that the control process
\[
\pi_{s}^{\ast,0}=-\frac{U_{x}\left(  s,X_{s}^{\ast}\right)  \theta_{s}%
+\alpha_{x}^{1}\left(  s,X_{s}^{\ast}\right)  }{U_{xx}\left(  s,X_{s}^{\ast
}\right)  },\quad0\leq t\leq s\leq T,
\]
is optimal and $X^{\ast,0}=X^{0}=X^{\ast}$, where $X^{\ast,0}$ solves
(\ref{wealth}) with $\pi^{\ast,0}$ above being used. As expected, $\pi
^{\ast,0}$ coincides with the optimal process derived in (\ref{FPPpistar}).
Furthermore, (\ref{2201}) aligns with the self-generation property (\ref{FPP}).
\end{remark}

\subsubsection{Primal FBSDE and the dual problem\label{Duality}}

We demonstrate that the primal FBSDE (\ref{FBSDEsystem}) provides a solution
to the dual problem (\ref{CP}).

\begin{theorem}
\label{DO1}Let $T>0$, $P\in L^{\infty}(\mathcal{F}_{T})$ and $\xi \in
\cap_{p\geq1}L^{p}(\mathcal{F}_{t})$, $0\leq t\leq T$. Suppose $(X^{P}%
,Y^{P},Z^{P})$ is a solution to FBSDE (\ref{FBSDEsystem}) on $[t,T]$ with
initial-terminal condition $(\xi,P)$, satisfying $Z^{P,i}\in \mathbb{L}%
_{BMO}^{2}[t,T]$, $i=1,2$. Let
\begin{equation}
\hat{\eta}:=U_{x}\left(  t,\xi+Y_{t}^{P}\right)  . \label{eta}%
\end{equation}
Then,

\begin{enumerate}
\item[(i)] the control process $q^{\ast,P}$ defined in (\ref{qstar}) is
optimal for the dual problem (\ref{CP}), namely, $q^{\ast,P}\in \mathcal{Q}%
_{[t,T]}$ and
\begin{align*}
\tilde{u}^{P}\left(  t,\hat{\eta};T\right)   &  =\mathbb{E}\left[  \left.
\tilde{U}\left(  T,\hat{\eta}M_{T}^{t,q^{\ast,P}}\right)  +\hat{\eta}%
M_{T}^{t,q^{\ast,P}}P\right \vert \mathcal{F}_{t}\right] \\
&  =\mathbb{E}\left[  \left.  \tilde{U}\left(  T,U_{x}\left(  T,X_{T}%
^{P}+P\right)  \right)  +U_{x}\left(  T,X_{T}^{P}+P\right)  P\right \vert
\mathcal{F}_{t}\right]  ,
\end{align*}

\item[(ii)] the bidual relation
\[
u^{P}\left(  t,\xi;T\right)  =\operatorname*{essinf}_{\eta \in L^{0,+}%
(\mathcal{F}_{t})}\left(  \tilde{u}^{P}\left(  t,\eta;T\right)  +\xi
\eta \right)  =\tilde{u}^{P}\left(  t,\hat{\eta};T\right)  +\xi \hat{\eta}%
\]
holds.
\end{enumerate}
\end{theorem}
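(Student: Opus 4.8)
The plan is to run a convex-duality argument around the explicit candidate dual minimizer $q^{\ast,P}$ and multiplier $\hat\eta$ furnished by Lemma \ref{NM} and (\ref{eta}). First I would establish \emph{weak duality}: for every $\eta>0$, every $\pi\in\mathcal{A}_{[t,T]}$ and every $q\in\mathcal{Q}_{[t,T]}$, the Fenchel--Young inequality built into the definition (\ref{dr}), applied pointwise at $s=T$ with $x=X_T^\pi+P$ and $z=\eta M_T^{t,q}>0$, yields
\[
U\big(T,X_T^\pi+P\big)\le \tilde U\big(T,\eta M_T^{t,q}\big)+\big(X_T^\pi+P\big)\,\eta M_T^{t,q}.
\]
Taking $\mathbb{E}[\,\cdot\mid\mathcal{F}_t]$, the crucial simplification is $\mathbb{E}[X_T^\pi M_T^{t,q}\mid\mathcal{F}_t]=\xi$: since $\xi$ is $\mathcal{F}_t$-measurable and $M^{t,q}$ is a true martingale with $M_t^{t,q}=1$, while Lemma \ref{XZmart} forces $M_s^{t,q}\int_t^s\pi_r(\theta_r dr+dW_r^1)$ to be a martingale vanishing at $s=t$. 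Hence
\[
\mathbb{E}\big[U(T,X_T^\pi+P)\mid\mathcal{F}_t\big]\le \mathbb{E}\big[\tilde U(T,\eta M_T^{t,q})+\eta M_T^{t,q}P\mid\mathcal{F}_t\big]+\eta\xi,
\]
and taking $\operatorname*{esssup}$ over $\pi$ on the left and $\operatorname*{essinf}$ over $q$ on the right gives $u^P(t,\xi;T)\le \tilde u^P(t,\eta;T)+\eta\xi$ for all $\eta>0$.

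Next I would show the candidate pair attains equality. By Lemma \ref{NM} together with $Y_T^P=P$, the choice $\hat\eta=U_x(t,\xi+Y_t^P)$ produces the pivotal identity $\hat\eta M_T^{t,q^{\ast,P}}=U_x(T,X_T^P+P)$, so the dual state variable at $T$ equals the marginal utility at the optimal terminal position, which is exactly the configuration making Fenchel--Young tight. The conjugacy relation (\ref{dr0}) then gives
\[
\tilde U\big(T,\hat\eta M_T^{t,q^{\ast,P}}\big)=U\big(T,X_T^P+P\big)-\big(X_T^P+P\big)U_x\big(T,X_T^P+P\big).
\]
Adding $\hat\eta M_T^{t,q^{\ast,P}}P=U_x(T,X_T^P+P)P$ and taking $\mathbb{E}[\,\cdot\mid\mathcal{F}_t]$, the $P$-terms cancel, leaving $\mathbb{E}[U(T,X_T^P+P)\mid\mathcal{F}_t]-\hat\eta\,\mathbb{E}[X_T^P M_T^{t,q^{\ast,P}}\mid\mathcal{F}_t]$. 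Applying Lemma \ref{XZmart} once more to $\pi^{\ast,P}\in\mathcal{A}_{[t,T]}$ (admissibility from Theorem \ref{FBSDE1}(ii)) and $q^{\ast,P}\in\mathcal{Q}_{[t,T]}$ gives $\mathbb{E}[X_T^P M_T^{t,q^{\ast,P}}\mid\mathcal{F}_t]=\xi$, while (\ref{sss}) identifies $\mathbb{E}[U(T,X_T^P+P)\mid\mathcal{F}_t]=u^P(t,\xi;T)$. Thus
\[
\mathbb{E}\big[\tilde U(T,\hat\eta M_T^{t,q^{\ast,P}})+\hat\eta M_T^{t,q^{\ast,P}}P\mid\mathcal{F}_t\big]=u^P(t,\xi;T)-\hat\eta\xi.
\]

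Finally I would close with a squeeze. Since $\tilde u^P(t,\hat\eta;T)$ is an $\operatorname*{essinf}$ over $q$, the value just displayed (attained at $q^{\ast,P}$) is an upper bound for it, so $\tilde u^P(t,\hat\eta;T)+\hat\eta\xi\le u^P(t,\xi;T)$; combined with weak duality at $\eta=\hat\eta$ this forces equality throughout, which simultaneously proves that $q^{\ast,P}$ attains $\tilde u^P(t,\hat\eta;T)$ and establishes both representations in (i), the second via $\hat\eta M_T^{t,q^{\ast,P}}=U_x(T,X_T^P+P)$. For (ii), weak duality gives $u^P(t,\xi;T)\le\operatorname*{essinf}_{\eta>0}(\tilde u^P(t,\eta;T)+\xi\eta)$, and the attained equality at $\hat\eta$ supplies the reverse inequality, identifying $\hat\eta$ as the minimizer. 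I expect the main obstacle to be the integrability bookkeeping required to pass the Fenchel--Young inequality under the conditional expectation and to invoke the martingale identity: one must check that $\tilde U(T,\eta M_T^{t,q})$ and the cross term $(X_T^\pi+P)\eta M_T^{t,q}$ are integrable, which I would control through the BMO membership of $\pi$ and $q$ (so that $X_T^\pi$ and $M_T^{t,q}$ have moments of all orders), the boundedness of $P$, and the bounds in Assumption \ref{Assump}(ii); the true-martingale property underlying $\mathbb{E}[X_T^\pi M_T^{t,q}\mid\mathcal{F}_t]=\xi$ is precisely where Lemma \ref{XZmart} is indispensable.
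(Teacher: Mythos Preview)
Your proposal is correct and follows essentially the same route as the paper: Fenchel--Young (the relations (\ref{dr})/(\ref{bidr})) plus Lemma \ref{XZmart} give weak duality (this is the paper's (\ref{0401})--(\ref{0402})); the identity $\hat\eta M_T^{t,q^{\ast,P}}=U_x(T,X_T^P+P)$ from Lemma \ref{NM} together with (\ref{dr0}) and Theorem \ref{FBSDE1}(ii) gives the equality at the candidate (the paper's (\ref{0301})); and the squeeze closes both (i) and (ii). The only difference is organizational---the paper proves (i) first by comparing every $q$ against $q^{\ast,P}$ using Fenchel--Young at the specific primal optimum $X_T^P$, then derives (ii)---whereas you establish global weak duality first and deduce (i) and (ii) simultaneously from the squeeze; the ingredients and key identities are identical.
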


\begin{remark}
\label{DO2}\emph{The reverse direction of Theorem \ref{DO1} amounts to the
case of the initial state $\eta$ of the density process being arbitrary, but
the initial wealth $\hat{\xi}$ depending on $\eta$. }More precisely, let
$\eta \in L^{0,+}(\mathcal{F}_{t})$. Then, if $(X^{P},Y^{P},Z^{P})$ is a
solution to the FBSDE (\ref{FBSDEsystem}) on $[t,T]$ with initial-terminal
condition $(\hat{\xi},P)$, satisfying $Z^{P,i}\in \mathbb{L}_{BMO}^{2}[t,T]$
for $i=1,2$, and
\[
\hat{\xi}=-\tilde{U}_{z}\left(  t,\eta \right)  -Y_{t}^{P},
\]
with $\hat{\xi}\in \cap_{p\geq1}L^{p}(\mathcal{F}_{t})$, the control process
$q^{\ast,P}$, defined in (\ref{qstar}) and depending on $\hat{\xi}$, is
optimal for the dual problem $\tilde{u}^{P}(t,\eta;T)$ in (\ref{CP}).
Furthermore, the bidual relation
\[
\tilde{u}^{P}(t,\eta;T)=\operatorname*{esssup}_{\xi \in \cap_{p\geq1}%
L^{p}(\mathcal{F}_{t})}\left(  u^{P}(t,\xi;T)-\xi \eta \right)  =u^{P}%
(t,\hat{\xi};T)-\hat{\xi}\eta
\]
holds. This result follows directly from Theorem \ref{DO1}, using similar arguments.
\end{remark}

Next, using the above primal FBSDE characterization, we re-establish the
self-generation property of $\tilde{U}$, first proven in \cite{Z2009}.

\begin{proposition}
\label{SG}For any $0\leq t\leq T$ and $\eta \in L^{0,+}(\mathcal{F}_{t})$
satisfying $\hat{\xi}:=-\tilde{U}_{z}(t,\eta)\in \cap_{p\geq1}L^{p}%
(\mathcal{F}_{t})$,
\begin{equation}
\tilde{U}\left(  t,\eta \right)  =\operatorname*{essinf}_{q\in \mathcal{Q}%
_{[t,T]}}\mathbb{E}\left[  \left.  \tilde{U}\left(  T,\eta M_{T}^{t,q}\right)
\right \vert \mathcal{F}_{t}\right]  , \label{SG1}%
\end{equation}
and
\begin{equation}
\tilde{U}\left(  t,\eta \right)  =\mathbb{E}\left[  \left.  \tilde{U}\left(
T,\eta M_{T}^{t,q^{\ast}}\right)  \right \vert \mathcal{F}_{t}\right]  ,
\label{SG2}%
\end{equation}
where $q^{\ast}\in \mathcal{Q}_{[t,T]}$ is given by%
\begin{equation}
q_{s}^{\ast}:=-\frac{\alpha_{x}^{2}\left(  s,X_{s}^{\ast}\right)  }%
{U_{x}\left(  s,X_{s}^{\ast}\right)  },\quad t\leq s\leq T, \label{0q}%
\end{equation}
with $X^{\ast}$ satisfying (\ref{SDE}) with initial condition $X_{t}^{\ast
}=\hat{\xi}$.
\end{proposition}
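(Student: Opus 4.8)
The plan is to obtain Proposition \ref{SG} as the no-endowment specialization ($P \equiv 0$) of the reverse duality in Remark \ref{DO2}, feeding in the degenerate FBSDE solution from Remark \ref{PC1}. The first observation is that, with $P \equiv 0$, the essential infimum in (\ref{SG1}) is precisely the dual value $\tilde{u}^0(t,\eta;T)$ of (\ref{CP}); hence it suffices to prove $\tilde{u}^0(t,\eta;T)=\tilde{U}(t,\eta)$ and that the infimum is attained at $q^*$ from (\ref{0q}).

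I would then invoke Remark \ref{PC1}: taking $P\equiv 0$ and initial datum $\hat\xi=-\tilde U_z(t,\eta)$, the triplet $(X^0,Y^0,Z^0)=(X^*,0,0)$ solves the primal FBSDE (\ref{FBSDEsystem}), where $X^*$ is the strong solution of (\ref{SDE}) with $X_t^*=\hat\xi$. Since $Z^{0,i}\equiv 0$ lies trivially in $\mathbb{L}^2_{BMO}[t,T]$ and $Y_t^0=0$, the constraint imposed in Remark \ref{DO2}, namely $\hat\xi=-\tilde U_z(t,\eta)-Y_t^0$, collapses to the standing hypothesis $\hat\xi=-\tilde U_z(t,\eta)\in\cap_{p\ge1}L^p(\mathcal F_t)$. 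Remark \ref{DO2} then gives at once that the control $q^{*,0}$ of (\ref{qstar}) is optimal for $\tilde u^0(t,\eta;T)$, together with the bidual identity $\tilde u^0(t,\eta;T)=u^0(t,\hat\xi;T)-\hat\xi\eta$. Substituting $Y^0=0$, $Z^{0,2}=0$ and $X^0=X^*$ into (\ref{qstar}) reduces $q^{*,0}$ to $-\alpha_x^2(\cdot,X^*)/U_x(\cdot,X^*)$, which is exactly $q^*$ in (\ref{0q}).

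The last step is to evaluate the right side of the bidual identity via the pointwise conjugacy relations. By the self-generation property (\ref{2201}) (the $P=0$ case of Theorem \ref{FBSDE1}(ii)) one has $u^0(t,\hat\xi;T)=U(t,\hat\xi)$; the marginal inversion (\ref{dr1}) gives $U_x(t,\hat\xi)=U_x\big(t,-\tilde U_z(t,\eta)\big)=\eta$; and the Fenchel identity (\ref{dr0}) evaluated at $x=\hat\xi$ gives $U(t,\hat\xi)=\tilde U(t,\eta)+\hat\xi\eta$. Chaining these, $\tilde u^0(t,\eta;T)=U(t,\hat\xi)-\hat\xi\eta=\tilde U(t,\eta)$, which is (\ref{SG1}); and since $q^{*,0}=q^*$ attains the infimum, (\ref{SG2}) follows immediately.

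The delicate point, as I see it, is not the algebra but the admissibility bookkeeping: one must ensure $q^*\in\mathcal Q_{[t,T]}$, i.e. that $\alpha_x^2(\cdot,X^*)/U_x(\cdot,X^*)$ is BMO. This is exactly where Assumption \ref{Assump}(ii) does the work, since factoring $\alpha_x^2/U_x=(\alpha_x^2/U_{xx})(U_{xx}/U_x)$ bounds the quotient by $C_\alpha/C_l$ and hence makes it uniformly bounded; this membership is already built into the optimality statement of Remark \ref{DO2} through Lemma \ref{NM}. The remaining care is only to verify that the integrability prerequisites behind Theorem \ref{DO1} and Remark \ref{DO2} survive the degenerate datum, which they do because $Z\equiv0$ and $\hat\xi\in\cap_{p\ge1}L^p(\mathcal F_t)$ by hypothesis.
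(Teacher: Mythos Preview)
Your proposal is correct and follows essentially the same approach as the paper: specialize to $P\equiv 0$, use Remark \ref{PC1} to identify $(X^{*},0,0)$ as the FBSDE solution with initial datum $\hat\xi=-\tilde U_z(t,\eta)$, apply Remark \ref{DO2} to obtain both the optimality of $q^{*}$ and the bidual relation $\tilde u^{0}(t,\eta;T)=u^{0}(t,\hat\xi;T)-\hat\xi\eta$, and then use $u^{0}(t,\hat\xi;T)=U(t,\hat\xi)$ together with the Fenchel identity (\ref{dr0}) to conclude $\tilde u^{0}(t,\eta;T)=\tilde U(t,\eta)$. Your explicit bound $|q^{*}|\le C_\alpha/C_l$ via Assumption \ref{Assump}(ii) is exactly what the paper invokes (more tersely) when it says $q^{*}$ is bounded.
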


This result follows by choosing $P\equiv0$ in Remark \ref{DO2}. Specifically,
by Remark \ref{PC1}, we know that $(X^{0},Y^{0},Z^{0})=(X^{\ast},0,0)$ solves
FBSDE (\ref{FBSDEsystem}) on $[t,T]$ with initial-terminal condition
$(\hat{\xi},0)$, and $q^{\ast}$ in (\ref{0q}) is bounded according to
Assumption \ref{Assump} (ii). By Remark \ref{DO2}, if $\hat{\xi}\in \cap
_{p\geq1}L^{p}(\mathcal{F}_{t})$, then $q^{\ast}$ is optimal for $\tilde
{u}^{0}(t,\eta;T)$, i.e.,%
\[
\tilde{u}^{0}\left(  t,\eta;T\right)  =\operatorname*{essinf}_{q\in
\mathcal{Q}_{[t,T]}}\mathbb{E}\left[  \left.  \tilde{U}\left(  T,\eta
M_{T}^{t,q}\right)  \right \vert \mathcal{F}_{t}\right]  =\mathbb{E}\left[
\left.  \tilde{U}\left(  T,\eta M_{T}^{t,q^{\ast}}\right)  \right \vert
\mathcal{F}_{t}\right]  .
\]
Moreover,
\[
\tilde{u}^{0}\left(  t,\eta;T\right)  =\operatorname*{esssup}_{\xi \in
\cap_{p\geq1}L^{p}(\mathcal{F}_{t})}\left(  u^{0}\left(  t,\xi;T\right)
-\xi \eta \right)  =u^{0}\left(  t,\hat{\xi};T\right)  -\hat{\xi}\eta.
\]
Note that $u^{0}(t,\xi;T)=U(t,\xi)$ for any $\xi \in \cap_{p\geq1}%
L^{p}(\mathcal{F}_{t})$ according to Remark \ref{PC1}. It follows that%
\[
\tilde{u}^{0}\left(  t,\eta;T\right)  =U\left(  t,\hat{\xi}\right)  -\hat{\xi
}\eta=\tilde{U}\left(  t,\eta \right)  ,
\]
which proves Proposition \ref{SG}.

\subsection{Optimal state price density characterization using the dual FBSDE}

\label{Dual FBSDE}

Having formulated FBSDE (\ref{FBSDEsystem}) for the primal problem, we revert
to the dual problem and derive an analogous FBSDE, given in
(\ref{DFBSDEsystem}) below. With a slight abuse of terminology, we will be
refering to (\ref{DFBSDEsystem}) as the \textit{dual FBSDE}.

To this end, consider the FBSDE, for $0\leq t\leq s\leq T$,%
\begin{equation}
\left \{
\begin{array}
[c]{l}%
D_{s}^{P}=\eta-%
{\displaystyle \int_{t}^{s}}
\left(  D_{r}^{P}\theta_{r}dW_{r}^{1}+\dfrac{\tilde{Z}_{r}^{P,2}+\tilde
{\alpha}_{z}^{2}\left(  r,D_{r}^{P}\right)  }{\tilde{U}_{zz}\left(
r,D_{r}^{P}\right)  }dW_{r}^{2}\right)  ,\\
\tilde{Y}_{s}^{P}=P+%
{\displaystyle \int_{s}^{T}}
\left(  \dfrac{1}{2}\dfrac{\tilde{U}_{zzz}\left(  r,D_{r}^{P}\right)
\left \vert \tilde{Z}_{r}^{P,2}\right \vert ^{2}}{\left \vert \tilde{U}%
_{zz}\left(  r,D_{r}^{P}\right)  \right \vert ^{2}}-\tilde{Z}_{r}^{P,1}%
\theta_{r}-\tilde{Z}_{r}^{P,2}\left(  \dfrac{\tilde{\alpha}_{zz}^{2}\left(
r,D_{r}^{P}\right)  }{\tilde{U}_{zz}\left(  r,D_{r}^{P}\right)  }%
-\dfrac{\tilde{U}_{zzz}\left(  r,D_{r}^{P}\right)  \tilde{\alpha}_{z}%
^{2}\left(  r,D_{r}^{P}\right)  }{\left \vert \tilde{U}_{zz}\left(  r,D_{r}%
^{P}\right)  \right \vert ^{2}}\right)  \right)  dr\\
\text{ \  \  \  \ }-%
{\displaystyle \int_{s}^{T}}
\left(  \tilde{Z}_{r}^{P}\right)  ^{\top}dW_{r},
\end{array}
\right.  \label{DFBSDEsystem}%
\end{equation}
with initial-terminal condition $(\eta,P)$ for $\eta \in L^{0,+}(\mathcal{F}%
_{t})$, $P\in L^{\infty}(\mathcal{F}_{T})$.

We use this FBSDE to characterize the optimal control process of the dual
problem (\ref{CP}). \emph{Specifically, we first derive a necessary condition
for the optimal density process and, in turn, demonstrate that it also yields
a sufficient optimality condition. }

\begin{theorem}
\label{DFBSDE1}Let $T>0$, $P\in L^{\infty}(\mathcal{F}_{T})$ and $\eta \in
L^{0,+}(\mathcal{F}_{t})$, $0\leq t\leq T$.

\begin{enumerate}
\item[(i)] Assume $q^{\ast,P}\in \mathcal{Q}_{[t,T]}$ is an optimal control
process of the dual problem (\ref{CP}).\ Furthermore, assume that
\[
\mathbb{E}\mathcal{[}|(\tilde{U}_{z}(T,\eta M_{T}^{t,q^{\ast,P}})+P)\eta
M_{T}^{t,q^{\ast,P}}|^{2}]<\infty,
\]
and
\[%
\begin{array}
[c]{c}%
\frac{1}{\varepsilon}\left \vert \tilde{U}\left(  T,\eta M_{T}^{t,q^{\ast
,P}+\varepsilon q}\right)  -\tilde{U}\left(  T,\eta M_{T}^{t,q^{\ast,P}%
}\right)  \right \vert \\
\text{is uniformly integrable in }\varepsilon \in(0,1)\text{ for any bounded
}q\in \mathcal{Q}_{[t,T]}.
\end{array}
\]
Then there exists a triplet $(D^{P},\tilde{Y}^{P},\tilde{Z}^{P})$ satisfying
FBSDE (\ref{DFBSDEsystem}). Furthermore, $q^{\ast,P}$ can be represented as
\begin{equation}
q_{s}^{\ast,P}=\frac{\tilde{Z}_{s}^{P,2}+\tilde{\alpha}_{z}^{2}\left(
s,D_{s}^{P}\right)  }{D_{s}^{P}\tilde{U}_{zz}\left(  s,D_{s}^{P}\right)  },\,
\quad t\leq s\leq T, \label{Oq}%
\end{equation}
and $\eta M_{s}^{t,q^{\ast,P}}=D_{s}^{P}$, $t\leq s\leq T.$

\item[(ii)] Suppose $(D^{P},\tilde{Y}^{P},\tilde{Z}^{P})$ is a solution to
FBSDE (\ref{DFBSDEsystem}) \emph{satisfying $D^{P}>0$} and $\tilde{Z}^{P,i}%
\in \mathbb{L}_{BMO}^{2}[t,T],\,i=1,2$. Then, the control process $q^{\ast,P}$
defined by (\ref{Oq}) is optimal for the dual problem (\ref{CP}) and $\eta
M_{s}^{t,q^{\ast,P}}=D_{s}^{P}$, $t\leq s\leq T$. Specifically, $q^{\ast,P}%
\in \mathcal{Q}_{[t,T]}$ and
\[
\tilde{u}^{P}\left(  t,\eta;T\right)  =\mathbb{E}\left[  \left.  \tilde
{U}\left(  T,\eta M_{T}^{t,q^{\ast,P}}\right)  +\eta M_{T}^{t,q^{\ast,P}%
}P\right \vert \mathcal{F}_{t}\right]  =\mathbb{E}\left[  \left.  \tilde
{U}\left(  T,D_{T}^{P}\right)  +D_{T}^{P}P\right \vert \mathcal{F}_{t}\right]
.
\]

\end{enumerate}
\end{theorem}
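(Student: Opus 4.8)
The plan is to establish both directions by mirroring the primal analysis of Theorem \ref{FBSDE1}, now reading the dual problem (\ref{CP}) as a minimization over $q$ whose ``state'' is the density $D_s=\eta M_s^{t,q}$. The first observation is that, by (\ref{price_density}), the forward component of (\ref{DFBSDEsystem}) is nothing but the state price density SDE $dD_s=-D_s(\theta_s\,dW_s^1+q_s\,dW_s^2)$ once one substitutes the feedback control (\ref{Oq}); hence $D_s^P=\eta M_s^{t,q^{\ast,P}}$ is forced once (\ref{Oq}) is imposed, and the genuine content of the theorem lies in the backward equation and in the optimality statement. Throughout, the sign constraint $D^P>0$ is what keeps $\tilde U_z(s,D_s^P)$ well defined on $\mathbb{D}^{+}$.

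For part (i) I would argue variationally. Setting $D_s^P:=\eta M_s^{t,q^{\ast,P}}>0$, I perturb the optimal control to $q^{\ast,P}+\varepsilon q$ for bounded $q\in\mathcal{Q}_{[t,T]}$ and differentiate the objective at $\varepsilon=0$; the two integrability hypotheses are exactly what is needed to differentiate under the conditional expectation and to pass to the limit. Since differentiating $\eta M_T^{t,q^{\ast,P}+\varepsilon q}$ at $\varepsilon=0$ gives $-D_T^P\int_t^T q_r\,dW_r^{2,\ast}$, where $W^{2,\ast}$ is the $\mathbb{Q}^{q^{\ast,P}}$-Brownian motion, the first-order condition reads $\mathbb{E}^{\mathbb{Q}^{q^{\ast,P}}}[(\tilde U_z(T,D_T^P)+P)\int_t^T q_r\,dW_r^{2,\ast}\mid\mathcal{F}_t]=0$ for all such $q$, i.e. the $W^2$-volatility of the martingale $\mathbb{E}^{\mathbb{Q}^{q^{\ast,P}}}[\,\tilde U_z(T,D_T^P)+P\mid\mathcal{F}_s]$ is pinned down. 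I would then let $X^P$ be the $\mathbb{Q}^{q^{\ast,P}}$-martingale with terminal value $-\tilde U_z(T,D_T^P)-P$ (the candidate optimal wealth, whose integrability is guaranteed by the stated second-moment hypothesis) and set $\tilde Y_s^P:=-\tilde U_z(s,D_s^P)-X_s^P$, so that $\tilde Y_T^P=P$ by (\ref{dr1}). Expanding $d\tilde U_z(s,D_s^P)$ by the It\^o--Ventzel formula via (\ref{DUSPDE})--(\ref{DDA}) (legitimate under the $\mathcal{K}^{2,\delta}_{loc}$-regularity of $\tilde U$), combining with the $\mathbb{Q}^{q^{\ast,P}}$-martingale dynamics of $X^P$, and collecting the $\mathbb{P}$-drift terms should reproduce the driver of (\ref{DFBSDEsystem}) and simultaneously confirm the feedback form (\ref{Oq}).

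For part (ii) the cleanest route exploits the duality already in hand rather than a fresh verification. Given a solution $(D^P,\tilde Y^P,\tilde Z^P)$ with $D^P>0$ and $\tilde Z^{P,i}\in\mathbb{L}_{BMO}^{2}[t,T]$, I first check $q^{\ast,P}\in\mathcal{Q}_{[t,T]}$: the BMO property transfers from $\tilde Z^{P,2}$ once the boundedness of $\tilde\alpha_z^2$ and the two-sided control of $\tilde U_{zz}$ inherited from Assumption \ref{Assump} through the conjugacy relations (\ref{dr5}) are used, with $D^P>0$ bounding the denominator away from zero. Substituting (\ref{Oq}) into the forward equation and comparing with (\ref{price_density}) gives $D_s^P=\eta M_s^{t,q^{\ast,P}}$. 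I then set $X_s^P:=-\tilde U_z(s,D_s^P)-\tilde Y_s^P$ and $Y_s^P:=\tilde Y_s^P$, so that $U_x(s,X_s^P+Y_s^P)=D_s^P$ by (\ref{dr1}); applying It\^o--Ventzel to $\tilde U_z(s,D_s^P)$ and using (\ref{dr5})--(\ref{dr5'}) one reads off a triplet $(X^P,Y^P,Z^P)$ that solves the primal FBSDE (\ref{FBSDEsystem}) with initial--terminal data $(\hat\xi,P)$, where $\hat\xi=-\tilde U_z(t,\eta)-\tilde Y_t^P$. Theorem \ref{DO1} together with Remark \ref{DO2} then immediately yield that $q^{\ast,P}$ is optimal for $\tilde u^P(t,\eta;T)$ and deliver the stated value-function identity.

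The main obstacle, in both directions, is the It\^o--Ventzel expansion of the composite random field $\tilde U_z(s,D_s^P)$ and the bookkeeping required to match its drift against the intricate driver in (\ref{DFBSDEsystem}): one must correctly account for the quadratic covariation between the field's own volatility $\tilde\alpha_z$ and the martingale part of $D^P$, and then invoke the conjugacy identities (\ref{dr5})--(\ref{dr5'}) to rewrite everything consistently in the dual variable $z$. Secondary technical points, which I expect to be routine by comparison, are the verification that $\hat\xi\in\cap_{p\geq1}L^{p}(\mathcal{F}_t)$ so that Remark \ref{DO2} applies, and, in part (i), the justification of differentiation under the expectation, for which the stated uniform-integrability hypothesis is precisely tailored.
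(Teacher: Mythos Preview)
Your plan for part~(i) is essentially the paper's argument, recast under the measure $\mathbb{Q}^{q^{\ast,P}}$ rather than under $\mathbb{P}$: the paper works with the $\mathbb{P}$-martingale $\Gamma_s=\mathbb{E}\bigl[(\tilde U_z(T,D_T^P)+P)D_T^P\mid\mathcal{F}_s\bigr]$, sets $\tilde Y_s^P=\Gamma_s/D_s^P-\tilde U_z(s,D_s^P)$, expands via It\^o--Ventzel and (\ref{DUSPDE}), and then extracts the feedback form (\ref{Oq}) by computing $d(\Gamma_s H_s^q)$ with $H_s^q=\int_t^s q_r\,dW_r^2+\int_t^s q_r^{\ast,P}q_r\,dr$. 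Since $\Gamma_s/D_s^P$ is precisely your $\mathbb{Q}^{q^{\ast,P}}$-martingale $-X_s^P$, the two formulations coincide.

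For part~(ii), however, your reduction to the primal side has a genuine gap. Remark~\ref{DO2} (and Proposition~\ref{R1}(i)) requires $\hat\xi=-\tilde U_z(t,\eta)-\tilde Y_t^P\in\cap_{p\geq1}L^p(\mathcal{F}_t)$, but Theorem~\ref{DFBSDE1}(ii) assumes only $\eta\in L^{0,+}(\mathcal{F}_t)$, which yields no moments whatsoever for $\tilde U_z(t,\eta)$; already in the exponential example (\ref{expCPU}) one has $\tilde U_z(t,\eta)\sim\gamma^{-1}\ln\eta$, and for general $\eta\in L^{0,+}$ this need not lie in any $L^p$. This is not a routine check---it is simply not implied by the hypotheses, so your route proves a strictly weaker statement.

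The paper circumvents this by a direct convexity verification that never invokes the primal value function. From Lemma~\ref{UZ} one has $\tilde U_z(s,D_s^P)+\tilde Y_s^P=\tilde U_z(t,\eta)+\tilde Y_t^P-\int_t^s\pi_r^{\ast,P}(\theta_r\,dr+dW_r^1)$ with $\pi^{\ast,P}\in\mathcal{A}_{[t,T]}$. Then for any $q\in\mathcal{Q}_{[t,T]}$ the pointwise inequality $\tilde U(T,\eta M_T^{t,q})\geq U(T,-\tilde U_z(T,D_T^P))+\eta M_T^{t,q}\tilde U_z(T,D_T^P)$, combined with Lemma~\ref{XZmart}, gives
\[
\mathbb{E}\bigl[\tilde U(T,\eta M_T^{t,q})+\eta M_T^{t,q}P\mid\mathcal{F}_t\bigr]\;\geq\;\mathbb{E}\bigl[U(T,-\tilde U_z(T,D_T^P))\mid\mathcal{F}_t\bigr]+\eta\bigl(\tilde U_z(t,\eta)+\tilde Y_t^P\bigr),
\]
with equality at $q=q^{\ast,P}$ by (\ref{dr0}). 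The crucial point is that $\tilde U_z(t,\eta)+\tilde Y_t^P$ is $\mathcal{F}_t$-measurable and simply factors out of the conditional expectation; no $L^p$-integrability of $\hat\xi$ is ever needed.
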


\begin{lemma}
\label{UZ}Suppose $(D^{P},\tilde{Y}^{P},\tilde{Z}^{P})$ is a solution to FBSDE
(\ref{DFBSDEsystem}) \emph{satisfying $D^{P}>0$ a.s.} and $\tilde{Z}^{P,i}%
\in \mathbb{L}_{BMO}^{2}[t,T],\,i=1,2$. Define%
\begin{equation}
\pi_{s}^{\ast,P}:=\tilde{U}_{zz}\left(  s,D_{s}^{P}\right)  D_{s}^{P}%
\theta_{s}-\tilde{\alpha}_{z}^{1}\left(  s,D_{s}^{P}\right)  -\tilde{Z}%
_{s}^{P,1},\quad0\leq t\leq s\leq T. \label{Dpi*}%
\end{equation}
Then $\pi^{\ast,P}\in \mathcal{A}_{[t,T]}$ and%
\begin{equation}
d\left(  \tilde{U}_{z}\left(  s,D_{s}^{P}\right)  +\tilde{Y}_{s}^{P}\right)
=-\pi_{s}^{\ast,P}\left(  \theta_{s}ds+dW_{s}^{1}\right)  ,\quad0\leq t\leq
s\leq T. \label{UZpi}%
\end{equation}

\end{lemma}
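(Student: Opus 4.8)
The statement has two parts: the admissibility $\pi^{\ast,P}\in\mathcal{A}_{[t,T]}$ and the dynamics (\ref{UZpi}). The plan is to dispatch admissibility by pointwise bounds, and to establish (\ref{UZpi}) by applying the It\^o--Ventzel formula to $\tilde{U}_z(s,D_s^P)$ and then cancelling the result against the prescribed drift of $\tilde{Y}^P$.

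\textbf{Admissibility.} I would bound the three summands of (\ref{Dpi*}) separately. The term $\tilde{Z}^{P,1}$ lies in $\mathbb{L}_{BMO}^{2}[t,T]$ by hypothesis. For $\tilde{\alpha}_z^1(s,D_s^P)$, identity (\ref{DDA}) expresses it through $\alpha_x^1/U_{xx}$ evaluated at $-\tilde{U}_z(s,D_s^P)$, which is uniformly bounded by $C_\alpha$ from Assumption \ref{Assump} (ii). For $\tilde{U}_{zz}(s,D_s^P)D_s^P\theta_s$, the duality relations (\ref{dr1}) and (\ref{dr5}) give $\tilde{U}_{zz}(s,z)\,z=-U_x/U_{xx}$ evaluated at $x=-\tilde{U}_z(s,z)$, which lies in $[C_l,C_u]$ by Assumption \ref{Assump} (ii); together with the uniform boundedness of $\theta$ this term is bounded. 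Since bounded processes belong to $\mathbb{L}_{BMO}^{2}[t,T]$, the sum $\pi^{\ast,P}$ does too, giving $\pi^{\ast,P}\in\mathcal{A}_{[t,T]}$.

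\textbf{It\^o--Ventzel step.} The forward equation of (\ref{DFBSDEsystem}) shows $D^P$ is a driftless It\^o process with $dW^1$-coefficient $-D_s^P\theta_s$ and $dW^2$-coefficient $-G_s$, where $G_s:=(\tilde{Z}_s^{P,2}+\tilde{\alpha}_z^2(s,D_s^P))/\tilde{U}_{zz}(s,D_s^P)$. As $D^P>0$ stays in $\mathbb{D}^+$, I would apply the It\^o--Ventzel formula (Appendix \ref{I-V}) to the field $\tilde{U}_z$ with dynamics (\ref{DUSPDE}), collecting the field drift $\tilde{\beta}_z$, the field diffusion $\tilde{\alpha}_z$, the It\^o terms $\tilde{U}_{zz}\,dD^P+\tfrac12\tilde{U}_{zzz}\,d\langle D^P\rangle$, and the cross term $\tilde{\alpha}_{zz}^\top\,d\langle W,D^P\rangle$. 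The definition of $G_s$ forces $\tilde{U}_{zz}G_s=\tilde{Z}_s^{P,2}+\tilde{\alpha}_z^2$, so the $W^2$-martingale part collapses to $-\tilde{Z}_s^{P,2}\,dW_s^2$, while the $W^1$-coefficient becomes $\tilde{\alpha}_z^1-\tilde{U}_{zz}D^P\theta=-(\pi^{\ast,P}+\tilde{Z}^{P,1})$ by (\ref{Dpi*}).

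\textbf{Drift cancellation.} This is where I expect the bulk of the work. I would substitute the explicit formula (\ref{B2}) for $\tilde{\beta}_z(s,D_s^P)$ and expand $|G_s|^2$; after combining with $\tfrac12\tilde{U}_{zzz}\,d\langle D^P\rangle$ and the cross term (in which the $\tilde{\alpha}_{zz}^1 D^P\theta$ contributions cancel internally), the $ds$-coefficient of $d\tilde{U}_z(s,D_s^P)$ reduces to $-\tilde{U}_{zz}D^P|\theta|^2+\tilde{\alpha}_z^1\theta$ plus several terms weighted by $\tilde{U}_{zzz}$, $\tilde{\alpha}_{zz}^2$ and $\tilde{Z}^{P,2}$. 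Reading off the drift of $\tilde{Y}^P$ from (\ref{DFBSDEsystem}) and adding the two, every $\tilde{U}_{zzz}$- and $\tilde{\alpha}_{zz}^2$-weighted term annihilates its counterpart, leaving $-\tilde{U}_{zz}D^P|\theta|^2+\tilde{\alpha}_z^1\theta+\tilde{Z}^{P,1}\theta=-\pi^{\ast,P}\theta$ by (\ref{Dpi*}). Simultaneously the $W^2$-terms of $\tilde{U}_z$ and $\tilde{Y}^P$ cancel and the $W^1$-terms sum to $-\pi^{\ast,P}\,dW^1$, yielding $d(\tilde{U}_z(s,D_s^P)+\tilde{Y}_s^P)=-\pi_s^{\ast,P}(\theta_s\,ds+dW_s^1)$, which is (\ref{UZpi}). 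The delicate point is matching the three sources of $\tilde{U}_{zzz}$/$\tilde{\alpha}_{zz}^2$ terms---from $\tilde{\beta}_z$, from $d\langle D^P\rangle$, and from the prescribed $\tilde{Y}^P$-drift---so that they cancel exactly; this alignment is precisely why the drift in (\ref{DFBSDEsystem}) was chosen in that form.
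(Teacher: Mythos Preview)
Your proposal is correct and follows precisely the approach the paper indicates: the paper's proof is the one-line remark that the result ``follows easily from Assumption \ref{Assump} (ii) and by combining FBSDE (\ref{DFBSDEsystem}) with the dynamics of $\tilde{U}_{z}$ in (\ref{DUSPDE}),'' and your write-up simply unpacks those two ingredients---the duality bounds for admissibility and the It\^o--Ventzel expansion plus drift cancellation for (\ref{UZpi})---in the expected way.
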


The proof follows easily from Assumption \ref{Assump} (ii) and by combining
FBSDE (\ref{DFBSDEsystem}) with the dynamics of $\tilde{U}_{z}$ in
(\ref{DUSPDE}). From (\ref{UZpi}), $\pi^{\ast,P}$ can be identified as a
candidate process for the optimal policy of the primal problem.

This lemma is used to verify that FBSDE (\ref{DFBSDEsystem}) serves as a
sufficient condition for the dual problem (\ref{CP}) in Theorem \ref{DFBSDE1}
(ii) above and also for the primal problem (\ref{P}) in Theorem \ref{DOP1}
below. \emph{We note that, as expected, process $\pi^{\ast,P}$ depends on the
solution of the dual FBSDE (\ref{DFBSDEsystem}) with initial-terminal
condition $(\eta,P)$.}

\subsubsection{Dual FBSDE and the primal problem}

Corresponding to the results in Section \ref{Duality}, the dual FBSDE
(\ref{DFBSDEsystem}) can be also used to characterize the solution of the
primal problem (\ref{P}).

\begin{theorem}
\label{DOP1}Let $T>0,P\in L^{\infty}(\mathcal{F}_{T})$ and $\eta \in
L^{0,+}(\mathcal{F}_{t})$, $0\leq t\leq T$. Suppose $(D^{P},\tilde{Y}%
^{P},\tilde{Z}^{P})$ is a solution to FBSDE (\ref{DFBSDEsystem}) on $[t,T]$
with initial-terminal condition $(\eta,P)$ \emph{satisfying }$\emph{D^{P}>0}$
and $\tilde{Z}^{P,i}\in \mathbb{L}_{BMO}^{2}[t,T]$, $i=1,2$. Define%
\begin{equation}
\hat{\xi}:=-\tilde{U}_{z}\left(  t,\eta \right)  -\tilde{Y}_{t}^{P}.
\label{Dxi*}%
\end{equation}
Then, if $\hat{\xi}\in \cap_{p\geq1}L^{p}(\mathcal{F}_{t})$,

\begin{enumerate}
\item[(i)] the control process $\pi^{\ast,P}$ defined in (\ref{Dpi*}) is
optimal for the primal problem (\ref{P}), namely, $\pi^{\ast,P}\in
\mathcal{A}_{[t,T]}$ and%
\begin{align*}
u^{P}\left(  t,\hat{\xi};T\right)   &  =\mathbb{E}\left[  \left.  U\left(
T,\hat{\xi}+\int_{t}^{T}\pi_{r}^{\ast,P}\left(  \theta_{r}dr+dW_{r}%
^{1}\right)  +P\right)  \right \vert \mathcal{F}_{t}\right] \\
&  =\mathbb{E}\left[  \left.  U\left(  T,-\tilde{U}_{z}\left(  T,D_{T}%
^{P}\right)  \right)  \right \vert \mathcal{F}_{t}\right]  ,
\end{align*}

\item[(ii)] the bidual relation
\[
\tilde{u}^{P}\left(  t,\eta;T\right)  =\operatorname*{esssup}_{\xi \in
\cap_{p\geq1}L^{p}(\mathcal{F}_{t})}\left(  u^{P}\left(  t,\xi;T\right)
-\xi \eta \right)  =u^{P}\left(  t,\hat{\xi};T\right)  -\hat{\xi}\eta
\]
holds.
\end{enumerate}
\end{theorem}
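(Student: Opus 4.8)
The plan is to mirror the proof of Theorem \ref{DO1}, now running the argument from the dual side: starting from a solution of the dual FBSDE (\ref{DFBSDEsystem}), I would manufacture an optimizer for the primal problem (\ref{P}). The workhorse is Lemma \ref{UZ}, which tells us that the process $\pi^{\ast,P}$ defined in (\ref{Dpi*}) is admissible and that the auxiliary process $\tilde{U}_{z}(s,D_{s}^{P})+\tilde{Y}_{s}^{P}$ evolves according to (\ref{UZpi}). Integrating (\ref{UZpi}) from $t$ to $T$ and inserting the boundary data $D_{t}^{P}=\eta$ and $\tilde{Y}_{T}^{P}=P$ makes the expression telescope into the key wealth identity
\[
\hat{\xi}+\int_{t}^{T}\pi_{r}^{\ast,P}\left(  \theta_{r}dr+dW_{r}^{1}\right)  +P=-\tilde{U}_{z}\left(  T,D_{T}^{P}\right)  ,
\]
with $\hat{\xi}$ as in (\ref{Dxi*}). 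Writing the left-hand side as $X_{T}^{\ast,P}+P$ and applying the inverse-marginal relation (\ref{dr1}) gives $U_{x}(T,X_{T}^{\ast,P}+P)=D_{T}^{P}$; since $\eta M_{s}^{t,q^{\ast,P}}=D_{s}^{P}$ by Theorem \ref{DFBSDE1}, we record $U_{x}(T,X_{T}^{\ast,P}+P)=\eta M_{T}^{t,q^{\ast,P}}$, which simultaneously delivers the second equality in part (i).

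For the optimality claim in part (i) I would exploit concavity of $U(T,\cdot)$ through the gradient inequality $U(T,X_{T}^{\pi}+P)\leq U(T,X_{T}^{\ast,P}+P)+U_{x}(T,X_{T}^{\ast,P}+P)(X_{T}^{\pi}-X_{T}^{\ast,P})$, applied to an arbitrary competitor $\pi\in\mathcal{A}_{[t,T]}$ started at the same initial wealth $\hat{\xi}$, so that $X_{T}^{\pi}-X_{T}^{\ast,P}=\int_{t}^{T}(\pi_{r}-\pi_{r}^{\ast,P})(\theta_{r}dr+dW_{r}^{1})$. Substituting $U_{x}(T,X_{T}^{\ast,P}+P)=\eta M_{T}^{t,q^{\ast,P}}$, taking $\mathbb{E}[\cdot\,|\,\mathcal{F}_{t}]$, and invoking Lemma \ref{XZmart} to annihilate the stochastic-integral cross term (the relevant $M^{t,q^{\ast,P}}$-weighted integral is a true martingale vanishing at $t$) makes the linear correction drop out, so $\pi^{\ast,P}$ dominates every competitor and is therefore optimal.

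Part (ii) splits into a Fenchel inequality and attainment at $\hat{\xi}$. For the inequality I would apply the pointwise Fenchel bound $U(T,y)\leq\tilde{U}(T,z)+yz$ from (\ref{dr}) with $y=X_{T}^{\pi}+P$ and $z=\eta M_{T}^{t,q}$ for arbitrary $\pi$ and $q$; after $\mathbb{E}[\cdot\,|\,\mathcal{F}_{t}]$, Lemma \ref{XZmart} together with the normalization $\mathbb{E}[M_{T}^{t,q}\,|\,\mathcal{F}_{t}]=1$ isolates exactly the term $\xi\eta$, yielding $u^{P}(t,\xi;T)-\xi\eta\leq\tilde{u}^{P}(t,\eta;T)$ upon taking esssup over $\pi$ and essinf over $q$. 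For attainment I would start from the conjugate identity (\ref{dr0}), that is $\tilde{U}(T,D_{T}^{P})=U(T,-\tilde{U}_{z}(T,D_{T}^{P}))+D_{T}^{P}\tilde{U}_{z}(T,D_{T}^{P})$, substitute $-\tilde{U}_{z}(T,D_{T}^{P})=X_{T}^{\ast,P}+P$, add $D_{T}^{P}P$, and take $\mathbb{E}[\cdot\,|\,\mathcal{F}_{t}]$; by Theorem \ref{DFBSDE1} the left side equals $\tilde{u}^{P}(t,\eta;T)$, while the right collapses—again via Lemma \ref{XZmart} and $\mathbb{E}[M_{T}^{t,q^{\ast,P}}\,|\,\mathcal{F}_{t}]=1$—to $u^{P}(t,\hat{\xi};T)-\hat{\xi}\eta$. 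Combining the two yields the claimed bidual relation.

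I expect the only delicate point to be integrability bookkeeping rather than any genuinely new idea: one must confirm that each conditional expectation is finite and that the cross terms are honest martingales. The hypotheses $D^{P}>0$, $\tilde{Z}^{P,i}\in\mathbb{L}_{BMO}^{2}[t,T]$, $\hat{\xi}\in\cap_{p\geq1}L^{p}(\mathcal{F}_{t})$, and the uniform boundedness of $\theta$ feed directly into Lemma \ref{XZmart} and into the admissibility conclusion of Lemma \ref{UZ}, so the hard part is verifying that these standing assumptions are strong enough to justify every interchange of limit and expectation rather than devising a new mechanism.
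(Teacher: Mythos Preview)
Your proposal is correct and follows essentially the same approach as the paper. You have identified all the key ingredients: Lemma \ref{UZ} for admissibility and the wealth identity via (\ref{UZpi}), the concavity inequality combined with $U_{x}(T,X_{T}^{\ast,P}+P)=D_{T}^{P}=\eta M_{T}^{t,q^{\ast,P}}$ and Lemma \ref{XZmart} for optimality in part (i), the Fenchel inequality (which is precisely the paper's (\ref{0401})) for one direction of part (ii), and the conjugate identity (\ref{dr0}) together with Theorem \ref{DFBSDE1} (ii) and Lemma \ref{XZmart} for attainment at $\hat{\xi}$; the only cosmetic difference is that the paper routes the attainment step through the martingale property of $(\tilde{U}_{z}(s,D_{s}^{P})+\tilde{Y}_{s}^{P})D_{s}^{P}$ (its equality (\ref{1901})), whereas you equivalently compute $\mathbb{E}[D_{T}^{P}X_{T}^{\ast,P}\,|\,\mathcal{F}_{t}]=\hat{\xi}\eta$ directly.
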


\begin{remark}
\label{DOP2}Similarly to Remark \ref{DO2}, the reverse direction of Theorem
\ref{DOP1} amounts to the case of the initial wealth $\xi$ being arbitrary,
but the initial state $\hat{\eta}$ specified as a particular class of initial
states for the density process, depending on $\xi$. More precisely, for
$\xi \in \cap_{p\geq1}L^{p}(\mathcal{F}_{t})$, suppose $(D^{P},\tilde{Y}%
^{P},\tilde{Z}^{P})$ is a solution to the FBSDE (\ref{DFBSDEsystem}) on
$[t,T]$ with the initial-terminal condition $(\hat{\eta},P)$, satisfying
$D^{P}>0$, $\tilde{Z}^{P,i}\in \mathbb{L}_{BMO}^{2}[t,T]$ for $i=1,2$, and
\[
\hat{\eta}=U_{x}\left(  t,\xi+\tilde{Y}_{t}^{P}\right)  \in L^{0,+}%
(\mathcal{F}_{t}).
\]
Then, the control process $\pi^{\ast,P}$, defined in (\ref{Dpi*}) and
depending on $\hat{\eta}$, is optimal for the primal problem $u^{P}(t,\xi;T)$
in (\ref{P}). Furthermore, the bidual relation
\[
u^{P}(t,\xi;T)=\operatorname*{essinf}_{\eta \in L^{0,+}(\mathcal{F}_{t}%
)}\left(  \tilde{u}^{P}(t,\eta;T)+\xi \eta \right)  =\tilde{u}^{P}(t,\hat{\eta
};T)+\xi \hat{\eta}%
\]
holds.
\end{remark}

\subsection{Relations between the primal and dual FBSDEs}

We show that the primal FBSDE (\ref{FBSDEsystem}) and the dual FBSDE
(\ref{DFBSDEsystem}) form a convex dual pair. Their relationship is a direct
analogue to the convex duality between the primal problem (\ref{P}) and the
dual problem (\ref{CP}).

\begin{proposition}
\label{R1}Let $T>0$ and $P\in L^{\infty}(\mathcal{F}_{T})$.

\begin{enumerate}
\item[(i)] For $\eta \in L^{0,+}(\mathcal{F}_{t})$, $0\leq t\leq T$, suppose
$(D^{P},\tilde{Y}^{P},\tilde{Z}^{P})$ is a solution to FBSDE
(\ref{DFBSDEsystem}) on $[t,T]$ with initial-terminal condition $(\eta,P)$,
satisfying\ $D^{P}>0$. Let $\hat{\xi}$ be defined in (\ref{Dxi*}). Then, if
$\hat{\xi}\in \cap_{p\geq1}L^{p}(\mathcal{F}_{t})$, the triplet $(X^{P}%
,Y^{P},Z^{P})$ given by
\[
X_{s}^{P}:=-\tilde{U}_{z}\left(  s,D_{s}^{P}\right)  -\tilde{Y}_{s}^{P},\quad
Y_{s}^{P}:=\tilde{Y}_{s}^{P},\quad Z_{s}^{P}:=\tilde{Z}_{s}^{P},\quad t\leq
s\leq T,
\]
is a solution to FBSDE (\ref{FBSDEsystem}) on $[t,T]$ with initial-terminal
condition $(\hat{\xi},P)$.

\item[(ii)] For $\xi \in \cap_{p\geq1}L^{p}(\mathcal{F}_{t})$, $0\leq t\leq T$,
suppose $(X^{P},Y^{P},Z^{P})$ is a solution to FBSDE (\ref{FBSDEsystem}) on
$[t,T]$ with initial-terminal condition $(\xi,P)$. Then, for $\hat{\eta}$
defined in (\ref{eta}), the triplet $(D^{P},\tilde{Y}^{P},\tilde{Z}^{P})$
given by
\[
D_{s}^{P}:=U_{x}\left(  s,X_{s}^{P}+Y_{s}^{P}\right)  ,\quad \tilde{Y}_{s}%
^{P}:=Y_{s}^{P},\quad \tilde{Z}_{s}^{P}:=Z_{s}^{P},\quad t\leq s\leq T,
\]
is a solution to FBSDE (\ref{DFBSDEsystem}) on $[t,T]$ with initial-terminal
condition $(\hat{\eta},P)$.
\end{enumerate}
\end{proposition}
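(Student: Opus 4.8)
The plan is to treat parts (i) and (ii) as mirror images, each reducing to two ingredients: the conjugacy relations (\ref{dr1}), (\ref{dr5}), (\ref{dr5'}), (\ref{DDA}), which let me pass freely between the primal data evaluated at the shifted state $X^{P}+Y^{P}$ and the dual data at $D^{P}$; and the dynamics of the transformed processes, which Lemmas \ref{UZ} and \ref{NM} have already supplied. I would begin with (i). Given a solution $(D^{P},\tilde{Y}^{P},\tilde{Z}^{P})$ of the dual FBSDE (\ref{DFBSDEsystem}) with $D^{P}>0$, set $X^{P}:=-\tilde{U}_{z}(\cdot,D^{P})-\tilde{Y}^{P}$, $Y^{P}:=\tilde{Y}^{P}$, $Z^{P}:=\tilde{Z}^{P}$. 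The decisive simplification is that $X_{s}^{P}+Y_{s}^{P}=-\tilde{U}_{z}(s,D_{s}^{P})$, so by the second line of (\ref{dr1}) we get $U_{x}(s,X_{s}^{P}+Y_{s}^{P})=D_{s}^{P}$, by (\ref{dr5}) that $U_{xx}(s,X_{s}^{P}+Y_{s}^{P})=-1/\tilde{U}_{zz}(s,D_{s}^{P})$, and by (\ref{DDA}) that $\alpha_{x}^{1}(s,X_{s}^{P}+Y_{s}^{P})=-\tilde{\alpha}_{z}^{1}(s,D_{s}^{P})/\tilde{U}_{zz}(s,D_{s}^{P})$.

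For the forward equation of (\ref{FBSDEsystem}), I note that $X^{P}=-(\tilde{U}_{z}(\cdot,D^{P})+\tilde{Y}^{P})$, so Lemma \ref{UZ} applies directly: identity (\ref{UZpi}) yields $dX_{s}^{P}=\pi_{s}^{\ast,P}(\theta_{s}\,ds+dW_{s}^{1})$ with $\pi^{\ast,P}$ as in (\ref{Dpi*}). Substituting the three conjugacy identities above into the primal drift coefficient then gives
\[
-\frac{U_{x}(s,X_{s}^{P}+Y_{s}^{P})\theta_{s}+\alpha_{x}^{1}(s,X_{s}^{P}+Y_{s}^{P})}{U_{xx}(s,X_{s}^{P}+Y_{s}^{P})}-Z_{s}^{P,1}=\tilde{U}_{zz}(s,D_{s}^{P})D_{s}^{P}\theta_{s}-\tilde{\alpha}_{z}^{1}(s,D_{s}^{P})-\tilde{Z}_{s}^{P,1}=\pi_{s}^{\ast,P},
\]
so the forward SDE of (\ref{FBSDEsystem}) holds. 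Evaluating at $s=t$, where $D_{t}^{P}=\eta$, produces the initial value $X_{t}^{P}=-\tilde{U}_{z}(t,\eta)-\tilde{Y}_{t}^{P}=\hat{\xi}$, matching (\ref{Dxi*}).

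Since $Y^{P}:=\tilde{Y}^{P}$, $Z^{P}:=\tilde{Z}^{P}$, and the terminal value $P$ is common to both FBSDEs, it remains only to check that the two backward drivers coincide; this is the step I expect to be the main obstacle. The $-Z^{P,1}\theta$ terms are identical. For the second-order term, (\ref{dr5})–(\ref{dr5'}) give $U_{xxx}/U_{xx}=\tilde{U}_{zzz}/(\tilde{U}_{zz})^{2}$ at the conjugate points, reproducing the $\tilde{U}_{zzz}|\tilde{Z}^{P,2}|^{2}/|\tilde{U}_{zz}|^{2}$ term. The genuinely laborious piece is the $\alpha_{xx}^{2}$ term: differentiating the volatility relation (\ref{DDA}) in $z$ and using $\tilde{U}_{zz}=-1/U_{xx}$ yields
\[
\frac{\alpha_{xx}^{2}(s,X_{s}^{P}+Y_{s}^{P})}{U_{xx}(s,X_{s}^{P}+Y_{s}^{P})}=-\frac{\tilde{\alpha}_{zz}^{2}(s,D_{s}^{P})}{\tilde{U}_{zz}(s,D_{s}^{P})}+\frac{\tilde{\alpha}_{z}^{2}(s,D_{s}^{P})\tilde{U}_{zzz}(s,D_{s}^{P})}{|\tilde{U}_{zz}(s,D_{s}^{P})|^{2}},
\]
which, multiplied by $Z^{P,2}=\tilde{Z}^{P,2}$, reproduces exactly the third driver term of (\ref{DFBSDEsystem}). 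Hence the drivers agree and $(X^{P},Y^{P},Z^{P})$ solves (\ref{FBSDEsystem}) with data $(\hat{\xi},P)$; along the way I would record that $Z^{P}=\tilde{Z}^{P}\in\mathbb{L}_{BMO}^{2}[t,T]$ and $\hat{\xi}\in\cap_{p\geq1}L^{p}(\mathcal{F}_{t})$ are precisely the standing hypotheses.

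Part (ii) I would prove by the same template with the roles of the two FBSDEs interchanged. Setting $D^{P}:=U_{x}(\cdot,X^{P}+Y^{P})$, relation (\ref{dr1}) gives $-\tilde{U}_{z}(s,D_{s}^{P})=X_{s}^{P}+Y_{s}^{P}$, $D^{P}>0$ automatically (as $U_{x}>0$), and $D_{t}^{P}=U_{x}(t,\xi+Y_{t}^{P})=\hat{\eta}$ from (\ref{eta}). For the forward equation of (\ref{DFBSDEsystem}) I invoke the dynamics (\ref{2402}) of $U_{x}(\cdot,X^{P}+Y^{P})$ used for Lemma \ref{NM}, namely $dD_{s}^{P}=-D_{s}^{P}(\theta_{s}\,dW_{s}^{1}+q_{s}^{\ast,P}\,dW_{s}^{2})$ with $q^{\ast,P}$ from (\ref{qstar}); the same conjugacy identities then show $D_{s}^{P}q_{s}^{\ast,P}=(\tilde{Z}_{s}^{P,2}+\tilde{\alpha}_{z}^{2}(s,D_{s}^{P}))/\tilde{U}_{zz}(s,D_{s}^{P})$, the required diffusion coefficient. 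The backward equation follows as before, because $\tilde{Y}^{P}:=Y^{P}$, $\tilde{Z}^{P}:=Z^{P}$, the terminal condition is $P$, and the driver identity of the previous paragraph read in reverse shows the two drivers coincide. As in (i), the only real obstacle is the bookkeeping in the $\alpha_{xx}^{2}$ / $\tilde{\alpha}_{zz}^{2}$ term; everything else is an immediate consequence of the conjugacy relations together with Lemmas \ref{UZ} and \ref{NM}.
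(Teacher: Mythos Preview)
Your proposal is correct and follows essentially the same approach as the paper, which states only that the proof ``follows easily by (\ref{2402}) and (\ref{UZpi}) using the duality relations (\ref{dr1}), (\ref{dr5}), (\ref{dr5'}) and (\ref{DDA}).'' You have simply unpacked this one-line sketch: Lemma~\ref{UZ} (i.e., (\ref{UZpi})) handles the forward equation in part~(i), equation~(\ref{2402}) handles the forward equation in part~(ii), and the matching of the backward drivers via the conjugacy identities---including the careful bookkeeping for the $\alpha_{xx}^{2}$ term obtained by differentiating (\ref{DDA})---is exactly the computation the paper is alluding to.
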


The proof follows easily by (\ref{2402}) and\ (\ref{UZpi}) using the duality
relations (\ref{dr1}), (\ref{dr5}), (\ref{dr5'}) and (\ref{DDA}).

\begin{remark}
\emph{When both the primal and dual FBSDEs both admit a unique solution, there
is a bijection between their solutions. We refer the reader to Section
\ref{Example} for some popular cases of solvable FBSDEs, for which existence
and uniqueness hold.}
\end{remark}

\subsection{Maturity independence of the value functions in forward
performance criteria with random endowment\label{Maturity Independence}}

We set the ground for the upcoming notion of forward OCE by pointing out the
fundamental maturity-independence property of the value functions $u^{P}%
(t,\xi;T)$ and $\tilde{u}^{P}(t,\eta;T)$. Recall the primal problem (\ref{P}),
rewritten below for convenience, in a slightly different form and with some
abuse of notation, $u^{P}(t,\xi;T)=u^{P_{T}}(t,\xi;T)$, where, for $0\leq
t\leq T\leq T^{\prime}$,
\begin{equation}
u^{P_{T}}\left(  t,\xi;T^{\prime}\right)  =\operatorname*{esssup}_{\pi
\in \mathcal{A}_{[t,T^{\prime}]}}\mathbb{E}\left[  \left.  U\left(  T^{\prime
},\xi+\int_{t}^{T^{\prime}}\pi_{r}\left(  \theta_{r}dr+dW_{r}^{1}\right)
+P_{T}\right)  \right \vert \mathcal{F}_{t}\right]  . \label{PP}%
\end{equation}
We essentially parameterize the value function by the time $T$ that the random
endowment $P_{T}$ arrives as well as by $T^{\prime}$ at which we set the
forward performance criterion. Problem (\ref{PP}) can be then thought as a
classical expected utility maximization problem in $[t,T^{\prime}]$ with
terminal random utility $U(T^{\prime},x)$ and random endowment $P_{T}$
received at time $T$. We now claim that $u^{P_{T}}\left(  t,\xi;T^{\prime
}\right)  $ is independent of $T^{\prime}$ in the sense that the
horizon-invariance/maturity-independence property
\[
{u}^{P_{T}}(t,\xi;T)={u}^{P_{T}}(t,\xi;T^{\prime})
\]
holds. Note that the above property cannot hold for any $T^{\prime}>T$ in the
classical setting due to the pre-chosen, arbitrary but fixed, terminal horizon
at which the (static) utility is allocated. \emph{Under the self-generation
properties of $\tilde{U}$, we have the following general result.}

\begin{proposition}
\label{MI}\emph{Let $0<T\leq T^{\prime}$, $P_{T}\in L^{\infty}(\mathcal{F}%
_{T})$ and $0\leq t\leq \emph{T}$. Then, for $\eta \in L^{0,+}(\mathcal{F}_{t}%
)$,
\begin{equation}
\tilde{u}^{P_{T}}\left(  t,\eta;T\right)  =\tilde{u}^{P_{T}}\left(
t,\eta;T^{\prime}\right)  .\label{MI1}%
\end{equation}
If, furthermore, FBSDE (\ref{FBSDEsystem}) admits a solution $(X^{P_{T}%
},Y^{P_{T}},Z^{P_{T}})$ on $[t,T]$ with initial-terminal condition $(\xi
,P_{T})$, $\xi \in \cap_{p\geq1}L^{p}(\mathcal{F}_{t})$, satisfying $Z^{P_{T}%
,i}\in \mathbb{L}_{BMO}^{2}[t,T]$, $i=1,2$, then
\begin{equation}
u^{P_{T}}\left(  t,\xi;T\right)  =u^{P_{T}}\left(  t,\xi;T^{\prime}\right)
.\label{MI2}%
\end{equation}
}
\end{proposition}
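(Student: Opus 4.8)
The plan is to prove the dual identity (\ref{MI1}) first, directly from the self-generation property of $\tilde U$ established in Proposition \ref{SG}, and then to deduce the primal identity (\ref{MI2}) by combining (\ref{MI1}) with convex duality, using the assumed solvability of the primal FBSDE on $[t,T]$ to close the gap at the horizon $T$. The whole argument rests on splitting the control interval at the arrival time $T$ of the endowment and exploiting that on $[T,T']$ no further endowment is present, so the pure self-generation results apply.

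For (\ref{MI1}), the starting point is the factorization $M_{T'}^{t,q}=M_{T}^{t,q}\,M_{T'}^{T,q}$, valid for any $q\in\mathcal{Q}_{[t,T']}$, together with conditioning on $\mathcal{F}_{T}$. Writing $\eta':=\eta M_{T}^{t,q}\in L^{0,+}(\mathcal{F}_{T})$, the key observation is that the endowment term decouples from the tail control: since $P_{T}$ and $\eta'$ are $\mathcal{F}_{T}$-measurable and $M^{T,q}$ is a true martingale with $M_{T}^{T,q}=1$, one has $\mathbb{E}[\eta M_{T'}^{t,q}P_{T}\mid\mathcal{F}_{T}]=\eta'P_{T}\,\mathbb{E}[M_{T'}^{T,q}\mid\mathcal{F}_{T}]=\eta'P_{T}$, independently of $q|_{[T,T']}$. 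Hence the inner $[T,T']$-optimization acts only on the $\tilde U(T',\cdot)$ term, and the self-generation identity (\ref{SG1}) collapses $\operatorname*{essinf}_{q'\in\mathcal{Q}_{[T,T']}}\mathbb{E}[\tilde U(T',\eta'M_{T'}^{T,q'})\mid\mathcal{F}_{T}]$ to $\tilde U(T,\eta')$. The two inequalities then follow symmetrically: the lower bound in (\ref{SG1}) gives $\tilde u^{P_{T}}(t,\eta;T')\ge\tilde u^{P_{T}}(t,\eta;T)$ for every $q\in\mathcal{Q}_{[t,T']}$ after taking the outer $\operatorname*{essinf}$ over $q|_{[t,T]}\in\mathcal{Q}_{[t,T]}$, while the reverse follows by concatenating an arbitrary $q\in\mathcal{Q}_{[t,T]}$ with a (near-)optimal control on $[T,T']$ from (\ref{SG2}).

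For (\ref{MI2}) I would use duality rather than a direct dynamic-programming split. General weak duality holds for any horizon $S$: from the Fenchel--Young inequality $U(S,x)\le\tilde U(S,z)+xz$ applied with $x=X_{S}^{\pi}+P_{T}$, where $X_{S}^{\pi}:=\xi+\int_{t}^{S}\pi_{r}(\theta_{r}dr+dW_{r}^{1})$, and $z=\eta M_{S}^{t,q}$, together with Lemma \ref{XZmart}, which yields $\mathbb{E}[X_{S}^{\pi}\eta M_{S}^{t,q}\mid\mathcal{F}_{t}]=\xi\eta$, one obtains $u^{P_{T}}(t,\xi;S)\le\tilde u^{P_{T}}(t,\eta;S)+\xi\eta$ for every $\eta\in L^{0,+}(\mathcal{F}_{t})$. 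Taking $S=T'$, the $\operatorname*{essinf}$ over $\eta$, and then invoking (\ref{MI1}) gives $u^{P_{T}}(t,\xi;T')\le\operatorname*{essinf}_{\eta}\big(\tilde u^{P_{T}}(t,\eta;T)+\xi\eta\big)$; the assumed FBSDE solvability on $[t,T]$ together with the bidual relation of Theorem \ref{DO1}(ii) identifies the right-hand side with $u^{P_{T}}(t,\xi;T)$, so $u^{P_{T}}(t,\xi;T')\le u^{P_{T}}(t,\xi;T)$. For the reverse inequality I would use the self-generation property (\ref{FPP}) of $U$ itself: for any $\pi\in\mathcal{A}_{[t,T]}$, extending it on $[T,T']$ by the forward-optimal feedback (\ref{FPPpistar}) started from $X_{T}^{\pi}+P_{T}$ (admissible by Assumption \ref{Assump}(ii)) gives continuation value $\mathbb{E}[U(T',X_{T}^{\pi}+P_{T}+\int_{T}^{T'}\pi^{\ast}_{r}(\theta_{r}dr+dW_{r}^{1}))\mid\mathcal{F}_{T}]=U(T,X_{T}^{\pi}+P_{T})$; concatenating and taking $\operatorname*{esssup}$ over $\pi\in\mathcal{A}_{[t,T]}$ yields $u^{P_{T}}(t,\xi;T')\ge u^{P_{T}}(t,\xi;T)$.

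The main obstacle is not the algebra but the functional-analytic bookkeeping at the junction time $T$. One must verify that concatenating admissible controls across $T$ preserves the BMO admissibility defining $\mathcal{Q}_{[t,T']}$ and $\mathcal{A}_{[t,T']}$; that the intermediate state $X_{T}^{\pi}+P_{T}$ lies in $\cap_{p\ge1}L^{p}(\mathcal{F}_{T})$ so that (\ref{FPP}) and Proposition \ref{SG} genuinely apply (this uses the energy/John--Nirenberg moment bounds for BMO stochastic integrals together with $P_{T}\in L^{\infty}(\mathcal{F}_{T})$); and that the $\operatorname*{essinf}$ and $\operatorname*{esssup}$ may be interchanged with conditioning on $\mathcal{F}_{T}$ through a near-optimal measurable selection. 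It is precisely here that the explicit attainment provided by the FBSDE solvability on $[t,T]$, and the admissibility of the candidate optimal controls guaranteed by Assumption \ref{Assump}(ii), are essential to make the splitting rigorous.
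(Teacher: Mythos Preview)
Your proposal is correct and follows essentially the same route as the paper. For (\ref{MI1}) your argument is identical: factorize $M_{T'}^{t,q}=M_{T}^{t,q}M_{T'}^{T,q}$, condition on $\mathcal{F}_{T}$, decouple the endowment term via the martingale property of $M^{T,q}$, and invoke the self-generation of $\tilde U$ (Proposition~\ref{SG}) for both inequalities.

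For (\ref{MI2}) there is a minor presentational difference worth noting. The paper encodes your ``extend $\pi$ on $[T,T']$ by the forward-optimal feedback started from $X_{T}^{\pi}+P_{T}$'' directly as an extension of the FBSDE solution: set $\bar Y^{P}\equiv P_{T}$, $\bar Z^{P}\equiv 0$ on $(T,T']$, and let $\bar X^{P}$ evolve by the pure forward SDE (\ref{SDE}) shifted by $P_{T}$. With this extended solution in hand the paper applies the bidual identity of Theorem~\ref{DO1}(ii) at \emph{both} horizons $T$ and $T'$, obtaining
\[
u^{P_{T}}(t,\xi;T')=\operatorname*{essinf}_{\eta}\big(\tilde u^{P_{T}}(t,\eta;T')+\xi\eta\big)
=\operatorname*{essinf}_{\eta}\big(\tilde u^{P_{T}}(t,\eta;T)+\xi\eta\big)
=u^{P_{T}}(t,\xi;T)
\]
in one line. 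You instead split into two inequalities, using weak duality (which needs no FBSDE) plus (\ref{MI1}) plus the bidual on $[t,T]$ for ``$\le$'', and the primal self-generation (\ref{FPP}) for ``$\ge$''. Both are valid; the paper's version is more symmetric, while yours makes the role of the FBSDE hypothesis slightly more transparent (it is needed only once, at horizon $T$). The admissibility and integrability checks you flag at the junction time are exactly the ones the paper handles implicitly through the FBSDE extension and Assumption~\ref{Assump}(ii).
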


Reverting to $u^{P}\left(  t,\xi;T\right)  $, i.e., when $T=T^{\prime}$, we
see that the dependence on $T$ is generated exclusively by the fact that this
is the arrival time of the random endowment. This then motives us to introduce
the following notion of \textit{endowment maturity} which will be useful in
building a universal framework across all times. Consider random endowments in
the general space $L$ defined by%
\[
L:=\cup_{T\geq0}L^{\infty}(\mathcal{F}_{T}).
\]
For any random endowment $P\in L$, we define its \textit{maturity} by%
\begin{equation}
T_{P}:=\inf \left \{  T\geq0:P\text{ is }\mathcal{F}_{T}\text{-measurable}%
\right \}  . \label{maturity}%
\end{equation}
In turn, the maturity independent value functions $u^{P}$ and $\tilde{u}^{P}$
can be defined as%
\begin{align}
u^{P}\left(  t,\xi \right)   &  :=u^{P}\left(  t,\xi;T_{P}\right) \nonumber \\
&  =\operatorname*{esssup}_{\pi \in \mathcal{A}_{[t,T_{P}]}}\mathbb{E}\left[
\left.  U\left(  T_{P},\xi+\int_{t}^{T_{P}}\pi_{r}\left(  \theta_{r}%
dr+dW_{r}^{1}\right)  +P\right)  \right \vert \mathcal{F}_{t}\right]
,\quad0\leq t\leq T_{P}, \, \, \xi \in \cap_{p\geq1}L^{p}(\mathcal{F}%
_{t})\text{,} \label{P1}%
\end{align}
and
\begin{align}
\tilde{u}^{P}\left(  t,\eta \right)  :=  &  \  \tilde{u}^{P}\left(  t,\eta
;T_{P}\right) \nonumber \\
=  &  \operatorname*{essinf}_{q\in \mathcal{Q}_{[t,T_{P}]}}\mathbb{E}\left[
\left.  \tilde{U}\left(  T_{P},\eta M_{T_{P}}^{t,q}\right)  +\eta M_{T_{P}%
}^{t,q}P\right \vert \mathcal{F}_{t}\right]  ,\quad0\leq t\leq T_{P}, \, \,
\eta \in L^{0,+}(\mathcal{F}_{t}). \label{P2}%
\end{align}

The above property essentially allows us to define the value functions
$u^{P}(t,\xi)$ and $\tilde{u}^{P}(t,\eta)$ no matter what the maturity of the
random endowment is. We stress that this does \textit{not} imply that
$u^{P}(t,\xi)$ and $\tilde{u}^{P}(t,\eta)$ do not depend on $T_{P}$, the time
when the endowment arrives, an obviously wrong statement. Rather, it expresses
how problems (\ref{P1}) and (\ref{P2}) can be well-defined for \textit{all}
times and \textit{all} endowments using the flexible forward performance
framework. Once more, note that in the classical framework this cannot be done
as the entire optimization problem is tied down to the a priori chosen
terminal horizon. We comment further on this in Section \ref{sec: conclusion}.

This \textquotedblleft maturity-independent" construction was first developed
in \cite{ZZ2010} and, later in \cite{CHLZ2019}\ for forward entropic risk
measures. Herein, it will play a fundamental role in the new notion of forward
optimized certainty equivalent we introduce and develop in Section
\ref{Application}.

\section{\emph{Solving the primal and dual FBSDEs with random endowment}%
\label{Example}}

We consider representative examples which give rise to solvable primal and
dual FBSDEs.

\subsection{Complete market\label{Complete Market}}

\emph{We start with the complete market case, where both the primal FBSDE
(\ref{FBSDEsystem}) and the dual FBSDE (\ref{DFBSDEsystem}) simplify
considerably. In particular, the dual FBSDE (\ref{DFBSDEsystem}) admits a
closed-form solution.} To this end, assume that the Brownian motion $W$ is
one-dimensional and that $\mathbb{F}=(\mathcal{F}_{t})_{t\geq0}$ is generated
by $W$. Hence, the controlled wealth state equation (\ref{wealth}) becomes
\[
dX_{t}^{\pi}=\pi_{t}\left(  \theta_{t}dt+dW_{t}\right)  ,\quad t\geq0,\text{
}X_{0}^{\pi}=x\in \mathbb{R}\text{.}%
\]
Due to market completeness, there exists a \textit{unique} state price density
process given by
\[
\mathcal{E}\left(  -%
{\textstyle \int}
\theta dW\right)  _{t}=\exp \left(  -\tfrac{1}{2}%
{\textstyle \int_{0}^{t}}
\left \vert \theta_{r}\right \vert ^{2}dr-%
{\textstyle \int_{0}^{t}}
\theta_{r}dW_{r}\right)  ,\quad t\geq0\text{,}%
\]
where we recall that the market price of risk $\theta$ is a bounded
$\mathbb{F}$-progressively measurable process.

For the primal FBSDE (\ref{FBSDEsystem}), since $W^{2}\equiv0$, the solution
component $Z^{P,2}$ vanishes and thus (\ref{FBSDEsystem}) reduces to a
classical BSDE, namely, for $0\leq t\leq s\leq T$,%
\begin{equation}
\left \{
\begin{array}
[c]{l}%
X_{s}^{P}=\xi-%
{\displaystyle \int_{t}^{s}}
\left(  \dfrac{U_{x}\left(  r,X_{r}^{P}+Y_{r}^{P}\right)  \theta_{r}%
+\alpha_{x}\left(  r,X_{r}^{P}+Y_{r}^{P}\right)  }{U_{xx}\left(  r,X_{r}%
^{P}+Y_{r}^{P}\right)  }+Z_{r}^{P}\right)  \left(  \theta_{r}dr+dW_{r}\right)
,\\
Y_{s}^{P}=P-%
{\displaystyle \int_{s}^{T}}
Z_{r}^{P}\theta_{r}dr-%
{\displaystyle \int_{s}^{T}}
Z_{r}^{P}dW_{r}.
\end{array}
\right.  \label{comFBSDE}%
\end{equation}
We easily see that it admits a unique solution $(Y^{P},Z^{P})$ and that
$Y^{P}$ is represented, for $t\leq s\leq T$, as
\begin{equation}
Y_{s}^{P}=\mathbb{E}\left[  \left.  \mathcal{E}\left(  -%
{\textstyle \int}
\theta dW\right)  _{T}^{s}P\right \vert \mathcal{F}_{s}\right]  =\mathbb{E}%
^{\mathbb{Q}^{0}}\left[  P|\mathcal{F}_{s}\right]  \quad \text{with}%
\quad \mathcal{E}\left(  -%
{\textstyle \int}
\theta dW\right)  _{T}^{s}:=\frac{\mathcal{E}(-\int \theta dW)_{T}}%
{\mathcal{E}(-\int \theta dW)_{s}}, \label{EY}%
\end{equation}
where $\mathbb{Q}^{0}$ is the unique equivalent martingale measure given by%
\[
\left.  \frac{d\mathbb{Q}^{0}}{d\mathbb{P}}\right \vert _{\mathcal{F}_{T}%
}=\mathcal{E}\left(  -%
{\textstyle \int}
\theta dW\right)  _{T}.
\]
We also have that $Z^{P}\in \mathbb{L}_{BMO}^{2}[t,T]$, as it follows by a
priori estimates for (\ref{comFBSDE}) (see, for example, \cite[Section
7.2]{ZBOOK}). Hence, if $X^{P}$ is the solution to the SDE in (\ref{comFBSDE})
with $Y^{P}$ in (\ref{EY}) and $Z_{s}^{P}=\frac{d \langle Y^{P},W \rangle_{s}%
}{ds}$, $t\leq s\leq T$, we obtain, by Theorem \ref{FBSDE1} (ii), that the
control process
\[
\pi_{s}^{\ast,P}:=-\dfrac{U_{x}\left(  s,X_{s}^{P}+Y_{s}^{P}\right)
\theta_{s}+\alpha_{x}\left(  s,X_{s}^{P}+Y_{s}^{P}\right)  }{U_{xx}\left(
s,X_{s}^{P}+Y_{s}^{P}\right)  }-Z_{s}^{P},\quad t\leq s\leq T,
\]
is optimal for the primal problem (\ref{P}). Furthermore, by Remark \ref{DO2},
we have that $q^{\ast,P}\equiv0$ is the optimal control process for the dual
problem (\ref{CP}). \emph{Note that although the forward equation and the
backward equation in the primal FBSDE (\ref{FBSDEsystem}) become
\textit{decoupled}, as shown in (\ref{comFBSDE}), with $(Y^{P},Z^{P})$ given
explicitly, the solvability of the SDE in (\ref{comFBSDE}) is not
straightforward, This is where the dual formulation proves helpful.}

Since the market is complete, the dual problem is, as expected, much simpler.
Indeed, for the dual FBSDE (\ref{DFBSDEsystem}), the solution component
$\tilde{Z}^{P,2}$ vanishes, and thus (\ref{DFBSDEsystem}) reduces, for $0\leq
t\leq s\leq T$, to
\begin{equation}
\left \{
\begin{array}
[c]{l}%
D_{s}=\eta-%
{\displaystyle \int_{t}^{s}}
D_{r}\theta_{r}dW_{r},\\
\tilde{Y}_{s}^{P}=P-%
{\displaystyle \int_{s}^{T}}
\tilde{Z}_{r}^{P}\theta_{r}dr-%
{\displaystyle \int_{s}^{T}}
\tilde{Z}_{r}^{P}dW_{r}.
\end{array}
\right.  \label{3001}%
\end{equation}
Note here $D$ is independent of $P$. It is then straightforward to check that
there exists a unique solution given by
\[
D_{s}=\eta \mathcal{E}\left(  -%
{\textstyle \int}
\theta dW\right)  _{s}^{t},\text{ \ }\tilde{Y}_{s}^{P}=Y_{s}^{P}%
=\mathbb{E}^{\mathbb{Q}^{0}}\left[  P|\mathcal{F}_{s}\right]  ,\text{ }\quad
t\leq s\leq T,
\]
and $\tilde{Z}^{P}=Z^{P}\in \mathbb{L}_{BMO}^{2}[t,T]$. Therefore, Theorem
\ref{DFBSDE1} (ii) yields that $q^{\ast,P}\equiv0$ is the optimal control
process for the dual problem (\ref{CP}).

We now solve the SDE in (\ref{comFBSDE}) using the dual formulation. For any
$\xi \in \cap_{p\geq1}L^{p}(\mathcal{F}_{t})$, Remark \ref{DOP2} yields that the
control process
\[
\pi_{s}^{\ast,P}:=\tilde{U}_{zz}\left(  s,\hat{\eta}\mathcal{E}\left(  -%
{\textstyle \int}
\theta dW\right)  _{s}^{t}\right)  \hat{\eta}\mathcal{E}\left(  -%
{\textstyle \int}
\theta dW\right)  _{s}^{t}\theta_{s}-\tilde{\alpha}_{z}\left(  s,\hat{\eta
}\mathcal{E}\left(  -%
{\textstyle \int}
\theta dW\right)  _{s}^{t}\right)  -\tilde{Z}_{s}^{P},\quad0\leq t\leq s\leq
T,
\]
with $\hat{\eta}:=U_{x}(t,\xi+\mathbb{E}^{\mathbb{Q}^{0}}[P|\mathcal{F}_{t}%
])$, is optimal for the primal problem (\ref{P}). Furthermore, we deduce,
using Proposition \ref{R1}, that the optimal wealth $X^{\ast,P}$ is given by
\[
X_{s}^{\ast,P}=-\tilde{U}_{z}\left(  s,U_{x}\left(  t,\xi+\mathbb{E}%
^{\mathbb{Q}^{0}}\left[  P|\mathcal{F}_{t}\right]  \right)  \mathcal{E}\left(
-%
{\textstyle \int}
\theta dW\right)  _{s}^{t}\right)  -\mathbb{E}^{\mathbb{Q}^{0}}\left[
P|\mathcal{F}_{s}\right]  ,\quad0\leq t\leq s\leq T,
\]
solves the SDE in (\ref{comFBSDE}) with initial condition $X_{t}^{\ast,P}=\xi$.

\emph{Finally, we derive an explicit expression for the value function
$\tilde{u}^{P}$ of the dual problem. Recall that, $\tilde{U}$ satisfies the
self-generation property (\ref{SG1})-(\ref{SG2}), namely, $\tilde{U}%
(t,\eta)=\mathbb{E}[\tilde{U}(T,D_{T})|\mathcal{F}_{t}]$. Furthermore, we can
easily check by (\ref{3001}) that $D\tilde{Y}^{P}$ is a true $\mathbb{F}%
$-martingale. Thus, we obtain an explicit form of the value function of the
dual problem by Theorem \ref{DFBSDE1} (ii):%
\begin{equation}
\tilde{u}^{P}\left(  t,\eta;T\right)  =\mathbb{E}\left[  \tilde{U}\left(
T,D_{T}\right)  +D_{T}P|\mathcal{F}_{t}\right]  =\tilde{U}\left(
t,\eta \right)  +\eta \tilde{Y}_{t}^{P}=\tilde{U}\left(  t,\eta \right)
+\eta \mathbb{E}^{\mathbb{Q}^{0}}\left[  P|\mathcal{F}_{t}\right]
\text{.}\label{3002}%
\end{equation}
}

\subsection{Stochastic factor model and exponential forward
criteria\label{expfpp}}

We analyze the popular exponential forward performance criteria in an
incomplete market driven by a single stochastic factor. In the absence of
random endowment, this class was studied in \cite{LZ2017} where an ergodic
BSDE was developed. A by-product of the results below is recovery of the
results in \cite{LZ2017} and \cite{CHLZ2019} as special cases.

To this end, we recall the single stochastic factor model, as in \cite{LZ2017}
and \cite{CHLZ2019}, which assumes that the discounted stock price process
follows
\[
\frac{dS_{t}}{S_{t}}=\mu \left(  V_{t}\right)  dt+\sigma \left(  V_{t}\right)
dW_{t}^{1},\quad S_{0}=S>0,
\]
where $\mu$ and $\sigma>0$ are deterministic functions. The stochastic factor
solves
\begin{equation}
dV_{t}=\emph{l\left(  V_{t}\right)  }dt+\left(  \rho dW_{t}^{1}+\sqrt
{1-\rho^{2}}dW_{t}^{2}\right)  ,\quad t\geq0,\quad V_{0}=v\in \mathbb{R}%
\text{,}\quad \rho \in \left[  0,1\right]  . \label{sf}%
\end{equation}
The controlled wealth state equation (\ref{wealth}) then becomes
\begin{equation}
dX_{t}^{\pi}=\pi_{t}\left(  \theta \left(  V_{t}\right)  ds+dW_{t}^{1}\right)
\text{,}\quad t\geq0\text{, \ }X_{0}^{\pi}=x\in \mathbb{R}\text{,} \label{sfw}%
\end{equation}
with $\theta \left(  v\right)  :=\frac{\mu \left(  v\right)  }{\sigma \left(
v\right)  }$. We assume that, for $l,\theta:\mathbb{R\rightarrow R}$,

\begin{enumerate}
\item[(i)] there exists a large enough constant $C>0$ such that
\[
\left(  l\left(  v\right)  -l\left(  v^{\prime}\right)  \right)  \left(
v-v^{\prime}\right)  \leq-C\left \vert v-v^{\prime}\right \vert ^{2},
\]

\item[(ii)] $\theta$ is uniformly bounded and Lipschitz continuous.
\end{enumerate}

Due to the homothetic property of the exponential forward performance process
and the single stochastic factor Markovian setup, SPDE (\ref{FSPDE})
simplifies to an ergodic BSDE, derived in \cite{LZ2017}. Indeed, consider the
ergodic BSDE
\begin{equation}
dY_{t}^{e}=\left(  \theta \left(  V_{t}\right)  Z_{t}^{e,1}+\frac{1}%
{2}\left \vert \theta \left(  V_{t}\right)  \right \vert ^{2}-\frac{1}%
{2}\left \vert Z_{t}^{e,2}\right \vert ^{2}+\lambda t\right)  dt+\left(
Z_{t}^{e}\right)  ^{\top}dW_{t}, \label{expEBSDE}%
\end{equation}
for which we easily deduce that it admits a unique Markovian solution
$(Y_{t}^{e},Z_{t}^{e},\lambda)=(y^{e}(V_{t}),z^{e}(V_{t}),\lambda)$, $t\geq0$,
where $y^{e}:\mathbb{R}\rightarrow \mathbb{R}$ has at most linear growth and
$z^{e}:\mathbb{R}\rightarrow \mathbb{R}^{2}$ is bounded. Then, the process
given by%
\begin{equation}
U\left(  t,x\right)  =-e^{-\gamma x+Y_{t}^{e}-\lambda t},\quad(t,x)\in
\mathbb{D}\text{,\quad}\gamma>0, \label{expFPP}%
\end{equation}
is an exponential forward performance process. It provides a solution to SPDE
(\ref{FSPDE}) with drift and volatility choice as%
\begin{equation}
\beta \left(  t,x\right)  =\frac{1}{2}U\left(  t,x\right)  \left \vert
\theta \left(  V_{t}\right)  +Z_{t}^{e,1}\right \vert ^{2}\quad \text{and}%
\quad \alpha^{i}\left(  t,x\right)  =U\left(  t,x\right)  Z_{t}^{e,i},\text{
}i=1,2. \label{expFPPD}%
\end{equation}
Moreover, the optimal control process $\pi^{\ast}$ for (\ref{FPP}) is given by%
\[
\pi_{t}^{\ast}=\frac{\theta \left(  V_{t}\right)  +Z_{t}^{e,1}}{\gamma},\quad
t\geq0.
\]

It follows easily that $U$ in (\ref{expFPP}) satisfies both conditions in
Assumption \ref{Assump}. The ergodic BSDE (\ref{expEBSDE}) can be also used to
characterize the convex conjugate of $U$, which takes the form
\begin{equation}
\tilde{U}\left(  t,z\right)  =-\frac{z}{\gamma}+\frac{z}{\gamma}\ln \frac
{z}{\gamma}-\frac{z}{\gamma}\left(  Y_{t}^{e}-\lambda t\right)  ,\quad
(t,z)\in \mathbb{D}^{+}. \label{expCPU}%
\end{equation}
In turn, the primal problem (\ref{P}) and the dual problem (\ref{CP})
simplify, respectively, to, for $0\leq t\leq T$,%
\begin{equation}
u^{P}\left(  t,\xi;T\right)  =\operatorname*{esssup}_{\pi \in \mathcal{A}%
_{[t,T]}}\mathbb{E}\left[  \left.  -e^{-\gamma \left(  \xi+\int_{t}^{T}\pi
_{r}\left(  \theta \left(  V_{r}\right)  dr+dW_{r}^{1}\right)  +P\right)
+Y_{T}^{e}-\lambda T}\right \vert \mathcal{F}_{t}\right]  =e^{-\gamma \xi}%
u^{P}\left(  t,0;T\right)  , \label{expP1}%
\end{equation}
and{ }
\begin{align}
\tilde{u}^{P}\left(  t,\eta;T\right)  =  &  \operatorname*{essinf}%
_{q\in \mathcal{Q}_{[t,T]}}\mathbb{E}\left[  \left.  -\frac{\eta M_{T}^{t,q}%
}{\gamma}+\frac{\eta M_{T}^{t,q}}{\gamma}\ln \frac{\eta M_{T}^{t,q}}{\gamma
}-\frac{\eta M_{T}^{t,q}}{\gamma}\left(  Y_{T}^{e}-\lambda T\right)  +\eta
M_{T}^{t,q}P\right \vert \mathcal{F}_{t}\right] \label{expDP}\\
=  &  \eta \tilde{u}^{P}\left(  t,1;T\right)  +\frac{1}{\gamma}\eta \ln
\eta.\nonumber
\end{align}

For this class of exponential forward performance processes, the forward and
backward equations in both the primal FBSDE (\ref{FBSDEsystem}) and the dual
FBSDE (\ref{DFBSDEsystem}) become \textit{decoupled}. Specifically,
(\ref{FBSDEsystem}) reduces, for $0\leq t\leq s\leq T$, to
\begin{equation}
\left \{
\begin{array}
[c]{l}%
X_{s}^{P}=\xi+%
{\displaystyle \int_{t}^{s}}
\left(  \dfrac{\theta \left(  V_{r}\right)  +Z_{r}^{e,1}}{\gamma}-Z_{r}%
^{P,1}\right)  \left(  \theta \left(  V_{r}\right)  dr+dW_{r}^{1}\right)  ,\\
Y_{s}^{P}=P-%
{\displaystyle \int_{s}^{T}}
\left(  Z_{r}^{P,1}\theta \left(  V_{r}\right)  +\dfrac{\gamma}{2}\left \vert
Z_{r}^{P,2}\right \vert ^{2}-Z_{r}^{e,2}Z_{r}^{P,2}\right)  dr-%
{\displaystyle \int_{s}^{T}}
\left(  Z_{r}^{P}\right)  ^{\top}dW_{r}.
\end{array}
\right.  \label{BSDE}%
\end{equation}
Recalling Theorem 3.2 in \cite{CHLZ2019}, there exists a unique solution
$(Y^{P},Z^{P})$ with $Y^{P}$ being bounded and $Z^{P,i}\in \mathbb{L}_{BMO}%
^{2}[t,T],\ i=1,2$. \emph{We note that, in the exponential preference case,
unbounded random endowment $P$ can also be accommodated, provided it satisfies
$\mathbb{E}[e^{\gamma^{\prime}P}]<\infty$ for some $\gamma^{\prime}>\gamma$
(see \cite{BH2006,HLT2024}).}

By Theorem \ref{FBSDE1} (ii), the optimal control process for $u^{P}(t,\xi
;T)$, $0\leq t\leq T$, is given by%
\begin{equation}
\pi_{s}^{\ast,P}=\dfrac{\theta \left(  V_{s}\right)  +Z_{s}^{e,1}}{\gamma
}-Z_{s}^{P,1},\quad t\leq s\leq T. \label{expO}%
\end{equation}
Moreover, using Remark \ref{DO2}, we also derive the optimal control process
for the dual problem (\ref{CP}),
\begin{equation}
q_{s}^{\ast,P}=\gamma Z_{s}^{P,2}-Z_{s}^{e,2},\quad t\leq s\leq T.
\label{expDO}%
\end{equation}

The dual FBSDE (\ref{DFBSDEsystem}) reduces to the following decoupled FBSDE,
for $0\leq t\leq s\leq T$,
\[
\left \{
\begin{array}
[c]{l}%
D_{s}^{P}=\eta-%
{\displaystyle \int_{s}^{T}}
D_{r}^{P}\left(  \theta \left(  V_{r}\right)  dW_{r}^{1}+\left(  \gamma
\tilde{Z}_{r}^{P,2}-Z_{r}^{e,2}\right)  dW_{r}^{2}\right)  ,\\
\tilde{Y}_{s}^{P}=P-%
{\displaystyle \int_{s}^{T}}
\left(  \tilde{Z}_{r}^{P,1}\theta \left(  V_{r}\right)  +\dfrac{\gamma}%
{2}\left \vert \tilde{Z}_{r}^{P,2}\right \vert ^{2}-Z_{r}^{e,2}\tilde{Z}%
_{r}^{P,2}\right)  dr-%
{\displaystyle \int_{s}^{T}}
\left(  \tilde{Z}_{r}^{P}\right)  ^{\top}dW_{r}.
\end{array}
\right.
\]

Comparing with the BSDE in (\ref{BSDE}), we notice that $(\tilde{Y}^{P}%
,\tilde{Z}^{P})=(Y^{P},Z^{P})$ by the uniqueness of the solution. Then, from
Theorem \ref{DFBSDE1}, the optimal control process $q^{\ast,P}$ for (\ref{CP})
is given by
\[
q_{s}^{\ast,P}=\gamma \tilde{Z}_{s}^{P,2}-Z_{s}^{e,2},\quad0\leq t\leq s\leq
T,
\]
which aligns with (\ref{expDO}). Moreover, from Remark \ref{DOP2}, the optimal
control process $\pi^{\ast,P}$ for (\ref{P}) is given by
\[
\pi_{s}^{\ast,P}=\tfrac{\theta \left(  V_{s}\right)  +Z_{s}^{e,1}}{\gamma
}-\tilde{Z}_{s}^{P,1},\quad0\leq t\leq s\leq T,
\]
which aligns with (\ref{expO}).

Finally, we derive the explicit forms of the value functions $u^{P}\left(
t,\xi;T\right)  $ and $\tilde{u}^{P}\left(  t,\eta;T\right)  $, $0\leq t\leq
T$, using the explicit representations of $U$ in (\ref{expFPP}) and $\tilde
{U}$ in (\ref{expCPU}). For this, we first, observe that $U\left(  t,x\right)
=-\frac{1}{\gamma}U_{x}\left(  t,x\right)  $. In turn, from Theorem
\ref{FBSDE1} (ii), we have
\[
u^{P}\left(  t,\xi;T\right)  =\mathbb{E}\left[  \left.  U\left(  T,X_{T}%
^{P}+P\right)  \right \vert \mathcal{F}_{t}\right]  =-\frac{1}{\gamma
}\mathbb{E}\left[  \left.  U_{x}\left(  T,X_{T}^{P}+P\right)  \right \vert
\mathcal{F}_{t}\right]  .
\]
It follows by Lemma \ref{NM} that $U_{x}(s,X_{s}^{P}+Y_{s}^{P})$, $t\leq s\leq
T$, is a true $\mathbb{F}$-martingale, which implies that%
\[
u^{P}\left(  t,\xi;T\right)  =-\frac{1}{\gamma}U_{x}\left(  t,\xi+Y_{t}%
^{P}\right)  =U\left(  t,\xi+Y_{t}^{P}\right)  .
\]
Next, noting that $\tilde{U}\left(  t,z\right)  =-\frac{z}{\gamma}+z\tilde
{U}_{z}\left(  t,z\right)  $, we obtain by Theorem \ref{DOP1} (ii) that,
\begin{align}
\tilde{u}^{P}\left(  t,\eta;T\right)   &  =\mathbb{E}\left[  \tilde{U}\left(
T,D_{T}^{P}\right)  +D_{T}^{P}P\right]  =-\frac{\eta}{\gamma}+\mathbb{E}%
\left[  D_{T}^{P}\left(  \tilde{U}_{z}\left(  T,D_{T}^{P}\right)  +P\right)
\right] \nonumber \\
&  =-\frac{\eta}{\gamma}+\eta \left(  \tilde{U}\left(  t,\eta \right)
+\tilde{Y}_{t}^{P}\right)  =\tilde{U}\left(  t,\eta \right)  +\eta \tilde{Y}%
_{t}^{P}. \label{0304}%
\end{align}
\emph{Comparing (\ref{0304}) to (\ref{3002}), we observe that the last term in
(\ref{3002}), $\mathbb{E}^{\mathbb{Q}^{0}}\left[  P|\mathcal{F}_{t}\right]  $,
is now replaced by $\tilde{Y}_{t}^{P}$.}

\subsection{\emph{Stochastic factor model and general forward performance
criteria: decoupling field method}\label{Decoupling}}

By employing the decoupling field method used in \cite{FI2020}, we establish
the solvability of the primal FBSDE (\ref{FBSDEsystem}) for general forward
performance criteria in single stochastic factor models. The concepts of the
decoupling field are provided in Appendix \ref{DF}, and we refer to
\cite{Fthesis,FI2020} for further background on this method.

Consider the stochastic factor model introduced in previous subsection. As
shown in \cite[Section 6.2.4]{MZ2010}, the forward performance process is
associated with the (ill-posed) forward HJB equation
\begin{equation}
u_{t}-\frac{1}{2}\frac{\left \vert u_{x}\theta \left(  v\right)  +\rho
u_{xv}\right \vert ^{2}}{u_{xx}}+\frac{1}{2}u_{vv}+l\left(  v\right)
u_{v}=0,\quad t\geq0,\quad v,x\in \mathbb{R}, \label{HJB}%
\end{equation}
with a suitable initial condition $u(0,v,x)$. If $u$ solves (\ref{HJB}), then
the process $U(t,x):=u(t,V_{t},x)$ satisfies the forward performance SPDE
(\ref{FSPDE}) with the choice of forward volatility%
\[
\alpha^{1}\left(  t,x\right)  =\rho u_{v}\left(  t,V_{t},x\right)
\text{\  \ and\  \ }\alpha^{2}\left(  t,x\right)  =\sqrt{1-\rho^{2}}u_{v}\left(
t,V_{t},x\right)  .
\]
The forward HJB equation (\ref{HJB}) was investigated in
\cite{NZ2014,Nadtochiy_2017} where specific examples are provided.

For simplicity, we introduce the auxiliary notation%
\[
\varphi:=\frac{u_{x}}{u_{xx}},\text{ }\phi^{1}:=\frac{u_{xxx}}{u_{xx}},\text{
}\phi^{2}:=\frac{u_{xxv}}{u_{xx}}\text{\  \thinspace \ and \thinspace \ }%
\psi:=\dfrac{u_{x}\theta \left(  v\right)  +\rho u_{xv}}{u_{xx}}\text{.}%
\]

To align with Assumption \ref{Assump}, we introduce the following conditions
on the function $u(t,v,x)$.

\begin{condition}
\label{A.1}

\begin{enumerate}
\item[(i)] $u(t,v,x)\in C^{1,2,4}([0,\infty)\times \mathbb{R}\times \mathbb{R})$
and, for each $t\geq0$ and $v\in \mathbb{R}$, the mapping $x\mapsto u(t,v,x)$
is strictly increasing and strictly concave.

\item[(ii)] There exist positive constants $C_{l},C_{u}$ and $C_{\alpha}$ such
that $0<C_{l}\leq-\varphi \leq C_{u}$ and $|\frac{u_{xv}}{u_{xx}}|\leq
C_{\alpha}.$
\end{enumerate}
\end{condition}

Let $T>0$ be arbitrary but fixed. We consider a random endowment of the form
$P(V_{T},X_{T}^{\pi})$, where $P:\mathbb{R}\times \mathbb{R\rightarrow R}$ is
deterministic. In this case, for $0\leq s\leq T$, the primal FBSDE
(\ref{FBSDEsystem}) takes the form
\begin{equation}
\left \{
\begin{array}
[c]{l}%
X_{s}^{P}=x-%
{\displaystyle \int_{0}^{s}}
\left(  \psi \left(  r,V_{r},X_{r}^{P}+Y_{r}^{P}\right)  +Z_{r}^{P,1}\right)
\left(  \theta \left(  V_{r}\right)  dr+dW_{r}^{1}\right)  ,\\
Y_{s}^{P}=P\left(  V_{T},X_{T}^{P}\right)  -%
{\displaystyle \int_{s}^{T}}
\left(  Z_{r}^{P}\right)  ^{\top}dW_{r}\\
\quad \quad+%
{\displaystyle \int_{s}^{T}}
\left(  -Z_{r}^{P,1}\theta \left(  V_{r}\right)  +\dfrac{1}{2}\phi^{1}\left(
r,V_{r},X_{r}^{P}+Y_{r}^{P}\right)  \left \vert Z_{r}^{P,2}\right \vert
^{2}+\sqrt{1-\rho^{2}}\phi^{2}\left(  r,V_{r},X_{r}^{P}+Y_{r}^{P}\right)
Z_{r}^{P,2}\right)  dr,
\end{array}
\right.  \label{MarFBSDE}%
\end{equation}
where the superscript $P$ represents the function appearing the terminal
condition. We introduce the following additional assumptions.

\begin{condition}
\label{A.2}

\begin{enumerate}
\item[(i)] The functions $\phi^{i},\phi_{x}^{i},\phi_{v}^{i},$ $i=1,2,$ and
$\dfrac{u_{xvv}}{u_{xx}}$ are uniformly bounded, and $\phi_{x}^{1}<0$.

\item[(ii)] $P$ is uniformly bounded and Lipschitz continuous in both
components with $L_{P,x}<1$, where $L_{P,x}$ denotes the Lipschitz constant of
$P(v,x)$ with respect to the second component (see (\ref{LC})).

\item[(iii)] $\theta$ and $l$ are uniformly bounded, differentiable in $v$,
and with bounded derivatives.
\end{enumerate}
\end{condition}

\begin{remark}
We provide a simple example of $u$ satisfying the above conditions, based on
\cite[Lemma 1.2]{FI2020}. Let $\theta=0$ and consider $u(t,v,x)$ in the form
$u\left(  t,v,x\right)  =\tilde{u}\left(  x\right)  +F\left(  t,v\right)  $
with
\[
\tilde{u}\left(  x\right)  =-\int_{x}^{\infty}\int_{y}^{\infty}e^{-\kappa
\left(  z\right)  }dzdy,
\]
where $\kappa \in C^{2}(\mathbb{R})$ satisfies $0<\delta_{l}\leq \kappa^{\prime
},\kappa^{\prime \prime}\leq \delta_{u}$ for some constants $\delta_{l}%
,\delta_{u}$. Suppose that $F$ solves PDE
\[
F_{t}+\frac{1}{2}F_{vv}+l\left(  v\right)  F_{v}=0\text{,}%
\]
where $\frac{F_{v}}{F},\frac{F_{vv}}{F}$ are uniformly bounded. Then, by
(\cite[Lemma 1.2]{FI2020}), it follows that all the above assumptions on $u$
are satisfied. The solvability of the PDE for $F$ has been studied in detail
in \cite{Nadtochiy_2017} using Widder's theorem.
\end{remark}

We consider following cases:

\noindent \textbf{Case 1.} $\phi_{x}^{2}=0$ ($\phi_{x}^{2}$ denotes the partial
derivative of $\phi^{2}$ with respect to $x$),

\noindent \textbf{Case 2.} $\phi_{x}^{2}\neq0$, $\phi_{x}^{1}\leq-\delta$ for
some $\delta>0$. Furthermore, $T<\frac{1}{K}\ln \frac{2}{1+L_{P,x}}$, where
constant $K$ satisfies $\frac{1-\rho^{2}}{2}|\phi_{x}^{2}|^{2}\leq-K\phi
_{x}^{1}$.

\begin{theorem}
\label{Mar}Suppose that Assumptions \ref{A.1} and \ref{A.2} hold. Then, under
either Case 1 or Case 2, the primal FBSDE (\ref{MarFBSDE}) admits a unique
solution $(X^{P},Y^{P},Z^{P})$ with $Z^{P}$ being uniformly bounded.
\end{theorem}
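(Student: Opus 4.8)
The plan is to recast the coupled system consisting of the stochastic factor dynamics (\ref{sf}) and the primal FBSDE (\ref{MarFBSDE}) as a single fully-coupled Markovian FBSDE in the standard form required by \cite{FI2020}, and then apply the decoupling field machinery of Appendix \ref{DF}. I would augment the forward state to $(V,X^{P})\in\mathbb{R}^{2}$, keep $(Y^{P},Z^{P})$ as the backward pair, and read off the coefficients: the forward drift and diffusion are governed by $\psi(r,V_{r},X_{r}^{P}+Y_{r}^{P})+Z_{r}^{P,1}$ together with the autonomous factor coefficients $l,\theta$, while the backward generator is
\[
f(r,v,x,y,z)=-z^{1}\theta(v)+\tfrac{1}{2}\phi^{1}(r,v,x+y)|z^{2}|^{2}+\sqrt{1-\rho^{2}}\,\phi^{2}(r,v,x+y)z^{2},
\]
and the terminal condition is $g(v,x)=P(v,x)$. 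The first task is to verify the standard Lipschitz conditions: Assumptions \ref{A.1} and \ref{A.2} ensure that $\psi$ is Lipschitz in its spatial arguments (its derivatives being controlled by the bounds on $\varphi,\phi^{1},\phi^{2}$), that $l,\theta$ are Lipschitz, and that $g=P$ is Lipschitz with the crucial property $L_{P,x}<1$. The only nonstandard feature is the quadratic growth of $f$ in $z^{2}$ from the $\tfrac{1}{2}\phi^{1}|z^{2}|^{2}$ term, which is ultimately tolerated because the decoupling field bound will force $Z^{P}$ to be bounded.

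Next I would invoke the local theory of \cite{Fthesis,FI2020}, which under the standard Lipschitz conditions produces a unique Lipschitz decoupling field $u$ — characterized by $Y_{s}^{P}=u(s,V_{s},X_{s}^{P})$ — on a short interval $[T-h,T]$, with terminal datum $u(T,\cdot,\cdot)=P$. Because the problem is genuinely coupled and the generator is quadratic, this local field must be continued backward in time, and such continuation is possible precisely as long as the Lipschitz constant $L_{x}:=\sup_{t}\mathrm{Lip}_{x}\,u(t,\cdot,\cdot)$ remains finite. Existence and uniqueness on the whole of $[0,T]$ therefore reduce, by the continuation/gluing result of \cite{FI2020}, to an a priori bound on $L_{x}$ over $[0,T]$.

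This a priori bound is the heart of the argument and the main obstacle. Differentiating the FBSDE in its initial spatial condition yields a variational (tangent) FBSDE controlling $u_{x}$, whose analysis produces a Riccati-type differential inequality for an upper bound of $L_{x}(t)$, integrated backward from $L_{x}(T)=L_{P,x}$. The quadratic term contributes a factor proportional to $\phi_{x}^{1}$ and the cross term a factor proportional to $|\phi_{x}^{2}|^{2}$. In \textbf{Case 1}, where $\phi_{x}^{2}=0$, the sign condition $\phi_{x}^{1}<0$ renders the quadratic contribution dissipative, so $L_{x}$ cannot exceed $L_{P,x}$ and in particular stays below $1$ for \emph{every} horizon $T$. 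In \textbf{Case 2}, where $\phi_{x}^{2}\neq0$ introduces a destabilizing term, the hypothesis $\frac{1-\rho^{2}}{2}|\phi_{x}^{2}|^{2}\leq-K\phi_{x}^{1}$ lets the dissipation $\phi_{x}^{1}\leq-\delta$ dominate the cross term up to the constant $K$; the resulting estimate grows at most like $1+L_{x}(t)\leq e^{K(T-t)}(1+L_{P,x})$, so the smallness condition $T<\frac{1}{K}\ln\frac{2}{1+L_{P,x}}$ is exactly what keeps $1+L_{x}(t)\leq2$, i.e. $L_{x}<1$, throughout $[0,T]$.

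With $L_{x}$ bounded uniformly on $[0,T]$, the continuation result extends the local decoupling field to the whole interval, yielding the existence and uniqueness of the triple $(X^{P},Y^{P},Z^{P})$ solving (\ref{MarFBSDE}). Finally, the uniform bound on the spatial gradient of $u$, combined with the bounded diffusion coefficients of the forward state, gives through the representation of $Z^{P}$ in terms of $\nabla u(s,V_{s},X_{s}^{P})$ the asserted uniform boundedness of $Z^{P}$.
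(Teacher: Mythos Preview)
Your overall strategy matches the paper's: cast (\ref{MarFBSDE}) together with (\ref{sf}) as a Markovian FBSDE in the framework of \cite{FI2020}, apply the local decoupling-field theory, and globalize by controlling the spatial Lipschitz constant of the decoupling field. Your identification of the upper bound on $w_{x}$ is also essentially correct: in Case~1 the sign $\phi_{x}^{1}<0$ forces $\delta_{x+y}^{f}\geq 0$ and hence $1+w_{x}\leq 1+L_{P,x}$, while in Case~2 the hypothesis $\tfrac{1-\rho^{2}}{2}|\phi_{x}^{2}|^{2}\leq -K\phi_{x}^{1}$ gives $\delta_{x+y}^{f}\geq -K$ and the exponential estimate you wrote down. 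Two points, however, need more work.

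First, a technical gap: the raw system does not satisfy the MLLC conditions of Definition~\ref{Def_MLLC} because of the quadratic growth in $z^{2}$. The paper (following \cite{FI2020}) handles this by an $\varepsilon$-rescaling of the factor, $\tilde V=V/\varepsilon$, which places the problem in the form (\ref{abMLLC}) and allows Theorem~\ref{3.3} to be invoked; your sketch skips this step. Second, and more substantively, you only argue the \emph{upper} bound $w_{x}<1$. The continuation criterion requires $|w_{x}|<1$, i.e.\ also $w_{x}>-1$, and this lower bound does \emph{not} follow from the same dissipativity argument. In the paper it is obtained by a separate Girsanov change and, crucially, by deriving and exploiting the It\^{o} dynamics of $\varphi(r,\varepsilon\tilde V_{r},X_{r}+Y_{r})=u_{x}/u_{xx}$ along the solution: the two-sided bound $C_{l}\leq -\varphi\leq C_{u}$ from Assumption~\ref{A.1}(ii) is used to turn an estimate on $\mathbb{E}^{\hat{\mathbb Q}}[\int(-\varphi)\delta_{x+y}^{f}dr]$ into an upper bound on $\mathbb{E}^{\hat{\mathbb Q}}[\int\delta_{x+y}^{f}dr]$, which then yields $\ln(1+w_{x})\geq -C_{b}$. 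Your ``Riccati-type inequality'' for $L_{x}$ does not capture this asymmetry, and without the risk-tolerance bounds the lower estimate would fail.
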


Using the above result, we may, in turn, obtain the existence and uniqueness
of the associated dual FBSDE in the Markovian form, following the results in
Proposition \ref{R1}.

\section{Forward optimized certainty equivalent (Forward
OCE)\label{Application}}

The concept of optimized certainty equivalent (OCE) was first introduced by
Ben-Tal and Teboulle in \cite{BT1986} and yields a valuation criterion rooted
in expected utility theory. It is a \textit{static} criterion, defined through
the optimization problem,
\begin{equation}
S\left(  P\right)  :=\sup_{r\in \mathbb{R}}\left(  \mathbb{E}\left[  u\left(
P-r\right)  \right]  +r\right)  , \label{OCE}%
\end{equation}
where $u$ is the utility of the investor and $P$ is a claim/random endowment.
Essentially, the OCE represents the optimal split of funds if the investor has
the option to choose to spend a portion $r$ of the random endowment, and
receive the present value of this $r$ plus the expected utility of $P-r$.

\emph{Note that the optimization in (\ref{OCE}) balances the \textit{expected
utility satisfaction from the uncertain payoff}, $P - r$, with the
\textit{certainty of receiving $r$}. Since both terms are considered unitless,
their sum represents the combination of the expected utility from $P - r$
units of wealth and the present value of $r$ units of wealth. This
interpretation is consistent with the dual representation in (\ref{D1}), where
the first term corresponds to the expected value of $P$, and the second term
represents the penalty function, both of which are also unitless.}

One of the key results established by Ben-Tal and Teboulle in their followup
work \cite{BT2007} is the dual representation for OCE, which we recall next.
Let $\tilde{u}$ be the convex conjugate of $u$, $\tilde{u}\left(  z\right)
:=\sup_{x\in \mathbb{R}}(u\left(  x\right)  -xz)$. Then, the OCE of $P$ admits
the dual representation,
\begin{equation}
S\left(  P\right)  =\inf_{\mathbb{Q\in}\mathcal{Q}}\left(  \mathbb{E}%
^{\mathbb{Q}}\left[  P\right]  +I_{\tilde{u}}\left(  \mathbb{Q},\mathbb{P}%
\right)  \right)  , \label{D1}%
\end{equation}
where $\mathcal{Q}$ is the set of probability measures equivalent to
$\mathbb{P}$, and $I_{\tilde{u}}(\mathbb{Q},\mathbb{P})$ is the associated
penalty function, given by%
\[
I_{\tilde{u}}\left(  \mathbb{Q},\mathbb{P}\right)  :=\mathbb{E}\left[
\tilde{u}\left(  \frac{d\mathbb{Q}}{d\mathbb{P}}\right)  \right]  .
\]

In the context of risk measures, the quantity $\rho(P):=-S(P)$ turns out to be
a convex risk measure for $P$ and admits the dual representation%
\[
\rho \left(  P\right)  =\sup_{\mathbb{Q\in}\mathcal{Q}}\left(  \mathbb{E}%
^{\mathbb{Q}}\left[  -P\right]  -I_{\tilde{u}}\left(  \mathbb{Q}%
,\mathbb{P}\right)  \right)  ,
\]
also known as a divergence risk measure (see \cite{CK2007} amd
\cite{Rufloff2017}). See also \cite{Bartl2020, KMT2022} for its numerical computational aspects.

A challenging problem, which we attempt to address herein, is how to produce a
genuinely dynamic extension of OCE in (\ref{OCE}). Specifically, we are
interested in building a valuation framework that is viable for all claims and
all maturities and, furthermore, yields time-consistent OCE.

We face several difficulties here, both conceptual and technical. Firstly, it
is not clear how the static valuation rule (\ref{OCE}) should be modified
across times. The obvious choice to merely replace the utility $u$ by a
utility, say $U$, at a given future horizon $T$, will not work as it will
generate, to say the least, similar difficulties with the ones we face in the
random endowment and the indifference valuation settings. Among others, such a
framework will neither allow for valuation beyond $T$, nor will allow for
adaptive model revision. \emph{One may simply propose to work in an infinite
horizon to avoid horizon limitations. However, this will require to put very
stringent constraints on a pre-chosen market model in $[0,\infty)$ to maintain
the consistency as explained in the introduction.}

As mentioned in the Introduction, efforts have been made to build dynamic
extensions of OCE. Recently, Backhoff-Veraguas et al. (see \cite{BT2020} and
\cite{BRT2022}) proposed a dynamic version working with the convex dual and,
in addition, introduced an additional variable to guarantee time-consistency.
This approach can be compared with the convex dual representation of dynamic
risk measures in \cite{DPR2010}. However, the valuation machanism remains
bound to the fixed horizon $T$ as we mentioned earlier.

\subsection{Interpretation of OCE via convex dual of utility maximization with
random endowment}

Before we introduce the forward OCE, we provide a key observation which, to
the best of our knowledge, has not been employed so far. It interprets the
existing static OCE \textit{via the value function of the convex dual of a
utility maximization problem with random endowment}.

To this end, we observe that the OCE of a random endowment $P\in L^{\infty
}(\mathcal{F}_{T})$ can be rewritten as%
\begin{equation}
S\left(  P\right)  =\sup_{x\in \mathbb{R}}\left(  \mathbb{E}\left[  u\left(
x+P\right)  \right]  -x\right)  \quad \text{with}\quad x=-r. \label{2101}%
\end{equation}
\emph{This provides a new interpretation of OCE: the value $S(P)$ represents
the optimal allocation of wealth when the investor decides to deposit $x$ into
a savings account at present. Consequently, at maturity, she will receive an
additional amount $x$ along with the endowment $P$, and the corresponding
value is the expected utility of $x+P$.}

\emph{When $P $ is negative, this interpretation adjusts slightly. In this
case, $P $ can be seen as a liability or loss. The investor is effectively
considering how much to deposit into the savings account to offset this
negative endowment. The OCE still indicates the optimal allocation of wealth,
but it now reflects the investor's strategy to minimize the impact of the
negative endowment by determining how much additional capital $x $ is needed
to achieve a level of utility that compensates for the adverse effects of $P
$. }

Let us then consider the utility maximization problem with the random
endowment $P$,
\[
v^{P}\left(  x\right)  =\sup_{X_{T}\in L^{0}\left(  \mathcal{F}_{T}\right)
}\mathbb{E}\left[  u\left(  X_{T}+P\right)  \right]  ,
\]
under the constraint $\mathbb{E}^{\mathbb{Q}}\left[  X_{T}\right]  \leq x$ for
all equivalent martingale measures $\mathbb{Q}$, and $X_{T}$ being the
terminal wealth generated by trading strategies with initial wealth
$x\in \mathbb{R}$. If we assume that, for any $X_{T}\in L^{0}(\mathcal{F}%
_{T}),$
\begin{equation}
X_{T}\text{ is independent of }P\text{ and }\mathbb{E}[X_{T}]=x\text{, i.e.,
}\mathbb{P}\text{ itself is a martingale measure,} \label{a}%
\end{equation}

\noindent then, by Jensen's inequality, we have%
\[
\mathbb{E}\left[  u\left(  X_{T}+P\right)  \right]  =\mathbb{E}\left[  \left.
\mathbb{E}\left[  u\left(  X_{T}+p\right)  \right]  \right \vert _{p=P}\right]
\leq \mathbb{E}\left[  \left.  u\left(  \mathbb{E}\left[  X_{T}+p\right]
\right)  \right \vert _{p=P}\right]  =\mathbb{E}\left[  u\left(  x+P\right)
\right]  .
\]
In other words, we obtain that the optimal wealth is given by $X_{T}^{\ast}%
=x$, and the value function is given by
\[
v^{P}\left(  x\right)  =\mathbb{E}\left[  u\left(  x+P\right)  \right]  .
\]
Therefore, $S(P)$ can be expressed as the convex conjugate of $v^{P}$ at value
$1$,
\begin{equation}
S\left(  P\right)  =\sup_{x\in \mathbb{R}}\left(  v^{P}\left(  x\right)
-x\cdot1\right)  =:\left(  v^{P}\right)  ^{\ast}\left(  1\right)  , \label{MO}%
\end{equation}
where $(v^{P})^{\ast}$ represents the convex conjugate of the value function
$v^{P}$. In other words, $S(P)$ can be seen as the convex conjugate, evaluated
at value $1$, of a utility maximization problem involving $P$ as the random
endowment in an auxiliary financial market whose driving noise is orthogonal
to this random endowment. In this setting, the optimal policy involves holding
only the riskless asset, so that $X_{T}^{\ast}=x$.

\subsection{Forward OCE\label{Forward OCE}}

We are now ready to introduce the novel concept of forward OCE, which, to the
best of our knowledge, is new. The definition makes full use of the key
observation above for the representation (\ref{2101}) and (\ref{MO}) of the
OCE and employs the suitable primal and dual forward counterparts of the
static utility in (\ref{OCE}) and (\ref{D1}).

\emph{In the following definition of forward OCE, we will make use of an
auxiliary variable $\eta$, which can be naturally interpreted as a deflator,
serving to convert nominal values of current wealth into real values. A
similar concept has been explored in \cite{BT2020}, albeit without the context
of utility maximization. Denote the set of deflators at time $t\geq0$ by
\[
\mathcal{H}_{t}:=\{ \eta \in L^{0,+}(\mathcal{F}_{t})\text{, }\mathbb{E}%
\mathcal{[}\eta]=1\} \text{.}%
\]
}

\begin{definition}
\label{FOCE}Let $T>0$ be arbitrary and fixed.

\begin{enumerate}
\item[(i)] For $0\leq t\leq T$ and $\eta \in \mathcal{H}_{t}$, the forward OCE
of $P\in L^{\infty}(\mathcal{F}_{T})$ at time $t$ is defined by%
\begin{align*}
F\left(  t,P;\eta,T\right)   &  :=\operatorname*{esssup}_{\xi \in \cap_{p\geq
1}L^{p}(\mathcal{F}_{t})}\left(  u^{P}\left(  t,\xi;T\right)  -\xi \eta \right)
\\
&  =\operatorname*{esssup}_{\xi \in \cap_{p\geq1}L^{p}(\mathcal{F}_{t})}\left(
\operatorname*{esssup}_{\pi \in \mathcal{A}_{[t,T]}}\mathbb{E}\left[  \left.
U\left(  T,\xi+\int_{t}^{T}\pi_{r}\left(  \theta_{r}dr+dW_{r}^{1}\right)
+P\right)  \right \vert \mathcal{F}_{t}\right]  -\xi \eta \right)  \text{.}%
\end{align*}

\item[(ii)] In particular, at $t=0$,
\[
F\left(  0,P;1,T\right)  =\sup_{x\in \mathbb{R}}\left(  \sup_{\pi \in
\mathcal{A}_{\left[  0,T\right]  }}\mathbb{E}\left[  U\left(  T,x+\int_{0}%
^{T}\pi_{r}\left(  \theta_{r}dr+dW_{r}^{1}\right)  +P\right)  \right]
-x\right)  .
\]

\end{enumerate}
\end{definition}

\bigskip

The forward OCE represents the result of an optimal dynamic allocation of
funds. An investor faces the choice of saving a portion of their current
wealth $\xi$ at time $t$ in addition to the random endowment $P$ at time $T$.
The consequence of this decision is a sacrifice in the real value due to
reduced spending, which amounts to $\xi \eta$. Thus, the forward OCE at time
$t$ mirrors the optimal balance between maximizing the forward performance
process from the saving amount of $\xi$ at time $t$ alongside the random
endowment $P$ at time $T$ and reducing spending due to this saving choice. In
the newly defined framework for forward OCE, we introduce an associated
auxiliary financial market where the underlying asset can be utilized to
partially hedge the risk associated with $P$. In contrast, in the classical
OCE framework, the driving noise in this auxiliary market is assumed to be
orthogonal to $P$.

Similar to the classical OCE, we also obtain the following dual representation
for the forward OCE. The proof follows easily from Theorems \ref{DFBSDE1} (ii)
and \ref{DOP1}, and is, then, omitted.

\begin{theorem}
\label{dualFOCE}Let $T>0$, $P\in L^{\infty}(\mathcal{F}_{T})$ and $\eta
\in \mathcal{H}_{t}$, $0\leq t\leq T$. Suppose $(D^{P},\tilde{Y}^{P},\tilde
{Z}^{P})$ is a solution to the dual FBSDE (\ref{DFBSDEsystem}) on $[t,T]$ with
initial-terminal condition $(\eta,P)$ satisfying $D^{P}>0$ and $\tilde
{Z}^{P,i}\in \mathbb{L}_{BMO}^{2}[t,T],\ i=1,2$. Then, the forward OCE of $P$
at time $t$ admits the dual representation
\[
F\left(  t,P;\eta,T\right)  =\operatorname*{essinf}_{q\in \mathcal{Q}_{[t,T]}%
}\mathbb{E}\left[  \left.  \tilde{U}\left(  T,\eta M_{T}^{t,q}\right)  +\eta
M_{T}^{t,q}P\right \vert \mathcal{F}_{t}\right]  .
\]
The corresponding optimal control $q^{\ast,P}$ is given by%
\[
q_{s}^{\ast,P}:=\frac{\tilde{Z}_{s}^{P,2}+\tilde{\alpha}_{z}^{2}\left(
s,D_{s}^{P}\right)  }{D_{s}^{P}\tilde{U}_{zz}\left(  s,D_{s}^{P}\right)
},\quad t\leq s\leq T.
\]

\end{theorem}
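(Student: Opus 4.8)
The plan is to recognize that the forward OCE defined in Definition \ref{FOCE} is, up to notation, the dual value function $\tilde{u}^{P}(t,\eta;T)$ of (\ref{CP}), and then to read off both the dual representation and its optimizer from the FBSDE optimality results already in hand. Indeed, by Definition \ref{FOCE}(i),
\[
F(t,P;\eta,T)=\operatorname*{esssup}_{\xi\in\cap_{p\geq1}L^{p}(\mathcal{F}_{t})}\bigl(u^{P}(t,\xi;T)-\xi\eta\bigr),
\]
which is exactly the quantity appearing in the middle of the bidual relation of Theorem \ref{DOP1}(ii).

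First I would apply Theorem \ref{DOP1}. By hypothesis, $(D^{P},\tilde{Y}^{P},\tilde{Z}^{P})$ solves the dual FBSDE (\ref{DFBSDEsystem}) on $[t,T]$ with initial-terminal condition $(\eta,P)$, with $D^{P}>0$ and $\tilde{Z}^{P,i}\in\mathbb{L}_{BMO}^{2}[t,T]$. Taking $\hat{\xi}:=-\tilde{U}_{z}(t,\eta)-\tilde{Y}_{t}^{P}$ as in (\ref{Dxi*}), Theorem \ref{DOP1}(ii) gives
\[
\tilde{u}^{P}(t,\eta;T)=\operatorname*{esssup}_{\xi\in\cap_{p\geq1}L^{p}(\mathcal{F}_{t})}\bigl(u^{P}(t,\xi;T)-\xi\eta\bigr)=F(t,P;\eta,T).
\]
Invoking the very definition (\ref{CP}) of the dual forward problem, the left-hand side equals the essential infimum over $q\in\mathcal{Q}_{[t,T]}$ of $\mathbb{E}\bigl[\tilde{U}(T,\eta M_{T}^{t,q})+\eta M_{T}^{t,q}P\mid\mathcal{F}_{t}\bigr]$, which is precisely the asserted dual representation.

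For the optimizer, I would then invoke Theorem \ref{DFBSDE1}(ii), which under the same standing assumptions on $(D^{P},\tilde{Y}^{P},\tilde{Z}^{P})$ guarantees that $q^{\ast,P}$ defined in (\ref{Oq}) belongs to $\mathcal{Q}_{[t,T]}$ and attains the essential infimum defining $\tilde{u}^{P}(t,\eta;T)$. Since formula (\ref{Oq}) is identical to the expression for $q^{\ast,P}$ in the statement, this identifies the optimal control and concludes the proof.

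The only genuine verification, and the step where I would expect to spend a little care, is confirming that Theorem \ref{DOP1} is indeed applicable, namely that $\hat{\xi}=-\tilde{U}_{z}(t,\eta)-\tilde{Y}_{t}^{P}\in\cap_{p\geq1}L^{p}(\mathcal{F}_{t})$. This is routine bookkeeping rather than a substantive obstacle: the deflator normalization $\eta\in\mathcal{H}_{t}$, the boundedness $P\in L^{\infty}(\mathcal{F}_{T})$ controlling $\tilde{Y}_{t}^{P}$, and the bounded risk tolerance in Assumption \ref{Assump}(ii) (which governs the growth of $\tilde{U}_{z}$) together supply the required integrability. Everything else is a direct translation of Theorems \ref{DOP1} and \ref{DFBSDE1} into the OCE language, which is precisely why the result is immediate.
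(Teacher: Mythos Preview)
Your proposal is correct and follows exactly the approach the paper indicates: the paper states that the proof ``follows easily from Theorems \ref{DFBSDE1} (ii) and \ref{DOP1}, and is, then, omitted,'' and you have spelled out precisely this route, identifying $F(t,P;\eta,T)$ with $\tilde{u}^{P}(t,\eta;T)$ via Theorem \ref{DOP1}(ii) and reading off the optimizer from Theorem \ref{DFBSDE1}(ii). Your additional remark about checking $\hat{\xi}\in\cap_{p\geq1}L^{p}(\mathcal{F}_{t})$ is a fair observation, though the paper itself does not elaborate on this point.
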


In light of Proposition \ref{MI}, we obtain directly the
\textit{maturity-independence} property of the forward OCE under the
self-generation property of $\tilde{U}$.

\begin{proposition}
\label{MIOCE}Let $0<T\leq T^{\prime}$, $P\in L^{\infty}(\mathcal{F}_{T})$ and
$\eta \in \mathcal{H}_{t}$, $0\leq t\leq T$. Then
\[
F\left(  t,P;\eta,T\right)  =F\left(  t,P;\eta,T^{\prime}\right)  .
\]

\end{proposition}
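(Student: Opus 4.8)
Proposition \ref{MIOCE} asserts the maturity-independence of the forward OCE.

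The plan is to identify the forward OCE with the dual value function and then invoke the maturity-independence of the latter. By Definition \ref{FOCE}(i), the quantity $F(t,P;\eta,T)$ is, for each fixed horizon, the concave conjugate of the primal value function,
\[
F(t,P;\eta,T) = \operatorname*{esssup}_{\xi \in \cap_{p\geq1} L^p(\mathcal{F}_t)} \left( u^P(t,\xi;T) - \xi\eta \right),
\]
and the same expression holds with $T$ replaced by $T'$. The dual representation of Theorem \ref{dualFOCE} (equivalently, the bidual relation of Theorem \ref{DOP1}(ii) and Remark \ref{DOP2}) identifies this conjugate with the dual value function, namely $F(t,P;\eta,T) = \tilde{u}^P(t,\eta;T)$ and $F(t,P;\eta,T') = \tilde{u}^P(t,\eta;T')$. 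Once these two identifications are in hand, equation (\ref{MI1}) of Proposition \ref{MI}, valid since $\eta \in \mathcal{H}_t \subset L^{0,+}(\mathcal{F}_t)$ and $P \in L^\infty(\mathcal{F}_T)$, gives $\tilde{u}^P(t,\eta;T) = \tilde{u}^P(t,\eta;T')$, and chaining the three equalities yields the claim.

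I would emphasise that the entire argument is carried out on the dual side, so that only the self-generation property of $\tilde{U}$ underlying Proposition \ref{MI} is needed, and no solvability of the dual FBSDE (\ref{DFBSDEsystem}) at either horizon is required. This is precisely the advantage over the primal route, which would instead compare $u^P(t,\xi;T)$ and $u^P(t,\xi;T')$ pointwise in $\xi$ via (\ref{MI2}) and therefore demand a BMO solution of the primal FBSDE (\ref{FBSDEsystem}) for every $\xi$. Routing through $\tilde{u}^P$ and (\ref{MI1}) avoids this and explains why the result holds ``directly'' under the stated hypotheses.

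The step requiring care, and the main obstacle, is the identification $F(t,P;\eta,T) = \tilde{u}^P(t,\eta;T)$ in the generality of the proposition, i.e.\ without assuming FBSDE solvability as in Theorem \ref{dualFOCE}. I would establish it as a genuine strong-duality statement at the level of the value functions. The easy inequality $F \leq \tilde{u}^P$ is weak duality: for any admissible $\pi$ and $q$, the Fenchel--Young inequality $U(T,x) \leq \tilde{U}(T,z) + xz$ with $x = X_T^\pi + P$ and $z = \eta M_T^{t,q}$, combined with Lemma \ref{XZmart} (which forces $\mathbb{E}[\,\eta M_T^{t,q} X_T^\pi \mid \mathcal{F}_t] = \xi\eta$) and the martingale property of $M^{t,q}$, yields $u^P(t,\xi;T) - \xi\eta \leq \tilde{u}^P(t,\eta;T)$ after taking $\operatorname*{esssup}_\pi$ and $\operatorname*{essinf}_q$. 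The reverse inequality is the attainment part, which I would obtain by exhibiting a minimizing deflator together with the matching $\hat{\xi} = -\tilde{U}_z(t,\eta) - \tilde{Y}_t^P$, exactly as in the self-generation argument behind Proposition \ref{SG} and Remark \ref{DOP2}; this is the only place the self-generation of $\tilde{U}$ enters, and it is the crux of the proof.
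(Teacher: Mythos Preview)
Your approach is essentially the paper's: the paper states only that the result follows ``directly'' from Proposition~\ref{MI} under the self-generation property of $\tilde{U}$, and you correctly unpack this as the identification $F(t,P;\eta,\cdot)=\tilde{u}^{P}(t,\eta;\cdot)$ followed by an appeal to (\ref{MI1}).

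One small inconsistency worth flagging: in your final paragraph you assert that no FBSDE solvability is needed, yet your attainment argument for the reverse inequality invokes $\hat{\xi}=-\tilde{U}_{z}(t,\eta)-\tilde{Y}_{t}^{P}$, where $\tilde{Y}_{t}^{P}$ is by definition a component of the dual FBSDE solution. In the paper's development, the identification $F=\tilde{u}^{P}$ is in fact obtained via Theorem~\ref{dualFOCE} (equivalently Theorem~\ref{DOP1}(ii)), which \emph{does} presuppose a solution $(D^{P},\tilde{Y}^{P},\tilde{Z}^{P})$ to (\ref{DFBSDEsystem}); a genuinely FBSDE-free strong-duality argument would require separate machinery that the paper does not supply. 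This does not affect the correctness of your core route, which matches the paper's, but your claim of having sidestepped FBSDE solvability is not quite justified as written.
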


\subsubsection{Connection with the classical (static) OCE}

We discuss how the forward OCE relates to the OCE. Firstly, note that the two
assumptions in (\ref{a}) play a crucial role in transitioning from the
perspective of utility maximization to the classical OCE definition. These
assumptions involve the orthogonality between the market noise and the random
endowment, as well as the exclusive use of the riskless asset as the optimal policy.

In the context of the forward performance framework, we need to consider the
counterparts of (\ref{a}), which ultimately lead to the recovery of classical
OCE. Specifically, we assume the following conditions in the forward
performance framework:

\begin{enumerate}
\item[(i)] the market price of risk $\theta \equiv0,$

\item[(ii)] the random endowment $P\in L^{\infty}(\mathcal{F}_{T}^{2})$, where
$\mathbb{F}^{2}=(\mathcal{F}_{t}^{2})_{t\geq0}$ is the natural filtration of
$W^{2},$

\item[(iii)] the volatility component $\alpha^{1}\equiv0$ in the forward
performance SPDE (\ref{FSPDE}).
\end{enumerate}

Under the above assumptions, the dynamics of the forward performance process
$U$ reduce to $dU\left(  t,x\right)  =\alpha^{2}\left(  t,x\right)  dW_{t}%
^{2}$. Additionally, the value function in (\ref{P}) takes the form:
\[
u^{P}\left(  t,\xi;T\right)  =\operatorname*{esssup}_{\pi \in \mathcal{A}%
_{[t,T]}}\mathbb{E}\left[  \left.  U\left(  T,\xi+\int_{t}^{T}\pi_{r}%
dW_{r}^{1}+P\right)  \right \vert \mathcal{F}_{t}\right]  ,\quad0\leq t\leq
T,\quad \xi \in \cap_{p\geq1}L^{p}(\mathcal{F}_{t}).
\]
From Theorem \ref{FBSDE1} (ii), we deduce that if $(Y^{P},Z^{P,2})$ is a
solution to the BSDE%
\[
Y_{s}^{P}=P+\int_{s}^{T}\left(  \dfrac{1}{2}\dfrac{U_{xxx}\left(  r,\xi
+Y_{r}^{P}\right)  }{U_{xx}\left(  r,\xi+Y_{r}^{P}\right)  }\left \vert
Z_{r}^{P,2}\right \vert ^{2}+\dfrac{\alpha_{xx}^{2}\left(  r,\xi+Y_{r}%
^{P}\right)  }{U_{xx}\left(  r,\xi+Y_{r}^{P}\right)  }Z_{r}^{P,2}\right)
dr-\int_{s}^{T}Z_{r}^{P,2}dW_{r}^{2},\quad0\leq t\leq s\leq T,
\]
namely, $(\xi,Y^{P},(0,Z^{P,2}))$ solves FBSDE (\ref{FBSDEsystem}) under
assumptions (i)-(iii), then the control process $\pi^{\ast,P}=0$ is optimal
for $u^{P}(t,\xi;T)$. Thus, $u^{P}\left(  t,\xi;T\right)  =\mathbb{E}\left[
U\left(  T,\xi+P\right)  |\mathcal{F}_{t}\right]  $, and the forward OCE of
$P$ at time $t$ is given by
\[
F\left(  t,P;\eta,T\right)  =\operatorname*{esssup}_{\xi \in \cap_{p\geq1}%
L^{p}(\mathcal{F}_{t})}\left(  \mathbb{E}\left[  \left.  U\left(
T,\xi+P\right)  \right \vert \mathcal{F}_{t}\right]  -\xi \eta \right)  .
\]
In particular, at $t=0$, we have
\[
F\left(  0,P;1,T\right)  =\sup_{x\in \mathbb{R}}\left(  \mathbb{E}\left[
U\left(  T,x+P\right)  \right]  -x\right)  =\sup_{r\in \mathbb{R}}\left(
\mathbb{E}\left[  U\left(  T,P-r\right)  \right]  +r\right)  .
\]
Thus, if, furthermore, the volatility component $\alpha^{2}\equiv0$, then
$U(t,x)\equiv u(x)$ and the forward OCE coincides with the classical OCE,
$F\left(  0,P;1,T\right)  =S\left(  P\right)  .$

Under assumptions (i)-(iii), we have a more explicit form of the dual
representation,
\[
F\left(  t,P;\eta,T\right)  =\operatorname*{essinf}_{q\in \mathcal{Q}_{[t,T]}%
}\mathbb{E}\left[  \left.  \tilde{U}\left(  T,\eta \mathcal{E}\left(  -%
{\textstyle \int}
qdW^{2}\right)  _{T}^{t}\right)  +\eta \mathcal{E}\left(  -%
{\textstyle \int}
qdW^{2}\right)  _{T}^{t}P\right \vert \mathcal{F}_{t}\right]  ,
\]
where%
\[
\mathcal{E}\left(  -%
{\textstyle \int}
qdW^{2}\right)  _{T}^{t}:=\frac{\mathcal{E}\left(  -%
{\textstyle \int}
qdW^{2}\right)  _{T}}{\mathcal{E}\left(  -%
{\textstyle \int}
qdW^{2}\right)  _{t}}\text{ with }\mathcal{E}\left(  -%
{\textstyle \int}
qdW^{2}\right)  _{t}:=\exp \left(  -\tfrac{1}{2}%
{\textstyle \int_{0}^{t}}
\left \vert q_{r}\right \vert ^{2}dr-%
{\textstyle \int_{0}^{t}}
q_{r}dW_{r}^{2}\right)  .
\]
In particular, at $t=0$, we have
\[
F\left(  0,P;1,T\right)  =\inf_{q\in \mathcal{Q}_{\left[  0,T\right]  }%
}\mathbb{E}\left[  \tilde{U}\left(  T,\mathcal{E}\left(  -%
{\textstyle \int}
qdW^{2}\right)  _{T}\right)  +\mathcal{E}\left(  -%
{\textstyle \int}
qdW^{2}\right)  _{T}P\right]  .
\]

\subsubsection{Properties of the forward OCE}

We conclude this section by presenting several key properties of the forward
OCE. The proofs follow easily from the properties of the forward performance
process. To simplify and unify the presentation, we work with a slight
modification of the forward OCE through normalization. The \textit{normalized}
forward OCE, denoted by $\tilde{F}(t,P;\eta,T)$, is defined as
\begin{equation}
\tilde{F}\left(  t,P;\eta,T\right)  :=F\left(  t,P;\eta,T\right)  -F\left(
t,0;\eta,T\right)  , \label{NFOCE}%
\end{equation}
so that $\tilde{F}\left(  t,0;\eta,T\right)  =0$.

\begin{proposition}
\label{FOCEprop}Let $T>0$ be arbitrary and fixed, and $\eta \in \mathcal{H}_{t}%
$, $0\leq t\leq T$. Then the (normalized) forward OCE has the following properties:

\begin{enumerate}
\item[(i)] Monotonicity: for $P^{i}\in L^{\infty}(\mathcal{F}_{T}),\ i=1,2$
and $P^{1}\geq P^{2}$,
\[
F(t,P^{1};\eta,T)\geq F(t,P^{2};\eta,T).
\]

\item[(ii)] Cash invariance: for $P\in L^{\infty}(\mathcal{F}_{T})$ and $c\in
L^{\infty}(\mathcal{F}_{t})$,
\[
F(t,P+c;\eta,T)=F(t,P;\eta,T)+\eta c.
\]

\item[(iii)] Concavity: for $P^{i}\in L^{\infty}(\mathcal{F}_{T}),\ i=1,2$,
and \emph{for $\lambda \in L^{\infty}(\mathcal{F}_{t})$ valued in $(0,1),$}%
\[
\lambda F\left(  t,P^{1};\eta,T\right)  +\left(  1-\lambda \right)  F\left(
t,P^{2};\eta,T\right)  \leq F\left(  t,P^{\lambda};\eta,T\right)  ,
\]
where $P^{\lambda}:=\lambda P^{1}+(1-\lambda)P^{2}$.

\item[(iv)] Replication invariance: for $P\in L^{\infty}(\mathcal{F}_{T})$ and
for any $\pi \in \mathcal{A}_{[t,T]}$,
\[
F\left(  t,P+\int_{t}^{T}\pi_{r}\left(  \theta_{r}dr+dW_{r}^{1}\right)
;\eta,T\right)  =F\left(  t,P;\eta,T\right)  .
\]

\item[(v)] Positivity: for nonnegative $P\in L^{\infty}(\mathcal{F}_{T})$,
$\tilde{F}(t,P;\eta,T)\geq0$.

\item[(vi)] Constancy: for $c\in L^{\infty}(\mathcal{F}_{t})$,
\[
\tilde{F}(t,c;\eta,T)=\eta c.
\]

\end{enumerate}
\end{proposition}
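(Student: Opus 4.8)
The strategy is to first record the relevant structural properties of the primal value function $P\mapsto u^{P}(t,\xi;T)$ and then transfer each of them to $F(t,P;\eta,T)=\operatorname*{esssup}_{\xi}\left(u^{P}(t,\xi;T)-\xi\eta\right)$ through the partial conjugation in $\xi$. Concretely, four facts about $u^{P}$ do all the work: (a) if $P^{1}\ge P^{2}$ then $u^{P^{1}}(t,\xi;T)\ge u^{P^{2}}(t,\xi;T)$, since $x\mapsto U(T,x)$ is strictly increasing; (b) for $c\in L^{\infty}(\mathcal{F}_{t})$ one has $u^{P+c}(t,\xi;T)=u^{P}(t,\xi+c;T)$, because an $\mathcal{F}_{t}$-measurable constant can be absorbed into the initial wealth $\xi$; (c) for any fixed $\pi^{0}\in\mathcal{A}_{[t,T]}$ and $Q:=P+\int_{t}^{T}\pi^{0}_{r}(\theta_{r}dr+dW^{1}_{r})$ one has $u^{Q}(t,\xi;T)=u^{P}(t,\xi;T)$, because the shift $\pi\mapsto\pi+\pi^{0}$ is a bijection of $\mathcal{A}_{[t,T]}$ (as $\mathbb{L}^{2}_{BMO}[t,T]$ is a vector space); and (d) the map $(\xi,P)\mapsto u^{P}(t,\xi;T)$ is jointly concave, which comes from the concavity of $U(T,\cdot)$ together with the linearity of the controlled wealth in $(\xi,\pi)$.

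For the first three properties I would argue as follows. Monotonicity (i) is immediate from (a): subtracting the common term $\xi\eta$ and taking $\operatorname*{esssup}_{\xi}$ preserves the inequality. Cash invariance (ii) follows from (b) by the change of variable $\xi'=\xi+c$ inside the supremum; since $c$ is bounded this reparametrization keeps $\xi'$ in $\cap_{p\ge1}L^{p}(\mathcal{F}_{t})$, and the linear term produces exactly $+\eta c$. Replication invariance (iv) follows from (c) directly, as $u^{Q}=u^{P}$ already at the level of the inner problem, so the outer conjugation gives $F(t,Q;\eta,T)=F(t,P;\eta,T)$.

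Concavity (iii) requires a little more care. Given $\xi^{i}\in\cap_{p\ge1}L^{p}(\mathcal{F}_{t})$ and $\pi^{i}\in\mathcal{A}_{[t,T]}$, $i=1,2$, and an $\mathcal{F}_{t}$-measurable $\lambda$ valued in $(0,1)$, I would set $\xi^{\lambda}=\lambda\xi^{1}+(1-\lambda)\xi^{2}$ and $\pi^{\lambda}=\lambda\pi^{1}+(1-\lambda)\pi^{2}$. Because $\lambda$ is $\mathcal{F}_{t}$-measurable it factors through the stochastic integral over $[t,T]$, so the terminal arguments combine linearly, and pointwise concavity of $U(T,\cdot)$ yields $u^{P^{\lambda}}(t,\xi^{\lambda};T)\ge\lambda u^{P^{1}}(t,\xi^{1};T)+(1-\lambda)u^{P^{2}}(t,\xi^{2};T)$ after taking conditional expectations and independent essential suprema over $\pi^{1},\pi^{2}$. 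Subtracting $\xi^{\lambda}\eta$, splitting it as $\lambda\xi^{1}\eta+(1-\lambda)\xi^{2}\eta$, and then taking $\operatorname*{esssup}$ over $\xi^{1}$ and $\xi^{2}$ separately delivers the claimed concavity of $F$ in $P$. This is the step where the measurability of $\lambda$ and the admissibility of the convex combination $\pi^{\lambda}$ (guaranteed by the vector-space structure of $\mathbb{L}^{2}_{BMO}[t,T]$) must be verified, and I expect it to be the main technical point of the argument.

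Finally, positivity (v) and constancy (vi) are immediate corollaries once (i)--(iv) are in place, via the normalization (\ref{NFOCE}): (v) is monotonicity (i) applied to the pair $P\ge 0$, giving $F(t,P;\eta,T)\ge F(t,0;\eta,T)$ and hence $\tilde{F}(t,P;\eta,T)\ge0$; and (vi) is cash invariance (ii) with $P\equiv0$, which gives $F(t,c;\eta,T)=F(t,0;\eta,T)+\eta c$, so $\tilde{F}(t,c;\eta,T)=\eta c$. No further estimates are needed.
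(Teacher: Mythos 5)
Your proof is correct and follows exactly the route the paper intends: the paper omits the argument entirely, stating only that the properties ``follow easily from the properties of the forward performance process,'' and your direct manipulation of the primal representation of $u^{P}$ (monotonicity and concavity of $U(T,\cdot)$, absorption of $\mathcal{F}_{t}$-measurable cash into $\xi$, translation of $\mathcal{A}_{[t,T]}$ by a fixed admissible strategy, and the conjugation in $\xi$) is precisely that argument, with (v) and (vi) correctly reduced to (i) and (ii) via the normalization. The only detail worth spelling out in (iii) is the upward-directedness of the families indexed by $\pi^{1},\pi^{2}$ and by $\xi^{1},\xi^{2}$, so that the essential supremum of the $\lambda$-weighted sum over independent parameters equals the weighted sum of the essential suprema; this is standard via concatenation on $\mathcal{F}_{t}$-measurable sets and you have rightly flagged it as the one technical point.
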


\begin{remark}
The normalized forward OCE $\tilde{F}$ defined in (\ref{NFOCE}) satisfies
Proposition \ref{MIOCE} and Proposition \ref{FOCEprop} (i)-(iv).
\end{remark}

\begin{remark}
\emph{In the case of complete markets following the discussion at the end of
Section \ref{Complete Market}, we obtain by (\ref{3002}) that%
\[
\tilde{F}\left(  t,P;\eta,T\right)  =\tilde{u}^{P}\left(  t,\eta;T\right)
-\tilde{u}^{0}\left(  t,\eta;T\right)  =\eta \mathbb{E}^{\mathbb{Q}^{0}}\left[
P|\mathcal{F}_{t}\right]  ,
\]
where $\mathbb{Q}^{0}$ is the unique equivalent martingale measure.
Particularly, when $\eta=1$, the normalized forward OCE coincides with the
complete market price, given by $\tilde{F}\left(  t,P;1,T\right)
=\mathbb{E}^{\mathbb{Q}^{0}}\left[  P|\mathcal{F}_{t}\right]  $.}
\end{remark}

\subsection{Forward OCE under exponential forward performance
criteria\label{FOCEexp}}

\emph{Building on the ergodic BSDE representation for the exponential forward
performance process introduced in \cite{LZ2017}, Chong et al. \cite{CHLZ2019}
extended this framework to study exponential forward utility maximization with
random endowment and its application to forward entropic risk measures. In
this section, based on the results in Sections \ref{expfpp} and
\ref{Forward OCE}, we focus on the \textit{normalized} forward OCE defined in
(\ref{NFOCE}), with the exponential forward performance process given by
(\ref{expFPP}). We further demonstrate its direct connection to the negative
of the forward entropic risk measure introduced in \cite{CHLZ2019}.}

For $T>0$, $P\in L^{\infty}(\mathcal{F}_{T})$ and stochastic deflator $\eta
\in \mathcal{H}_{t}$, $0\leq t\leq T$, the definition of the normalized forward
OCE in (\ref{NFOCE}) and Theorem \ref{dualFOCE} yield that%
\[
\tilde{F}\left(  t,P;\eta,T\right)  =\tilde{u}^{P}\left(  t,\eta;T\right)
-\tilde{u}^{0}\left(  t,\eta;T\right)  =\tilde{u}^{P}\left(  t,\eta;T\right)
-\tilde{U}\left(  t,\eta \right)  ,
\]
where we used that $\tilde{u}^{0}(t,\eta;T)=\tilde{U}(t,\eta)$, as it follows
from Proposition \ref{SG}.

On one hand, as shown in (\ref{0304}), the normalized exponenital forward OCE
in terms of the solution component $Y^{P}$ of FBSDE (\ref{BSDE}) is given by
\begin{equation}
\tilde{F}\left(  t,P;\eta,T\right)  =\eta Y_{t}^{P}. \label{EXPFOCE1}%
\end{equation}
By choosing a deterministic deflator $\eta=1$, we observe that (\ref{EXPFOCE1}%
) for the normalized exponential forward OCE corresponds to the negative of
the forward entropic risk measure,
\[
\tilde{F}\left(  t,P;1,T\right)  =Y_{t}^{P}=-\rho \left(  t,P;T\right)  ,
\]
where the last inequality follows from \cite[Theorem 3.2]{CHLZ2019}, and
$\rho$ is defined as the utility indifference price of $P$, namely,
\[
U\left(  t,x\right)  =u^{P}\left(  t,x+\rho \left(  t,P;T\right)  \right)  ,
\]
as stated in \cite[Definition 3.1]{CHLZ2019}. \emph{This result further
implies that, under the exponential case, the forward OCE coincides with the
forward indifference price. Notably, this is the only case in utility
indifference pricing that satisfies the cash invariance property.}

On the other hand, using the dual representation in (\ref{expDP}) yields
\begin{align}
\tilde{F}\left(  t,P;\eta,T\right)   &  =\operatorname*{essinf}_{q\in
\mathcal{Q}_{[t,T]}}\mathbb{E}\left[  \left.  \eta M_{T}^{t,q}\left(
\frac{\ln M_{T}^{t,q}}{\gamma}-\frac{Y_{T}^{e}-\lambda T}{\gamma}+P\right)
\right \vert \mathcal{F}_{t}\right]  +\frac{\eta}{\gamma}\left(  Y_{t}%
^{e}-\lambda t\right) \nonumber \\
&  =\eta \operatorname*{essinf}_{q\in \mathcal{Q}_{[t,T]}}\mathbb{E}%
^{\mathbb{Q}^{q}}\left[  \left.  \frac{\ln M_{T}^{t,q}}{\gamma}-\frac
{Y_{T}^{e}-\lambda T}{\gamma}+P\right \vert \mathcal{F}_{t}\right]  +\frac
{\eta}{\gamma}\left(  Y_{t}^{e}-\lambda t\right)  , \label{expFOCE}%
\end{align}
where $\tfrac{d\mathbb{Q}^{q}}{d\mathbb{P}}|_{\mathcal{F}_{T}}=M_{T}^{t,q}$.
Note that%
\[
\ln M_{T}^{t,q}=\frac{1}{2}\int_{t}^{T}\left(  \left \vert \theta \left(
V_{r}\right)  \right \vert ^{2}+\left \vert q_{r}\right \vert ^{2}\right)
dr-\int_{t}^{T}\theta \left(  V_{r}\right)  dW_{r}^{1,\theta}-\int_{t}^{T}%
q_{r}dW_{r}^{2,q},
\]
with $(dW_{r}^{1,\theta},dW_{r}^{2,q}):=(dW_{r}^{1}+\theta \left(
V_{r}\right)  dr,dW_{r}^{2}+q_{r}dr)$, and, in accordance with the ergodic
BSDE (\ref{expEBSDE}), we have
\[
Y_{T}^{e}-\lambda T=\ Y_{t}^{e}-\lambda t+\int_{t}^{T}\left(  \frac{1}%
{2}\left \vert \theta \left(  V_{r}\right)  \right \vert ^{2}-\frac{1}%
{2}\left \vert Z_{r}^{e,2}\right \vert ^{2}-Z_{r}^{e,2}q_{r}\right)  dr+\int
_{t}^{T}Z_{r}^{e,1}dW_{r}^{1,\theta}+\int_{t}^{T}Z_{r}^{e,2}dW_{r}^{2,q}.
\]
As a result, the dual representation (\ref{expFOCE}) can be expressed more
explicitly as
\begin{equation}
\tilde{F}\left(  t,P;\eta,T\right)  =\eta \operatorname*{essinf}_{q\in
\mathcal{Q}_{[t,T]}}\mathbb{E}^{\mathbb{Q}^{q}}\left[  \left.  P+\frac
{1}{2\gamma}\int_{t}^{T}\left \vert Z_{r}^{e,2}+q_{r}\right \vert ^{2}%
dr\right \vert \mathcal{F}_{t}\right]  , \label{EXPFOCE2}%
\end{equation}
which is the negative of the dual representation for the forward entropic risk
measure established in \cite[Theorem 3.5]{CHLZ2019} for $\eta=1$.

\begin{remark}
\emph{In \cite{A2014}, the author also studied the forward indifference
pricing under the exponential forward performance process investigated in
\cite{Z2009}. Based on the same representation of the exponential forward
performance process, the corresponding forward OCE can be derived directly.
Notably, this forward OCE coincides with the explicit expression provided in
\cite[(3.4)]{A2014}, which represents the exponential forward indifference
price.}
\end{remark}

\section{Backward SPDEs for forward performance criteria with random
endowment}

\label{Relation between FBSDE and SPDE}

To conclude, we briefly discuss how to tackle the primal problem (\ref{P}) and
the dual problem (\ref{CP}) by directly characterizing their value functions
in terms of the solutions of backward SPDEs. We also explore how to use these
backward SPDE solutions to construct the solutions of the primal FBSDE
(\ref{FBSDEsystem}) and dual FBSDE (\ref{DFBSDEsystem}). Since the solvability
of the corresponding backward SPDEs is an open problem, the discussion that
follows is formal. Nevertheless, we find that building this connection is useful.

\subsection{Formal derivation of backward SPDEs}

For a given $T>0$ and random endowment $P\in L^{\infty}(\mathcal{F}_{T})$, we
recall the value function of the primal problem (\ref{P}),
\begin{equation}
u^{P}\left(  t,x;T\right)  =\sup_{\pi \in \mathcal{A}_{[t,T]}}\mathbb{E}\left[
\left.  U\left(  T,X_{T}^{\pi}+P\right)  \right \vert \mathcal{F}_{t}%
,X_{t}^{\pi}=x\right]  ,\quad0\leq t\leq T, \quad x\in \mathbb{R}.
\label{value}%
\end{equation}
Assume that%
\[
du^{P}\left(  t,x;T\right)  =b^{P}\left(  t,x;T\right)  dt+\left(
a^{P}\left(  t,x;T\right)  \right)  ^{\top}dW_{t},
\]
and the It\^{o}-Ventzel formula yields%
\begin{align*}
du^{P}\left(  t,X_{t}^{\pi};T\right)  =  &  \left(  b^{P}\left(  t,X_{t}^{\pi
};T\right)  +\left(  u_{x}^{P}\left(  t,X_{t}^{\pi};T\right)  \theta_{t}%
+a_{x}^{P,1}\left(  t,X_{t}^{\pi};T\right)  \right)  \pi_{t}+\frac{1}{2}%
u_{xx}^{P}\left(  t,X_{t}^{\pi};T\right)  \left \vert \pi_{t}\right \vert
^{2}\right)  dt\\
&  +\left(  a^{P,1}\left(  t,X_{t}^{\pi};T\right)  +u_{x}^{P}\left(
t,X_{t}^{\pi};T\right)  \pi_{t}\right)  dW_{t}^{1}+a^{P,2}\left(  t,X_{t}%
^{\pi};T\right)  dW_{t}^{2}.
\end{align*}

Following the standard martingale optimality condition (as described, for
example, in \cite{LZ2017}), we select the drift $b^{P}(t,x;T)$ to ensure that
$u^{P}(t,X_{t}^{\pi};T)$ is a supermartingale for any $\pi \in \mathcal{A}%
_{[t,T]}$ and a martingale for an optimal control process. This yields that
the drift must of the form
\begin{align*}
b^{P}\left(  t,x;T\right)   &  =-\sup_{\pi \in \mathbb{R}^{n}}\left(  \frac
{1}{2}u_{xx}^{P}\left(  t,x;T\right)  \left \vert \pi \right \vert ^{2}+\left(
u_{x}^{P}\left(  t,x;T\right)  \theta_{t}+a_{x}^{P,1}\left(  t,x;T\right)
\right)  \pi \right) \\
&  =\frac{1}{2}\frac{\left \vert u_{x}^{P}\left(  t,x;T\right)  \theta
_{t}+a_{x}^{P,1}\left(  t,x;T\right)  \right \vert ^{2}}{u_{xx}^{P}\left(
t,x;T\right)  },
\end{align*}
with the above optimum given by
\[
\pi^{\ast,P}\left(  t,x;T\right)  =-\frac{u_{x}^{P}\left(  t,x;T\right)
\theta_{t}+a_{x}^{P,1}\left(  t,x;T\right)  }{u_{xx}^{P}\left(  t,x;T\right)
}.
\]
Thus, $u^{P}$ is expected to satisfy the backward SPDE%
\begin{equation}
du^{P}\left(  t,x;T\right)  =\frac{1}{2}\frac{\left \vert u_{x}^{P}\left(
t,x;T\right)  \theta_{t}+a_{x}^{P,1}\left(  t,x;T\right)  \right \vert ^{2}%
}{u_{xx}^{P}\left(  t,x;T\right)  }dt+\left(  a^{P}\left(  t,x;T\right)
\right)  ^{\top}dW_{t} \label{SPDEup}%
\end{equation}
with terminal condition $u^{P}(T,x;T)=U(T,x+P)$. Note that, different from the
SPDE characterization (\ref{FSPDE}) of the forward performance $U$ where the
volatility is a model input, (\ref{SPDEup}) is a \textit{backward} SPDE with
the volatility $a^{P}$ being part of the solution.

Differentiating (\ref{SPDEup}) in $x$ yields
\begin{equation}
du_{x}^{P}\left(  t,x;T\right)  =b_{x}^{P}\left(  t,x;T\right)  dt+\left(
a_{x}^{P}\left(  t,x;T\right)  \right)  ^{\top}dW_{t},\quad0\leq t\leq T,
\quad x\in \mathbb{R}\text{,} \label{BSPDEP}%
\end{equation}
with terminal condition $u_{x}^{P}(T,x;T)=U_{x}(T,x+P)$, where
\begin{align*}
b_{x}^{P}\left(  t,x;T\right)  =  &  \left(  u_{x}^{P}\left(  t,x;T\right)
\theta_{t}+a_{x}^{P,1}\left(  t,x;T\right)  \right)  \theta_{t}+\frac{\left(
u_{x}^{P}\left(  t,x;T\right)  \theta_{t}+a_{x}^{P,1}\left(  t,x;T\right)
\right)  a_{xx}^{P,1}\left(  t,x;T\right)  }{u_{xx}^{P}\left(  t,x;T\right)
}\\
&  -\frac{1}{2}\frac{\left \vert u_{x}^{P}\left(  t,x;T\right)  \theta
_{t}+a_{x}^{P,1}\left(  t,x;T\right)  \right \vert ^{2}u_{xxx}^{P}\left(
t,x;T\right)  }{\left \vert u_{xx}^{P}\left(  t,x;T\right)  \right \vert ^{2}}.
\end{align*}

Next, we define the convex conjugate of $u^{P}$,
\[
\tilde{u}^{P}\left(  t,z;T\right)  :=\sup_{x\in \mathbb{R}}\left(  u^{P}\left(
t,x;T\right)  -xz\right)  ,\quad0\leq t\leq T, \quad z>0\text{.}%
\]
We know that $u^{P}$ and $\tilde{u}^{P}$ satisfy the analogous relations as
$U$ and $\tilde{U}$ do. As in Section \ref{FPP and its convex conjugate}, we
have the dual relation
\begin{equation}
u_{x}^{P}\left(  t,-\tilde{u}_{z}^{P}\left(  t,z;T\right)  ;T\right)  =z,
\label{RSPDE}%
\end{equation}
from which we can also deduce the dynamics of $\tilde{u}_{z}^{P}(t,z;T)$,
similarly to $\tilde{U}_{z}(t,z)$ in (\ref{DUSPDE}).

\subsection{Backward SPDEs and FBSDEs}

We will use the backward SPDE (\ref{BSPDEP}) satisfied by $u_{x}^{P}$ to
construct a solution of the primal FBSDE (\ref{FBSDEsystem}) and an optimal
control for the primal problem (\ref{P}) as well as the corresponding
quantities for the dual problem (\ref{CP}).

\begin{proposition}
\label{PB}Let $w$ be a solution to the backward SPDE
\begin{equation}
\left \{
\begin{array}
[c]{l}%
dw\left(  t,x\right)  =h\left(  t,x\right)  dt+d^{\top}\left(  t,x\right)
dW_{t},\quad0\leq t\leq T,\quad x\in \mathbb{R},\\
w\left(  T,x\right)  =U_{x}\left(  T,x+P\right)  ,
\end{array}
\right.  \label{SPDEr}%
\end{equation}
where%
\[
h\left(  t,x\right)  =\left(  w\left(  t,x\right)  \theta_{t}+d^{1}\left(
t,x\right)  \right)  \left(  \theta_{t}+\frac{d_{x}^{1}\left(  t,x\right)
}{w_{x}\left(  t,x\right)  }\right)  -\frac{1}{2}\frac{\left \vert w\left(
t,x\right)  \theta_{t}+d^{1}\left(  t,x\right)  \right \vert ^{2}w_{xx}\left(
t,x\right)  }{\left \vert w_{x}\left(  t,x\right)  \right \vert ^{2}},
\]
and assume that $w(t,x)$ is a progressive $\mathcal{K}_{loc}^{1,\delta}%
$-semimartingale random field and $w\in \mathcal{C}^{2}$. Then, for $0\leq
t\leq T$, the following assertions hold.

\begin{enumerate}
\item[(i)] For $\xi \in \cap_{p\geq1}L^{p}(\mathcal{F}_{t})$, the triplet
$(X^{P},Y^{P},Z^{P})$ given by, for $t\leq s\leq T$,
\begin{equation}
X_{s}^{P}:=\xi-\int_{t}^{s}\frac{w\left(  r,X_{r}^{P}\right)  \theta_{r}%
+d^{1}\left(  r,X_{r}^{P}\right)  }{w_{x}\left(  r,X_{r}^{P}\right)  }\left(
\theta_{r}dr+dW_{r}^{1}\right)  , \label{2802}%
\end{equation}%
\[
Y_{s}^{P}:=-\tilde{U}_{z}\left(  s,w\left(  s,X_{s}^{P}\right)  \right)
-X_{s}^{P},
\]
and
\begin{equation}%
\begin{array}
[c]{l}%
Z_{s}^{P,1}:=\dfrac{w\left(  s,X_{s}^{P}\right)  \theta_{s}+d^{1}\left(
s,X_{s}^{P}\right)  }{w_{x}\left(  s,X_{s}^{P}\right)  }-\tilde{\alpha}%
_{z}^{1}\left(  s,w\left(  s,X_{s}^{P}\right)  \right)  +\tilde{U}_{zz}\left(
s,w\left(  s,X_{s}^{P}\right)  \right)  w\left(  s,X_{s}^{P}\right)
\theta_{s},\\
Z_{s}^{P,2}:=-\tilde{\alpha}_{z}^{2}\left(  s,w\left(  s,X_{s}^{P}\right)
\right)  -d^{2}\left(  s,X_{s}^{P}\right)  \tilde{U}_{zz}\left(  s,w\left(
s,X_{s}^{P}\right)  \right)  ,
\end{array}
\label{2801}%
\end{equation}
satisfies FBSDE (\ref{FBSDEsystem}) on $[t,T]$ with initial-terminal condition
$(\xi,P)$.

\item[(ii)] If, furthermore, the processes%
\[
\dfrac{w\left(  s,X_{s}^{P}\right)  \theta_{s}+d^{1}\left(  s,X_{s}%
^{P}\right)  }{w_{x}\left(  s,X_{s}^{P}\right)  }\in \mathbb{L}_{BMO}%
^{2}[t,T]\quad \text{and}\quad d^{2}\left(  s,X_{s}^{P}\right)  \tilde{U}%
_{zz}\left(  s,w\left(  s,X_{s}^{P}\right)  \right)  \in \mathbb{L}_{BMO}%
^{2}[t,T],
\]
then, $Z^{P,i}\in \mathbb{L}_{BMO}^{2}[t,T]$, $i=1,2$, and $\pi^{\ast,P}$ given
by%
\[
\pi_{s}^{\ast,P}:=-\dfrac{w\left(  s,X_{s}^{P}\right)  \theta_{s}+d^{1}\left(
s,X_{s}^{P}\right)  }{w_{x}\left(  s,X_{s}^{P}\right)  },\quad t\leq s\leq T,
\]
is optimal for $u^{P}(t,\xi;T)$. Respectively, the process $q^{\ast,P}$ given
by%
\begin{equation}
q_{s}^{\ast,P}:=-\frac{d^{2}\left(  s,X_{s}^{P}\right)  }{w\left(  s,X_{s}%
^{P}\right)  },\quad t\leq s\leq T, \label{2803}%
\end{equation}
is optimal for $\tilde{u}^{P}\left(  t,\hat{\eta};T\right)  $ with $\hat{\eta
}:=w(t,\xi)$. Furthermore, for any $\eta \in L^{0,+}(\mathcal{F}_{t})$ such
that $\hat{\xi}:=w^{-1}(t,\eta)\in \cap_{p\geq1}L^{p}(\mathcal{F}_{t})$, the
control process $q^{\ast,P}$ defined by (\ref{2803}) with $X^{P}$ satisfying
$X_{t}^{P}=\hat{\xi}$ is optimal for $\tilde{u}^{P}\left(  t,\eta;T\right)  $.
\end{enumerate}
\end{proposition}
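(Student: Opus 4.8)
The plan is to show that the random field $w$, which by the terminal condition in~(\ref{SPDEr}) plays the role of the marginal value function $u^P_x$ (compare~(\ref{BSPDEP})), generates a solution of the primal FBSDE through the duality dictionary of Section~\ref{FPP and its convex conjugate}, and then to read off optimality from the already-established sufficiency results. The cornerstone is the pointwise identity
\[
w(s,X^P_s)=U_x\bigl(s,X^P_s+Y^P_s\bigr),\qquad t\le s\le T,
\]
which follows at once from the definition of $Y^P$ and the inverse-marginal relation $U_x(s,-\tilde U_z(s,z))=z$ in~(\ref{dr1}), since $X^P_s+Y^P_s=-\tilde U_z(s,w(s,X^P_s))$. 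Setting $D_s:=w(s,X^P_s)$, every $U$-quantity is evaluated at the conjugate point $z=D_s$, so the second- and third-order conjugacy relations~(\ref{dr5}), (\ref{dr5'}) and the volatility relation~(\ref{DDA}) convert all $U$-derivatives at $X^P+Y^P$ into $\tilde U$-derivatives at $D$.

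First I would apply the It\^o--Ventzel formula to $D_s=w(s,X^P_s)$, using the backward SPDE~(\ref{SPDEr}) for $w$ together with the forward SDE~(\ref{2802}). Substituting $\pi^{\ast,P}_s=-(w\theta_s+d^1)/w_x$ into the resulting drift, a direct cancellation against the specified form of $h$ shows that the finite-variation part vanishes and that $dD_s=-D_s\theta_s\,dW^1_s+d^2(s,X^P_s)\,dW^2_s$; comparison with~(\ref{price_density}) identifies $q^{\ast,P}=-d^2/w$, i.e.\ $D=\eta M^{t,q^{\ast,P}}$ with $\eta=w(t,\xi)$, matching~(\ref{2803}) and the forward equation of the dual FBSDE~(\ref{DFBSDEsystem}). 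The forward equation of~(\ref{FBSDEsystem}) is then verified purely algebraically: using~(\ref{dr5}) and~(\ref{DDA}) one gets $\frac{U_x\theta_s+\alpha^1_x}{U_{xx}}=-\tilde U_{zz}D_s\theta_s+\tilde\alpha^1_z$ at $X^P+Y^P$, and adding $Z^{P,1}$ from~(\ref{2801}) collapses the sum to $(w\theta_s+d^1)/w_x=-\pi^{\ast,P}_s$.

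For the backward equation I would compute $dY^P_s=-d\tilde U_z(s,D_s)-dX^P_s$ by a second application of It\^o--Ventzel, now to the field $\tilde U_z$ with dynamics~(\ref{DUSPDE}) composed with the martingale $D$. Matching the diffusion coefficients is short: the $W^1$- and $W^2$-coefficients reduce to $Z^{P,1}$ and $Z^{P,2}$ exactly as in~(\ref{2801}), after using the forward-equation identity for $\pi^{\ast,P}$, and the terminal condition $Y^P_T=P$ follows from $w(T,x)=U_x(T,x+P)$ and~(\ref{dr1}). What remains, and what I expect to be the main obstacle, is matching the drift of $Y^P$ with the integrand $-Z^{P,1}\theta_s+\tfrac12\tfrac{U_{xxx}}{U_{xx}}|Z^{P,2}|^2+\tfrac{\alpha^2_{xx}}{U_{xx}}Z^{P,2}$ of~(\ref{FBSDEsystem}): this forces one to insert the full expression~(\ref{B2}) for $\tilde\beta_z$ and to re-express $U_{xxx}/U_{xx}=\tilde U_{zzz}/\tilde U_{zz}^2$ and $\alpha^2_{xx}/U_{xx}=-\tilde\alpha^2_{zz}/\tilde U_{zz}+\tilde\alpha^2_z\tilde U_{zzz}/\tilde U_{zz}^2$, obtained by differentiating~(\ref{dr5'}) and~(\ref{DDA}); the several second- and third-order terms cancel pairwise, but the bookkeeping is delicate. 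Alternatively, once the forward equation and the martingale dynamics of $D$ are in hand, one may instead verify that $(D,Y^P,Z^P)$ solves the dual FBSDE~(\ref{DFBSDEsystem})---whose backward drift is already organized in $\tilde U$-quantities---and then invoke Proposition~\ref{R1}(i) to transfer back to~(\ref{FBSDEsystem}), sidestepping the heaviest conjugation algebra.

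Finally, for part~(ii) I would first upgrade the integrability: the two displayed hypotheses give the BMO property of $(w\theta+d^1)/w_x$ and of $d^2(s,X^P)\tilde U_{zz}(s,w(s,X^P))$, while $\tilde\alpha^i_z=\alpha^i_x/U_{xx}$ and $\tilde U_{zz}w=-U_x/U_{xx}$ are bounded by Assumption~\ref{Assump}(ii); hence $Z^{P,i}\in\mathbb L^2_{BMO}[t,T]$, $i=1,2$. With this, Theorem~\ref{FBSDE1}(ii) applied to the solution just constructed yields that $\pi^{\ast,P}$ of~(\ref{Opi}) is optimal for $u^P(t,\xi;T)$, and it equals $-(w\theta+d^1)/w_x$ by the forward-equation identity. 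The dual optimality is read off from Theorem~\ref{DO1}(i) with $\hat\eta=U_x(t,\xi+Y^P_t)=w(t,\xi)$, after checking that the $q^{\ast,P}$ of~(\ref{qstar}) simplifies (via~(\ref{dr5}), (\ref{DDA}) and the definition of $Z^{P,2}$) to $-d^2/w$; the last assertion, for arbitrary $\eta$ with $\hat\xi=w^{-1}(t,\eta)\in\cap_{p\ge1}L^p(\mathcal F_t)$, follows from the reverse direction recorded in Remark~\ref{DO2}, applied with initial wealth $X^P_t=\hat\xi$ so that $w(t,\hat\xi)=\eta$.
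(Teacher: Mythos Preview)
Your proposal is correct and follows essentially the same approach as the paper: part~(i) is obtained via the It\^o--Ventzel formula applied to $w(s,X^P_s)$ and to $\tilde U_z(s,D_s)$ together with the duality relations of Section~\ref{FPP and its convex conjugate}, and part~(ii) is read off from Theorems~\ref{FBSDE1}(ii), \ref{DO1} and Remark~\ref{DO2}. The paper's own proof sketch says exactly this (in two lines), and your write-up supplies the detailed bookkeeping that the paper omits; your suggested shortcut via Proposition~\ref{R1}(i) is a reasonable alternative to the direct drift verification but is not needed.
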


It is easy to verify (i) by the It\^{o}-Ventzel formula (\ref{IV}) and the
duality relations between $U$ and $\tilde{U}$ given in Section
\ref{FPP and its convex conjugate}. Part (ii) can be obtained by Theorems
\ref{FBSDE1} (ii), \ref{DO1} and Remark \ref{DO2}.

Similarly to Proposition \ref{PB}, we also have the following results from the
dual backward SPDE.

\begin{proposition}
\label{DPB}Let $\tilde{w}$ be a solution to the backward SPDE
\begin{equation}
\left \{
\begin{array}
[c]{l}%
d\tilde{w}\left(  t,z\right)  =\tilde{h}\left(  t,z\right)  dt+\tilde{d}%
^{\top}\left(  t,z\right)  dW_{t},\quad0\leq t\leq T,\quad z>0,\\
\tilde{w}\left(  T,z\right)  =\tilde{U}_{z}\left(  T,z\right)  +P,
\end{array}
\right.  \label{DSPDEr}%
\end{equation}
with%
\begin{align*}
\tilde{h}\left(  t,z\right)  = &  -\tilde{w}_{z}\left(  t,z\right)
z\left \vert \theta_{t}\right \vert ^{2}+\tilde{d}_{z}^{1}\left(  t,z\right)
z\theta_{t}+\tilde{d}^{1}\left(  t,z\right)  \theta_{t}-\frac{1}{2}\tilde
{w}_{zz}\left(  t,z\right)  \left \vert z\theta_{t}\right \vert ^{2}\\
&  -\frac{1}{2}\frac{\tilde{w}_{zz}\left(  t,z\right)  }{\left \vert \tilde
{w}_{z}\left(  t,z\right)  \right \vert ^{2}}\left \vert \tilde{d}^{2}\left(
t,z\right)  \right \vert ^{2}+\frac{\tilde{d}^{2}\left(  t,z\right)  \tilde
{d}_{z}^{2}\left(  t,z\right)  }{\tilde{w}_{z}\left(  t,z\right)  },
\end{align*}
and assume that $\tilde{w}(t,z)$ is a progressive $\mathcal{K}_{loc}%
^{1,\delta}$-semimartingale random field and $\tilde{w}\in \mathcal{C}^{2}$.
Then, for $0\leq t\leq T$, the following assertions hold.

\begin{enumerate}
\item[(i)] For $\eta \in L^{0,+}(\mathcal{F}_{t})$, the triplet $(D^{P}%
,\tilde{Y}^{P},\tilde{Z}^{P})$ given by, for $t\leq s\leq T$,
\[
D_{s}^{P}:=\eta-%
{\displaystyle \int_{t}^{s}}
\left(  D_{r}^{P}\theta_{r}dW_{r}^{1}+\frac{\tilde{d}^{2}\left(  r,D_{r}%
^{P}\right)  }{\tilde{w}_{z}\left(  r,D_{r}^{P}\right)  }dW_{r}^{2}\right)  ,
\]%
\[
\tilde{Y}_{s}^{P}:=\tilde{w}\left(  s,D_{s}^{P}\right)  -\tilde{U}_{z}\left(
s,D_{s}^{P}\right)  ,
\]
and
\[%
\begin{array}
[c]{l}%
\tilde{Z}_{s}^{P,1}:=\left(  \tilde{w}_{z}\left(  s,D_{s}^{P}\right)
-\tilde{U}_{zz}\left(  s,D_{s}^{P}\right)  \right)  D_{s}^{P}\theta_{s}%
-\tilde{\alpha}_{z}^{1}\left(  s,D_{s}^{P}\right)  ,\\
\tilde{Z}_{s}^{P,2}:=\left(  \tilde{w}_{z}\left(  s,D_{s}^{P}\right)
-\tilde{U}_{zz}\left(  s,D_{s}^{P}\right)  \right)  \dfrac{\tilde{d}%
^{2}\left(  s,D_{s}^{P}\right)  }{\tilde{w}_{z}\left(  s,D_{s}^{P}\right)
}-\tilde{a}_{z}^{2}\left(  s,D_{s}^{P}\right)  ,
\end{array}
\]
satisfies FBSDE (\ref{DFBSDEsystem}) on $[t,T]$ with initial-terminal
condition $(\eta,P)$.

\item[(ii)] If, furthermore, the processes%
\[
\tilde{w}_{z}(s,D_{s}^{P})D_{s}^{P}\in \mathbb{L}_{BMO}^{2}[t,T]\quad
\text{and}\quad \dfrac{\left(  \tilde{w}_{z}\left(  s,D_{s}^{P}\right)
-\tilde{U}_{zz}\left(  s,D_{s}^{P}\right)  \right)  \tilde{d}^{2}\left(
s,D_{s}^{P}\right)  }{\tilde{w}_{z}\left(  s,D_{s}^{P}\right)  }\in
\mathbb{L}_{BMO}^{2}[t,T],
\]
then, $\tilde{Z}^{P,i}\in \mathbb{L}_{BMO}^{2}[t,T]$, $i=1,2$, and $q^{\ast,P}$
given by%
\[
q_{s}^{\ast,P}:=\frac{\tilde{d}^{2}\left(  s,D_{s}^{P}\right)  }{\tilde{w}%
_{z}\left(  s,D_{s}^{P}\right)  D_{s}^{P}},\quad t\leq s\leq T,
\]
is optimal for $\tilde{u}^{P}\left(  t,\eta;T\right)  $. Respectively, the
process $\pi^{\ast,P}$ defined by%
\begin{equation}
\pi_{s}^{\ast,P}:=-\tilde{w}_{z}\left(  s,D_{s}^{P}\right)  D_{s}^{P}%
\theta_{s},\quad t\leq s\leq T, \label{2804}%
\end{equation}
is optimal for $u^{P}(t,\hat{\xi};T)$ with $\hat{\xi}:=-\tilde{w}(t,\eta)$.
Furthermore, for any $\xi \in \cap_{p\geq1}L^{p}(\mathcal{F}_{t})$ and
$\hat{\eta}:=\tilde{w}^{-1}(t,-\xi)$, the control process $\pi^{\ast,P}$
defined by (\ref{2804}) with $D^{P}$ satisfying $D_{t}^{P}=\hat{\eta}$ is
optimal for $u^{P}(t,\xi;T)$.
\end{enumerate}
\end{proposition}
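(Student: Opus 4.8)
The plan is to mirror the proof of Proposition \ref{PB}, now working on the dual side. For part (i) the forward component $D^P$ is prescribed directly by its own SDE, so the real content is the verification of the backward equation. I would set $\tilde Y_s^P:=\tilde w(s,D_s^P)-\tilde U_z(s,D_s^P)$ and compute its dynamics by applying the It\^o-Ventzel formula (\ref{IV}) separately to $\tilde w(s,D_s^P)$, using the backward SPDE (\ref{DSPDEr}), and to $\tilde U_z(s,D_s^P)$, using the dynamics (\ref{DUSPDE})--(\ref{B2}) of $\tilde U_z$. Subtracting the two expansions yields $d\tilde Y_s^P$, whose $dW$-coefficient is read off as $(\tilde Z_s^{P,1},\tilde Z_s^{P,2})$ and whose drift is then matched term by term against the integrand of the backward equation in (\ref{DFBSDEsystem}). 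The terminal condition is immediate: $\tilde w(T,z)=\tilde U_z(T,z)+P$ gives $\tilde Y_T^P=P$.

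The structural reason the matching closes is twofold. First, the drift $\tilde h$ prescribed in (\ref{DSPDEr}) is the exact functional analogue of the drift $\tilde\beta_z$ of $\tilde U_z$ in (\ref{B2}), under the correspondence $\tilde U_z\leftrightarrow\tilde w$, $\tilde U_{zz}\leftrightarrow\tilde w_z$, $\tilde\alpha_z\leftrightarrow\tilde d$; hence the two It\^o-Ventzel drifts are of the same form and their difference collapses onto the quadratic-in-$\tilde Z^{P,2}$ terms appearing in (\ref{DFBSDEsystem}). Second, the $W^2$-diffusion coefficient $\tilde d^2/\tilde w_z$ built into the $D^P$ SDE is chosen precisely so that, by the direct computation $\tilde d^2+\tilde w_z\cdot(-\tilde d^2/\tilde w_z)=0$, the $W^2$-component of $\tilde w(s,D_s^P)$ vanishes; this is the dual counterpart of the identity (\ref{UZpi}) in Lemma \ref{UZ} and is what forces the $\tilde U_{zzz}$, $\tilde\alpha_{zz}^2$ and $\tilde\alpha_z^2$ contributions to assemble into the prescribed drift. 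I would also record that, rewritten as $dD_s^P=-D_s^P(\theta_s\,dW_s^1+q_s^{\ast,P}\,dW_s^2)$ with $q^{\ast,P}$ as in (\ref{Oq}), the process $D^P$ is a stochastic exponential, so $D^P>0$ holds automatically.

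For part (ii), once $(D^P,\tilde Y^P,\tilde Z^P)$ is known to solve (\ref{DFBSDEsystem}), the two BMO hypotheses guarantee $\tilde Z^{P,i}\in\mathbb{L}^2_{BMO}[t,T]$, and the optimality statements follow from the general results already proved: Theorem \ref{DFBSDE1}(ii) gives that $q^{\ast,P}$ is optimal for the dual problem $\tilde u^P(t,\eta;T)$, while Theorem \ref{DOP1} gives that the associated $\pi^{\ast,P}$ from (\ref{Dpi*}), here written through $\tilde w_z$ as in (\ref{2804}), is optimal for the primal problem at the matching initial wealth $\hat\xi=-\tilde w(t,\eta)$ (consistent with $-\tilde U_z(t,\eta)-\tilde Y_t^P=-\tilde w(t,\eta)$). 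The final claim, optimality for arbitrary $\xi\in\cap_{p\geq1}L^p(\mathcal F_t)$ with $\hat\eta=\tilde w^{-1}(t,-\xi)$, then follows by the reverse direction of Remark \ref{DOP2} after inverting $\tilde w(t,\cdot)$, which is legitimate since $z\mapsto\tilde w(t,z)$ is strictly monotone.

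The main obstacle is the It\^o-Ventzel bookkeeping in part (i): justifying the differential rule requires the assumed regularity ($\tilde w$ a $\mathcal K^{1,\delta}_{loc}$-semimartingale with $\tilde w\in\mathcal C^2$, together with the $\mathcal C^3$ regularity of $\tilde U$), and the delicate point is checking that the second-order terms produced by the two expansions combine exactly into the prescribed drift of (\ref{DFBSDEsystem}), not merely up to the anticipated cancellations. A secondary technical step is to confirm the stated $\mathbb{L}^2_{BMO}$ memberships of $\tilde Z^{P,1},\tilde Z^{P,2}$ from the two hypotheses on $\tilde w_z(s,D_s^P)D_s^P$ and on the $\tilde d^2$-term, and to verify the integrability $\hat\xi\in\cap_{p\geq1}L^p(\mathcal F_t)$ needed to invoke Theorem \ref{DOP1} and Remark \ref{DOP2}.
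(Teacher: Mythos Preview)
Your proposal is correct and follows essentially the same approach as the paper. The paper does not give a detailed proof of Proposition~\ref{DPB}; it simply states (after the analogous Proposition~\ref{PB}) that part~(i) follows from the It\^o--Ventzel formula together with the duality relations of Section~\ref{FPP and its convex conjugate}, and that part~(ii) follows from Theorems~\ref{DFBSDE1}(ii), \ref{DOP1} and Remark~\ref{DOP2}, which is exactly the route you outline.
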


\begin{remark}
Let $w$ and $\tilde{w}$ be the solutions to backward SPDEs (\ref{SPDEr}) and
(\ref{DSPDEr}), respectively, with enough regularity. Applying the
It\^{o}-Ventzel formula to these two equations, one can directly verify that,
for $0\leq t\leq T$, $x\in \mathbb{R}$ and $z>0$,
\begin{equation}
-\tilde{w}\left(  t,w\left(  t,x\right)  \right)  =x\quad \text{and}\quad
w\left(  t,-\tilde{w}\left(  t,z\right)  \right)  =z, \label{rSPDE}%
\end{equation}
which gives the relation between the primal and dual backward SPDEs.
\end{remark}

\section{Conclusions}

\label{sec: conclusion} In this paper, we extended the notion of forward
performance criteria to settings with random endowment in incomplete markets
and studied the related stochastic optimization problems. For this, we
developed a new methodology by directly studying the candidate optimal control
processes for both the primal and dual forward problems. We constructed two
new system of FBSDEs and established necessary and sufficient conditions for
optimality, and various equivalences between the primal and dual problems.
This new approach is general and complements the existing one based on
backward SPDEs for the related value functions.

Building on these results, we introduced and developed the novel concept of
forward optimized certainty equivalent, which offers a genuinely dynamic
valuation mechanism that accommodates progressively adaptive market model
updates, stochastic risk preferences, and incoming claims with arbitrary
maturities. We, also, considered representative examples for both forward
performance criteria with random endowment and forward OCE. In particular, for
the case of exponential forward performance criteria, we investigated the
connection of forward OCE with the forward entropic risk measure.

\appendix

\section*{Appendices}

\addcontentsline{toc}{section}{Appendices}
\renewcommand{\thesubsection}{\Alph{subsection}}
\renewcommand{\theequation}{\Alph{subsection}.\arabic{equation}}\setcounter{equation}{0}
\renewcommand{\thetheorem}{\Alph{subsection}.\arabic{theorem}}\setcounter{theorem}{0}

\subsection{\emph{Differential rule and It\^{o}-Ventzel formula}\label{I-V}}

We recall various results we use herein for the derivatives of processes and
the related It\^{o}-Ventzel formula, stating in detail the underlying regality
assumptions. We refer the reader to Kunita \cite{K1997} for further details of
regularity and properties of random fields, and also to
\cite{El_Karoui_2018,EM2014} for applications of some of these results to the
SPDE associated with forward performance processes.

We begin with background definitions for \textit{regular random fields}.
Calligraphic notations are used for the set of random fields.

For $\mathbb{H}=\mathbb{D},\mathbb{D}^{+}$, let $G(t,x)$, $(t,x)\in \mathbb{H}%
$, be a progressive random field. Let $m\geq0$ be an integer and $\delta
\in(0,1]$.

\begin{itemize}
\item $G$ is called $\mathcal{C}^{m}$-regular if $G$ is $m$-times continuously
differentiable in $x$ for any $t$ a.s. We denote this by $G\in \mathcal{C}^{m}$
and the $j^{\text{th}}$-derivative of $G$ in $x$ by $G_{x}^{(j)}$ for $1\leq
j\leq m$.

\item $G$ is called $\mathcal{C}^{m,\delta}$-regular if $G$ is $m$-times
continuously differentiable in $x$ with $G_{x}^{(m)}$ being $\delta
$-H\"{o}lder in $x$ for any $t$ a.s. We denote this by $G\in \mathcal{C}%
^{m,\delta}$.

\item $G$ is called $\mathcal{K}_{loc}^{m,\delta}$-regular if $G\in
\mathcal{C}^{m,\delta}$ and, for any compact set $K\subset \mathbb{R}$ (resp.
$\mathbb{R}^{+}$) when $\mathbb{H}=\mathbb{D}$ (resp. $\mathbb{D}^{+}$) and
any $T>0$,
\[
\int_{0}^{T}\Vert G\Vert_{m,\delta:K}(t)\,dt<\infty \text{, a.s.,}%
\]
where the random $K$-seminorm is defined by
\[
\Vert G\Vert_{m,\delta:K}(t):=\sup_{x\in K}\frac{|G(t,x)|}{1+|x|}+\sum
_{j=1}^{m}\sup_{x\in K}|G_{x}^{(j)}(t,x)|+\sup_{\substack{x,y\in K\\x\neq
y}}\frac{|G_{x}^{(m)}(t,x)-G_{x}^{(m)}(t,y)|}{|x-y|^{\delta}}.
\]
We denote this by $G\in \mathcal{K}_{loc}^{m,\delta}$.

\item $G$ is called $\mathcal{\bar{K}}_{loc}^{m,\delta}$-regular if
$G\in \mathcal{C}^{m,\delta}$ and, for any compact set $K\subset \mathbb{R}$
(resp. $\mathbb{R}^{+}$) when $\mathbb{H}=\mathbb{D}$ (resp. $\mathbb{D}^{+}$)
and any $T>0$,
\[
\int_{0}^{T}\Vert G\Vert_{m,\delta:K}^{2}(t)\,dt<\infty \text{, a.s.}%
\]
We denote this by $G\in \mathcal{\bar{K}}_{loc}^{m,\delta}$.
\end{itemize}

Next, for $\mathbb{H}=\mathbb{D},\mathbb{D}^{+}$, let $F(t,x)$, $(t,x)\in
\mathbb{H}$, be a progressive It\^{o} semimartingale random field with
dynamics given by
\begin{equation}
dF(t,x)=\phi(t,x)\,dt+\psi(t,x)^{\top}dW_{t},\quad(t,x)\in \mathbb{H},
\label{10.1}%
\end{equation}
where $W=(W^{1},W^{2})$ is a two-dimensional Brownian motion, $\phi(t,x)$ and
$\psi(t,x)=(\psi^{1}(t,x),\psi^{2}(t,x))^{\top}$ are progressive random fields
valued in $\mathbb{R}$ and $\mathbb{R}^{2}$, respectively. Let $m\geq0$ be an
integer and $\delta \in(0,1]$.

\begin{itemize}
\item $F$ is called a $\mathcal{K}_{loc}^{m,\delta}$-semimartingale if
$F(0,x)$ is $m$-times continuously differentiable in $x$ with $F_{x}%
^{(m)}(0,x)$ being $\delta$-H\"{o}lder in $x$ a.s. and
\[
\int_{0}^{t}\phi(s,x)\,ds\in \mathcal{K}_{loc}^{m,\delta},{\quad}\int_{0}%
^{t}\psi(s,x)^{\top}dW_{s}\in \mathcal{\bar{K}}_{loc}^{m,\delta}.
\]

\end{itemize}

\begin{theorem}
[Differential Rule and It\^{o}-Ventzel Formula]\label{DI}Let $F(t,x)$,
$(t,x)\in \mathbb{H}$, be a progressive It\^{o} semimartingale random field
with dynamics as in (\ref{10.1}).

\begin{enumerate}
\item[(i)] If $F$ is a $\mathcal{K}_{loc}^{m,\delta}$-semimartingale for some
integer $m\geq0$ and $\delta \in(0,1]$, then $(\phi,\psi)\in \mathcal{K}%
_{loc}^{m,\varepsilon}\times \mathcal{\bar{K}}_{loc}^{m,\varepsilon}$ for any
$\varepsilon<\delta$. Conversely, if $(\phi,\psi)\in \mathcal{K}_{loc}%
^{m,\delta}\times \mathcal{\bar{K}}_{loc}^{m,\delta}$ for some $m\geq0$ and
$\delta \in(0,1]$, then $F$ is a $\mathcal{K}_{loc}^{m,\varepsilon}%
$-semimartingale for any $\varepsilon<\delta$.

\item[(ii)] If $F$ is a $\mathcal{K}_{loc}^{m,\delta}$-semimartingale for some
integer $m\geq1$ and $\delta \in(0,1]$, then the differential rule holds:
\[
dF_{x}(t,x)=\phi_{x}(t,x)\,dt+\psi_{x}(t,x)^{\top}dW_{t}.
\]

\item[(iii)] If $F\in \mathcal{C}^{2}$ and it is a $\mathcal{K}_{loc}%
^{1,\delta}$-semimartingale for some $\delta \in(0,1]$, then the
It\^{o}-Ventzel formula holds. Specifically, for any It\^{o} process $X$
valued in $\mathbb{H}$, the process $F(t,X_{t})$, $t\geq0$, is a continuous
It\^{o} semimartingale satisfying
\begin{align}
F(t,X_{t})=  &  \,F(0,X_{0})+\int_{0}^{t}\phi(s,X_{s})\,ds+\int_{0}^{t}%
\psi(s,X_{s})^{\top}dW_{s}\label{IV}\\
&  +\int_{0}^{t}F_{x}(s,X_{s})\,dX_{s}+\frac{1}{2}\int_{0}^{t}F_{xx}%
(s,X_{s})\,d\langle X\rangle_{s}+\int_{0}^{t}\psi_{x}(s,X_{s})^{\top}d\langle
W,X\rangle_{s}.\nonumber
\end{align}

\end{enumerate}
\end{theorem}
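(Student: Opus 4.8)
The plan is to follow the classical theory of regular semimartingale random fields, as developed in Kunita \cite{K1997}, handling the three parts in order and reducing everything, via localization, to moment estimates for the spatial increments of stochastic integrals. Writing $N(t,x):=\int_0^t\psi(s,x)^\top dW_s$ for the martingale part of $F$, the deterministic part $\int_0^t\phi(s,x)\,ds$ is regular in $x$ by ordinary pathwise calculus, so the entire difficulty concentrates in $N$. A first reduction localizes the a.s.-finite seminorm integrals by stopping times $\tau_k$ on which $\int_0^{\cdot}\Vert\psi\Vert_{m,\delta:K}^2(s)\,ds$ and the analogous quantity for $\phi$ are bounded; this converts the pathwise conditions defining $\mathcal{K}_{loc}$ and $\mathcal{\bar K}_{loc}$ into the finite $p$-th moment bounds required below.

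For part (i) the substantive implication is that regularity of the integrand transfers to the integral field. Given $\psi\in\mathcal{\bar K}_{loc}^{m,\delta}$, I would show $N(t,x)$ admits a modification in $\mathcal{\bar K}_{loc}^{m,\varepsilon}$ for every $\varepsilon<\delta$, so that $F$ is a $\mathcal{K}_{loc}^{m,\varepsilon}$-semimartingale. The tool is the Kolmogorov continuity criterion applied to the top spatial derivative: by Burkholder--Davis--Gundy, for $p\ge2$ one bounds $\mathbb{E}|N_x^{(m)}(t,x)-N_x^{(m)}(t,y)|^p\le C_p\,\mathbb{E}(\int_0^t|\psi_x^{(m)}(s,x)-\psi_x^{(m)}(s,y)|^2\,ds)^{p/2}\le C|x-y|^{p\delta}$, using the $\delta$-Hölder seminorm of $\psi_x^{(m)}$; taking $p$ large enough that $p\delta>1$ and invoking Kolmogorov yields the Hölder modification with exponent arbitrarily close to $\delta$, while the same estimates on difference quotients identify the $j$-th derivative of $N$ as $\int_0^t\psi_x^{(j)}(s,x)^\top dW_s$. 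The converse implication, that regularity of $F$ forces regularity of $(\phi,\psi)$, follows from uniqueness of the semimartingale decomposition together with the recovery $\psi(t,x)\,dt=d\langle N(\cdot,x),W\rangle_t$ and of $\phi$ as the drift density, both of which inherit the spatial regularity of $N$ and of $\int_0^{\cdot}\phi\,ds$.

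Part (ii) is then an immediate corollary of the derivative identification just obtained: since $N_x(t,x)=\int_0^t\psi_x(s,x)^\top dW_s$, the operators $\partial_x$ and $\int(\cdot)\,dW$ commute, and differentiating the dynamics (\ref{10.1}) in $x$ gives $dF_x(t,x)=\phi_x(t,x)\,dt+\psi_x(t,x)^\top dW_t$, which is the differential rule. For part (iii), the It\^{o}--Ventzel formula (\ref{IV}), I would take a partition $0=t_0<\cdots<t_n=t$ and telescope $F(t,X_t)-F(0,X_0)=\sum_i\Delta_i$, splitting each increment as $\Delta_i=[F(t_{i+1},X_{t_{i+1}})-F(t_i,X_{t_{i+1}})]+[F(t_i,X_{t_{i+1}})-F(t_i,X_{t_i})]$. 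The second bracket is a spatial increment of the frozen field $F(t_i,\cdot)$; a second-order Taylor expansion, together with convergence of the Riemann--Stieltjes and It\^{o} sums, yields $\int_0^t F_x(s,X_s)\,dX_s+\tfrac12\int_0^t F_{xx}(s,X_s)\,d\langle X\rangle_s$ in the limit, the $\mathcal{C}^2$ hypothesis controlling the remainder. The first bracket equals $\int_{t_i}^{t_{i+1}}\phi(s,X_{t_{i+1}})\,ds+\int_{t_i}^{t_{i+1}}\psi(s,X_{t_{i+1}})^\top dW_s$; comparing the argument $X_{t_{i+1}}$ with $X_{t_i}$ produces $\int_0^t\phi(s,X_s)\,ds+\int_0^t\psi(s,X_s)^\top dW_s$ plus an extra term.

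The main obstacle, and the only place the genuinely new term $\int_0^t\psi_x(s,X_s)^\top d\langle W,X\rangle_s$ is generated, is the discrepancy $\int_{t_i}^{t_{i+1}}[\psi(s,X_{t_{i+1}})-\psi(s,X_{t_i})]^\top dW_s$: by the spatial regularity from part (i) this behaves to leading order like $\psi_x(t_i,X_{t_i})^\top(X_{t_{i+1}}-X_{t_i})\int_{t_i}^{t_{i+1}}dW_s$, and summing over the partition converges to the quadratic covariation $\int_0^t\psi_x(s,X_s)^\top d\langle W,X\rangle_s$. Establishing this convergence rigorously -- showing the replacement of $\psi$ by its first-order expansion and of the stochastic integral by the increment is legitimate, that higher-order remainders vanish in probability, and that the resulting sums are uniformly integrable -- is the technical heart of the argument, and it is exactly here that the $\mathcal{C}^2\cap\mathcal{K}_{loc}^{1,\delta}$ hypothesis is used, through both the $\delta$-H\"{o}lder modulus of $\psi_x$ and BDG applied on the localizing intervals. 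Once this term is identified, collecting the three limits gives (\ref{IV}).
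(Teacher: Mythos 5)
Your sketch is correct, but note that the paper offers no proof of this theorem at all: it is stated as recalled background and attributed to Kunita \cite{K1997} (with pointers to \cite{El_Karoui_2018,EM2014} for applications). Your argument — localization plus Burkholder--Davis--Gundy and the Kolmogorov continuity criterion for the regularity transfer in (i), derivative/stochastic-integral interchange for (ii), and the telescoping second-order Taylor expansion generating the cross-variation term $\int_0^t\psi_x(s,X_s)^\top d\langle W,X\rangle_s$ for (iii) — is precisely the classical proof from that reference, so it matches the source the paper relies on.
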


\renewcommand{\theequation}{\Alph{subsection}.\arabic{equation}}\setcounter{equation}{0}
\renewcommand{\thetheorem}{\Alph{subsection}.\arabic{theorem}}\setcounter{theorem}{0}

\subsection{\emph{Decoupling field method\label{DF}}}

We present background results that we use to establish the solvability of the
FBSDE derived in Section \ref{Decoupling}. Specifically, we recall the notion
of a decoupling field and some of the general results established in
\cite{FI2020}. To this end, consider a general FBSDE, for $0\leq s\leq T$ and
$x\in \mathbb{R}^{2}$,
\begin{equation}
\left \{
\begin{array}
[c]{l}%
X_{s}=x+%
{\displaystyle \int_{0}^{s}}
\mu \left(  r,X_{r},Y_{r},Z_{r}\right)  dr+%
{\displaystyle \int_{0}^{s}}
\sigma \left(  r,X_{r},Y_{r},Z_{r}\right)  ^{\top}dW_{r},\\
Y_{s}=H\left(  X_{T}\right)  -%
{\displaystyle \int_{s}^{T}}
f\left(  r,X_{r},Y_{r},Z_{r}\right)  dr-%
{\displaystyle \int_{s}^{T}}
Z_{r}^{\top}dW_{r},
\end{array}
\right.  \label{generalFBSDE}%
\end{equation}
where $\mu,\sigma,f:[0,T]\times \mathbb{R}^{2}\times \mathbb{R}\times
\mathbb{R}^{2}\rightarrow \mathbb{R}^{2},\mathbb{R}^{2\times2},\mathbb{R}$ and
$H:\mathbb{R}^{2}\mathbb{\rightarrow R}$ are deterministic measurable functions.

\begin{definition}
Let $T$ be arbitrary and fixed, and $0\leq t\leq T$. A function $w:[t,T]\times
\mathbb{R}^{2}\mathbb{\rightarrow R}$ with $w(T,x)=H(x)$ is called a Markovian
decoupling field for FBSDE (\ref{generalFBSDE}) on $[t,T]$ if, for all
$\,t\leq t_{1}\leq t_{2}\leq T$ and any $\mathcal{F}_{t_{1}}$-measurable
$X_{t_{1}}:\Omega \rightarrow \mathbb{R}^{2}$, there exist progressively
measurable processes $(X,Y,Z)$ on $[t_{1},t_{2}]$ such that $\left \Vert Z
\right \Vert _{\infty}<\infty$ and
\begin{equation}
\left \{
\begin{array}
[c]{l}%
X_{s}=X_{t_{1}}+%
{\displaystyle \int_{t_{1}}^{s}}
\mu \left(  r,X_{r},Y_{r},Z_{r}\right)  dr+%
{\displaystyle \int_{t_{1}}^{s}}
\sigma \left(  r,X_{r},Y_{r},Z_{r}\right)  ^{\top}dW_{r},\\
Y_{s}=Y_{t_{2}}-%
{\displaystyle \int_{s}^{t_{2}}}
f\left(  r,X_{r},Y_{r},Z_{r}\right)  dr-%
{\displaystyle \int_{s}^{t_{2}}}
Z_{r}^{\top}dW_{r},\\
Y_{s}=w\left(  s,X_{s}\right)  ,
\end{array}
\right.  \label{2102}%
\end{equation}
hold a.s. for all $t_{1}\leq s\leq t_{2}$.
\end{definition}

\begin{notation}
For a measurable function $F:\mathbb{R}^{n} \rightarrow \mathbb{R}^{m} $, we
define
\begin{equation}
\label{LC}L_{F,x} := \inf \left \{  L \geq0 : |F(x) - F(x^{\prime})| \leq L |x -
x^{\prime}| \text{ for all } x, x^{\prime}\in \mathbb{R}^{n} \right \} ,
\end{equation}
with the convention $\inf \varnothing:= \infty$. Here, we use $x $ in $L_{F,x}$
to denote the argument of the function $F(x) $.
\end{notation}

\begin{definition}
Let $T$ be arbitrary and fixed, and $0\leq t\leq T$. Let $w$ be a Markovian
decoupling field for FBSDE (\ref{generalFBSDE}) on $[t,T]$.

\begin{enumerate}
\item[(i)] $w$ is weakly regular if $L_{w,x}<\frac{1}{L_{\sigma,z}}$ and
$\sup_{t\leq s\leq T}|w(s,0)|<\infty$.

\item[(ii)] $w$ is strongly regular if it is weakly regular, and for all
$t\leq t_{1}\leq t_{2}\leq T$, the processes $(X,Y,Z)$ in (\ref{2102}) satisfy
the conditions:

\begin{enumerate}
\item[(a)] they are unique for each constant $X_{t_{1}}=x\in \mathbb{R}^{2}$,

\item[(b)] for all $x\in \mathbb{R}^{2}$,
\[
\sup_{t_{1}\leq s\leq t_{2}}\mathbb{E}_{t_{1},\infty}\left[  \left \vert
X_{s}\right \vert ^{2}\right]  +\sup_{t_{1}\leq s\leq t_{2}}\mathbb{E}%
_{t_{1},\infty}\left[  \left \vert Y_{s}\right \vert ^{2}\right]  <\infty
\text{,}%
\]
where $\mathbb{E}_{t_{1},\infty}[\cdot]:=\operatorname*{esssup}\mathbb{E[}%
\cdot|\mathcal{F}_{t_{1}}\mathbb{]}$,

\item[(c)] they are measurable and weakly differentiable with respect to
$x\in \mathbb{R}^{2}$ such that
\[
\operatorname*{esssup}_{x \in \mathbb{R}^{2}} \sup_{\zeta \in S^{1}} \sup_{t_{1}
\leq s \leq t_{2}} \mathbb{E}_{t_{1},\infty} \left[  |\partial_{x}
X_{s}|_{\zeta}^{2} \right]  < \infty,
\]
\[
\operatorname*{esssup}_{x \in \mathbb{R}^{2}} \sup_{\zeta \in S^{1}} \sup_{t_{1}
\leq s \leq t_{2}} \mathbb{E}_{t_{1},\infty} \left[  |\partial_{x}
Y_{s}|_{\zeta}^{2} \right]  < \infty,
\]
\[
\operatorname*{esssup}_{x \in \mathbb{R}^{2}} \sup_{\zeta \in S^{1}}
\mathbb{E}_{t_{1},\infty} \left[  \int_{t_{1}}^{t_{2}} |\partial_{x}
Z_{s}|_{\zeta}^{2} \, ds \right]  < \infty,
\]

where $S^{1} := \{ z \in \mathbb{R}^{2} : |z| = 1 \} $, and $|\partial_{x}
X_{s}|_{\zeta}= |\langle \partial_{x} X_{s}, \zeta \rangle| $ is the magnitude
of the directional derivative in direction $\zeta$.
\end{enumerate}
\end{enumerate}
\end{definition}

For the definition and properties of weak derivatives of processes with
respect to their initial values, we refer to \cite{Fthesis}.

\begin{definition}
Let $w$ be a Markovian decoupling field for FBSDE (\ref{generalFBSDE}). We
call $w$ controlled in $z$ if there exists a constant $C>0$ such that, for all
$t\leq t_{1}\leq t_{2}\leq T$ and all initial values $X_{t_{1}}\in
\mathcal{F}_{t_{1}}$, the corresponding processes $(X,Y,Z)$ in (\ref{2102})
satisfy $||Z||_{\infty}\leq C$.
\end{definition}

\begin{definition}
\label{Def_MLLC} The FBSDE (\ref{generalFBSDE}) satisfies the modified local
Lipschitz condition (MLLC) if

\begin{enumerate}
\item[(i)] $\mu,\sigma,f$ are Lipschitz continuous in $x,y,z$ on
$[0,T]\times \mathbb{R}^{2}\times \mathbb{R}\times B$ with $B\subset
\mathbb{R}^{2}$ being an arbitrary bounded set,

\item[(ii)] $\left \Vert \mu(\cdot,0,0,0)\right \Vert _{\infty},\left \Vert
\sigma(\cdot,0,0,0)\right \Vert _{\infty},\left \Vert f(\cdot,0,0,0)\right \Vert
_{\infty}<\infty$,

\item[(iii)] $L_{\sigma,z}<\infty$ and $L_{H,x}<\frac{1}{L_{\sigma,z}}$.
\end{enumerate}
\end{definition}

\begin{notation}
For the FBSDE (\ref{generalFBSDE}), let $I_{\max}^{M}\subset \lbrack0,T]$ be
the union of all intervals $[t,T]\subset \lbrack0,T]$, $0\leq t\leq T$, such
that there exists a weakly regular Markovian decoupling field $w$ on $[t,T]$.
\end{notation}

\begin{theorem}
\label{MLLC}Let FBSDE (\ref{generalFBSDE}) satisfy MLLC. Then, there exists a
unique weakly regular Markovian decoupling field $u$ on $I_{\max}^{M}$, which
is strongly regular, controlled in $z$ and continuous. Furthermore, it is
either the case $I_{\max}^{M}=[0,T]$ or $I_{\max}^{M}=(t_{\min}^{M},T]$, with
$0\leq t_{\min}^{M}<T$.
\end{theorem}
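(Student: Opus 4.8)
The plan is to follow the strategy of Fromm and Imkeller \cite{FI2020} (see also \cite{Fthesis}), which combines a local contraction argument with a continuation scheme whose success is governed entirely by the Lipschitz constant of the decoupling field in $x$. Throughout, the decisive structural quantity is the product $L_{w,x}L_{\sigma,z}$, and the MLLC hypothesis $L_{H,x}<\frac{1}{L_{\sigma,z}}$ in Definition \ref{Def_MLLC} (iii) is precisely what opens the door to the construction near $T$.

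First I would establish \emph{local existence}: on a short interval $[T-h,T]$ I construct a weakly regular Markovian decoupling field by a fixed-point argument in the space of functions $w(s,\cdot)$ that are Lipschitz in $x$ with $L_{w,x}\le \Lambda$ for some constant $\Lambda<\frac{1}{L_{\sigma,z}}$ chosen compatibly with the terminal datum. For a candidate $w$, inserting $Y_s=w(s,X_s)$ into (\ref{2102}) and differentiating formally shows that the control must satisfy an implicit relation of the schematic form $Z_s=w_x(s,X_s)\,\sigma(s,X_s,Y_s,Z_s)$; since $L_{w,x}L_{\sigma,z}\le \Lambda L_{\sigma,z}<1$, this relation admits a unique bounded solution $Z$ by the Banach fixed-point theorem, which simultaneously delivers the property of being \emph{controlled in $z$}. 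Standard a priori estimates under the MLLC bounds then make the map $w\mapsto \hat w$ (solve the backward equation, read off the new field from $Y_s=\hat w(s,X_s)$) a contraction for $h$ small enough, yielding a unique weakly regular field locally.

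Next I would \emph{bootstrap regularity and establish uniqueness}. Differentiating the forward--backward system (\ref{2102}) in its initial value produces a linear variational FBSDE for $(\partial_x X,\partial_x Y,\partial_x Z)$; combining the Lipschitz bounds with the strict inequality $L_{w,x}L_{\sigma,z}<1$ and $\mathbb{L}_{BMO}^{2}/L^2$ a priori estimates yields finiteness of the suprema in Definition (ii)(b)--(c) defining \emph{strong regularity}, together with continuity of $w$. Uniqueness of the weakly regular field on any given interval then follows from a Gronwall comparison of two solutions sharing the terminal condition $w(T,x)=H(x)$. I would then define $I_{\max}^{M}$ as the union of all $[t,T]$ carrying a weakly regular field; uniqueness forces these local fields to agree on overlaps, so they glue into a single field on $I_{\max}^{M}$, which is an interval with right endpoint $T$.

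The \textbf{main obstacle} is the endpoint dichotomy, namely ruling out that $I_{\max}^{M}$ is \emph{closed} on the left. The mechanism is that weak regularity is an open condition: if a weakly regular field existed on the closed interval $[t_{\min}^{M},T]$, then applying the local existence step with $t_{\min}^{M}$ playing the role of the terminal time (its field now satisfies $L_{w,x}<\frac{1}{L_{\sigma,z}}$ strictly) would extend it to $[t_{\min}^{M}-\delta,T]$ for some $\delta>0$, contradicting maximality; hence $t_{\min}^{M}\notin I_{\max}^{M}$ and $I_{\max}^{M}=(t_{\min}^{M},T]$ whenever it is not all of $[0,T]$. The delicate, quantitative part is to show that the \emph{only} way to exit $I_{\max}^{M}$ is through degeneration $L_{w,x}(t)\uparrow \frac{1}{L_{\sigma,z}}$ as $t\downarrow t_{\min}^{M}$: one must track the backward-in-time growth of $L_{w,x}(t)$ and verify that, as long as this constant stays strictly below the critical threshold, the a priori estimates do not blow up, so the field persists. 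Controlling this growth rate uniformly is the technical heart of the argument and is where the full force of the MLLC structure and the variational estimates from \cite{FI2020} is needed.
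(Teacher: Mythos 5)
The paper does not prove this statement: Theorem \ref{MLLC} is recalled verbatim from Fromm--Imkeller \cite{FI2020} and Fromm's thesis \cite{Fthesis} as background for the decoupling field method, so there is no in-paper proof to compare against. Your outline faithfully reproduces the strategy of those references: local existence near $T$ by a contraction in the space of fields with $L_{w,x}L_{\sigma,z}<1$, the implicit fixed-point equation for $Z$ (which is exactly where the condition $L_{H,x}<1/L_{\sigma,z}$ and the controlled-in-$z$ property come from), strong regularity via the variational FBSDE, gluing by uniqueness, and the half-open dichotomy for $I_{\max}^{M}$ obtained by showing that a weakly regular field on a closed interval can always be extended a little further. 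You also correctly identify the technical heart --- that the only obstruction to reaching $t=0$ is the Lipschitz constant $L_{w,x}(t)$ climbing to the critical threshold $1/L_{\sigma,z}$ --- which is precisely the blow-up alternative in Fromm's continuation scheme. As a sketch the proposal is sound; the parts left implicit (the quantitative small-time estimates making the map $w\mapsto\hat w$ a contraction under MLLC rather than global Lipschitz hypotheses, and the weak-differentiability bookkeeping behind strong regularity) are exactly the content of \cite[Ch.~2]{Fthesis} and would need to be imported or reproduced to make this a complete proof.
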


To prove the well-posedness of the FBSDE \eqref{FBSDEsystem}, we recall the
abstract result in \cite[Section 3]{FI2020}. For $\varepsilon> 0 $, consider
the FBSDE
\begin{equation}
\left \{
\begin{array}
[c]{l}%
\tilde{X}_{s}=\tilde{x}+%
{\displaystyle \int_{t}^{s}}
\dfrac{1}{\varepsilon}\tilde{\mu}\left(  r,\varepsilon \tilde{X}_{r}\right)
dr+\dfrac{1}{\varepsilon}%
{\displaystyle \int_{t}^{s}}
\left(  \rho dW_{r}^{1}+\sqrt{1-\rho^{2}}dW_{r}^{2}\right)  ,\, \rho
\in(0,1),\\
\bar{X}_{s}=\bar{x}+%
{\displaystyle \int_{t}^{s}}
\left(  \bar{\sigma}\left(  r,\varepsilon \tilde{X}_{r},\bar{X}_{r}+Y_{r}%
,Z_{r}^{1}\right)  \right)  dW_{r}^{1},\\
Y_{s}=H\left(  \varepsilon \tilde{X}_{T},\bar{X}_{T}\right)  -%
{\displaystyle \int_{s}^{T}}
f\left(  r,\varepsilon \tilde{X}_{r},\bar{X}_{r}+Y_{r},Z_{r}^{2}\right)  dr-%
{\displaystyle \int_{s}^{T}}
Z_{r}^{1}dW_{r}^{1}-%
{\displaystyle \int_{s}^{T}}
Z_{r}^{2}dW_{r}^{2},
\end{array}
\right.  \label{abMLLC}%
\end{equation}
with $0\leq t\leq s\leq T$, $\tilde{x},\bar{x}\in \mathbb{R}$, and the
coefficients satisfying

\begin{enumerate}
\item[(i)] $\tilde{\mu}:[0,T]\times \mathbb{R\rightarrow R}$ is measurable and
Lipschitz continuous in the second component, and $\left \Vert \tilde{\mu
}(\cdot,0)\right \Vert _{\infty}<\infty$,

\item[(ii)] $\bar{\sigma}:[0,T]\times \mathbb{R\times \mathbb{R}\times
\mathbb{R}\rightarrow R}$ is measurable and Lipschitz continuous in the last
three components, and $\left \Vert \bar{\sigma}(\cdot,\cdot,\cdot,0)\right \Vert
_{\infty}<\infty$,

\item[(iii)] $f:[0,T]\times \mathbb{R\times \mathbb{R}\times \mathbb{R}%
\rightarrow R}$ is measurable and Lipschitz continuous in the last three
components on all sets of the form $[0,T]\times \mathbb{R\times \mathbb{R}%
\times}B$ with $B\subset \mathbb{R}$ being an arbitrary bounded set, and
$\left \Vert f(\cdot,0,0,0)\right \Vert _{\infty}<\infty$,

\item[(iv)] $H:\mathbb{R\times \mathbb{R\rightarrow R}}$ is Lipschitz
continuous in both components, and $L_{H,\bar{x}}<\frac{1}{L_{\bar{\sigma
},z^{1}}}.$
\end{enumerate}

As shown in \cite{FI2020}, under the above conditions, the FBSDE
(\ref{abMLLC}) satisfies the MLLC for some $\varepsilon>0$ small enough.

Finally, to establish the global well-posedness of (\ref{abMLLC}), we also
introduce additional assumptions:

\begin{enumerate}
\item[(v)] $f$ is differentiable in $(\tilde{x},\bar{x}+y,z^{2})$, $\left \Vert
f(\cdot,\cdot,\cdot,0)\right \Vert _{\infty}<\infty$ and there exists some
constant $C>0$ such that
\[
\left \vert f_{\tilde{x}}\right \vert \leq C\left(  1+\left \vert z^{2}%
\right \vert ^{2}\right)  ,\quad \left \vert f_{\bar{x}+y}\right \vert \leq
C\left(  1+\left \vert z^{2}\right \vert ^{2}\right) ,\quad \text{and}%
\quad \left \vert f_{z}\right \vert \leq C\left(  1+\left \vert z^{2}\right \vert
\right)  ,
\]

\item[(vi)] $\bar{\sigma}$ is differentiable in $(\tilde{x},\bar{x}+y,z^{1})$
with bounded derivatives and $L_{\bar{\sigma},z^{1}}\leq1$,

\item[(vii)] $\left \Vert H\right \Vert _{\infty}<\infty$, and $L_{H,\bar{x}%
}<1.$
\end{enumerate}

\begin{theorem}
[{\cite[Theorem 3.3]{FI2020}}]\label{3.3}Let assumptions (i)-(vii) hold and,
furthermore, assume that there exist constants $K\in \lbrack0,\frac{1}%
{L_{\bar{\sigma},z^{1}}})$ and $\varepsilon_{0}\in(0,\infty)$ such that, for
any $\varepsilon \in(0,\varepsilon_{0}]$ and any $0\leq t\leq T$, if a weakly
regular Markovian decoupling field $w^{\varepsilon}:[t,T]\times \mathbb{R\times
R\rightarrow R}$ associated with FBSDE (\ref{abMLLC}) exists, it also
satisfies $\sup_{t\leq s\leq T}\left \Vert w_{\bar{x}}^{\varepsilon}\left(
s,\cdot,\cdot \right)  \right \Vert _{\infty}<K$. Then, there exists
$\varepsilon>0$ such that $I_{\max}^{M}=[0,T]$ for FBSDE (\ref{abMLLC}).
\end{theorem}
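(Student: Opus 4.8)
The plan is to upgrade the local/maximal existence theory for Markovian decoupling fields (Theorem \ref{MLLC}) to global existence by feeding in the a priori gradient bound supplied by the hypothesis. First I would note, as recorded just before the statement, that for $\varepsilon>0$ small enough the structured FBSDE (\ref{abMLLC}) satisfies the modified local Lipschitz condition of Definition \ref{Def_MLLC}: assumptions (i)--(iii) supply the local Lipschitz and boundedness requirements on $\tilde\mu$, $\bar\sigma$, $f$ and $H$, while (iv) provides exactly $L_{H,\bar x}<1/L_{\bar\sigma,z^{1}}$, with $L_{\bar\sigma,z^{1}}<\infty$ coming from the Lipschitz continuity in (ii). With MLLC in force, Theorem \ref{MLLC} yields a unique maximal weakly regular Markovian decoupling field $w^{\varepsilon}$ on $I_{\max}^{M}$, which is strongly regular, controlled in $z$ and continuous, and which obeys the dichotomy $I_{\max}^{M}=[0,T]$ or $I_{\max}^{M}=(t_{\min}^{M},T]$ with $0\le t_{\min}^{M}<T$. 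The whole task then reduces to excluding the second alternative for all sufficiently small $\varepsilon$.

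The mechanism I would exploit is the standard non-blow-up characterisation of the maximal interval in the Fromm--Imkeller theory: if $I_{\max}^{M}=(t_{\min}^{M},T]$ is strictly smaller than $[0,T]$, then the relevant directional Lipschitz constant of the decoupling field must degenerate toward the critical threshold as the left endpoint is approached, i.e. $\limsup_{s\downarrow t_{\min}^{M}}L_{w^{\varepsilon}(s,\cdot,\cdot),\bar x}=1/L_{\bar\sigma,z^{1}}$. The special structure of (\ref{abMLLC}) is what makes this constant the Lipschitz modulus in the slow variable $\bar x$: the coefficients depend on the fast variable only through $\varepsilon\tilde X$, whose diffusion is $Z$-independent and carries a non-degenerate, $\varepsilon$-scaled Brownian input, so its directional contribution is controlled and does not interact with the threshold. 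Here the hypothesis enters decisively, asserting that \emph{whenever} a weakly regular field exists on $[t,T]$ it automatically satisfies $\sup_{t\le s\le T}\|w^{\varepsilon}_{\bar x}(s,\cdot,\cdot)\|_{\infty}<K$ for a fixed $K<1/L_{\bar\sigma,z^{1}}$, uniformly in $\varepsilon\in(0,\varepsilon_{0}]$.

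I would then close by contradiction. Assume $I_{\max}^{M}=(t_{\min}^{M},T]$. The uniform a priori bound keeps $L_{w^{\varepsilon}(s,\cdot,\cdot),\bar x}\le K<1/L_{\bar\sigma,z^{1}}$ for every $s>t_{\min}^{M}$, which is incompatible with the blow-up above; equivalently, the bound lets $w^{\varepsilon}$, together with the $(X,Y,Z)$ estimates guaranteed by strong regularity, extend continuously down to $s=t_{\min}^{M}$ with terminal Lipschitz constant still below the critical threshold. Restarting the local existence result of Theorem \ref{MLLC} at time $t_{\min}^{M}$ with $w^{\varepsilon}(t_{\min}^{M},\cdot,\cdot)$ as terminal datum (admissible precisely because its Lipschitz constant is $<1/L_{\bar\sigma,z^{1}}$) produces a weakly regular decoupling field on $[t_{\min}^{M}-h,T]$ for some $h>0$, contradicting maximality of $I_{\max}^{M}$. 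Hence $I_{\max}^{M}=[0,T]$; since this is \cite[Theorem 3.3]{FI2020}, the full technical details reside there.

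The step I expect to be the main obstacle is the identification and control of the \emph{correct} critical Lipschitz constant in the degenerate setting of (\ref{abMLLC}). Unlike the generic FBSDE (\ref{generalFBSDE}), here the $\bar X$-equation is driven only by $W^{1}$ while the fast variable carries its own $\varepsilon$-scaled noise, so one must verify that an a priori bound on $w_{\bar x}$ alone — rather than on the full spatial gradient — genuinely prevents the singularity, and that this holds uniformly as $\varepsilon\to0$. This is exactly where the $\varepsilon$-smallness is essential: it suppresses the feedback of the slow backward variable into the fast dynamics and keeps both the MLLC constants and the extension estimate uniform, which is the content that must be extracted from the differentiability and growth assumptions (v)--(vii).
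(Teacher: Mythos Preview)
The paper does not give its own proof of this statement: Theorem~\ref{3.3} is quoted verbatim from \cite[Theorem~3.3]{FI2020} as a background result in Appendix~\ref{DF}, and the paper only \emph{applies} it in the proof of Theorem~\ref{Mar}. Your sketch---MLLC for small $\varepsilon$, the maximal-interval dichotomy of Theorem~\ref{MLLC}, and exclusion of the open-interval case via the a~priori bound $K<1/L_{\bar\sigma,z^{1}}$ contradicting blow-up of $L_{w^{\varepsilon},\bar x}$ at $t_{\min}^{M}$---is the standard decoupling-field continuation argument and matches the strategy in \cite{FI2020}; you also correctly flag that the delicate point is isolating $w_{\bar x}$ (rather than the full spatial gradient) as the controlling quantity, which is where the specific structure of (\ref{abMLLC}) and the $\varepsilon$-scaling are used.
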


\begin{remark}
\label{ReZ}We note that herein we assume that $|f_{\tilde{x}}|\leq
C(1+|z^{2}|^{2})$ which is weaker than the assumption in Theorem 3.3 in
\cite{FI2020}. However, this condition does not affect the results of Theorem
\ref{3.3} due to the uniform boundedness of the $Z$-component. More precisely,
in the proof of Theorem 3.3 in \cite{FI2020}, $f_{\tilde{x}}$ only appears in
the representation of the derivative process $\tilde{R}$ (see p.2676) and is
used to establish the uniform bound for $\tilde{R}$ (see the middle of
p.2677). On the other hand, $Z$ is uniformly bounded by the definition of
decoupling field and this bound is independent of both $t_{1}$ and
$\varepsilon$, by the argument in the middle of p.2685. Therefore, the
estimate of $\tilde{R}$ will not change and, hence, Theorem \ref{3.3} still
holds under this weaker assumption.
\end{remark}

\renewcommand{\theequation}{\Alph{subsection}.\arabic{equation}}\setcounter{equation}{0}
\renewcommand{\thetheorem}{\Alph{subsection}.\arabic{theorem}}\setcounter{theorem}{0}

\subsection{Proofs}

\subsubsection{Proof of Lemma \ref{XZmart}}

Let $X_{s}^{t,\pi}:=\int_{t}^{s}\pi_{r}\left(  \theta_{r}dr+dW_{r}^{1}\right)
,\, t\leq s\leq T$. It\^{o}'s formula yields that $X^{t,\pi}M^{t,q}$ is a
local martingale, given by
\[
d\left(  X_{s}^{t,\pi}M_{s}^{t,q}\right)  =\left(  -X_{s}^{t,\pi}M_{s}%
^{t,q}\theta_{s}+M_{s}^{t,q}\pi_{s}\right)  dW_{s}^{1}-X_{s}^{t,\pi}%
M_{s}^{t,q}q_{s}dW_{s}^{2}.
\]
We observe that for $n>2$, and using that $\pi \in \mathbb{L}_{BMO}^{2}[0,T]$,
the energy inequality implies that
\begin{equation}
\mathbb{E}\left[  \left(  \int_{0}^{T}\left \vert \pi_{r}\right \vert
^{2}dr\right)  ^{\frac{n}{2}}\right]  \leq(\frac{n}{2})!||\int_{0}^{\cdot}%
\pi_{r}dW_{r}^{1}||_{BMO}^{n}<\infty. \label{2105}%
\end{equation}
In turn, using the Burkholder-Davis-Gundy inequality and the uniform
boundedness of $\theta$, we deduce
\begin{align}
\mathbb{E}\left[  \sup_{t\leq s\leq T}\left \vert X_{s}^{t,\pi}\right \vert
^{n}\right]  \leq &  \ C\mathbb{E}\left[  \sup_{t\leq s\leq T}\left(
\left \vert \int_{t}^{s}\pi_{r}\theta_{r}dr\right \vert ^{n}+\left \vert \int
_{t}^{s}\pi_{r}dW_{r}^{1}\right \vert ^{n}\right)  \right] \nonumber \\
\leq &  \ C\mathbb{E}\left[  \left(  \int_{t}^{T}\left \vert \pi_{r}\theta
_{r}\right \vert dr\right)  ^{n}\right]  +C\mathbb{E}\left[  \left(  \int
_{t}^{T}\left \vert \pi_{r}\right \vert ^{2}dr\right)  ^{\frac{n}{2}}\right]
<\infty. \label{X}%
\end{align}

Next, we show that $\int_{t}^{s}X_{r}^{t,\pi}M_{r}^{t,q}q_{r}dW_{r}^{2}$,
$t\leq s\leq T,$ is a true $\mathbb{F}$-martingale. For this, we first observe
that the stochastic exponential $M^{t,q}$ is a uniformly integrable martingale
due to the boundedness of $\theta$ and the fact that $q\in \mathbb{L}_{BMO}%
^{2}[t,T]$. Using Doob's inequality and the reverse H\"{o}lder's inequality,
we deduce%
\[
\mathbb{E}\left[  \sup_{t\leq s\leq T}\left \vert M_{s}^{t,q}\right \vert
^{p_{0}}\right]  \leq \left(  \frac{p_{0}}{p_{0}-1}\right)  ^{p_{0}}%
\mathbb{E}\left[  |M_{T}^{t,q}|^{p_{0}}\right]  <\infty,
\]
for some $p_{0}>1$. Using once more the Burkholder-Davis-Gundy inequality
gives
\begin{align*}
\mathbb{E}\left[  \sup_{t\leq s\leq T}\left \vert \int_{t}^{s}X_{r}^{t,\pi
}M_{r}^{t,q}q_{r}dW_{r}^{2}\right \vert \right]  \leq &  \ C\mathbb{E}\left[
\sup_{t\leq s\leq T}\left \vert M_{s}^{t,q}\right \vert \left(  \int_{t}%
^{T}\left \vert X_{r}^{t,\pi}q_{r}\right \vert ^{2}dr\right)  ^{\frac{1}{2}%
}\right] \\
\leq &  \ C\mathbb{E}\left[  \sup_{t\leq s\leq T}\left \vert M_{s}%
^{t,q}\right \vert ^{p_{0}}\right]  +C\mathbb{E}\left[  \left(  \int_{t}%
^{T}\left \vert X_{r}^{t,\pi}q_{r}\right \vert ^{2}dr\right)  ^{\frac{1}{2}%
\frac{p_{0}}{p_{0}-1}}\right]  .
\end{align*}
Moreover, inequality (\ref{X}) and that $q\in \mathbb{L}_{BMO}^{2}[t,T]$ yield
\begin{align*}
\mathbb{E}\left[  \left(  \int_{t}^{T}\left \vert X_{r}^{t,\pi}q_{r}\right \vert
^{2}dr\right)  ^{\frac{1}{2}\frac{p_{0}}{p_{0}-1}}\right]  \leq &
\  \mathbb{E}\left[  \sup_{t\leq s\leq T}\left \vert X_{s}^{t,\pi}\right \vert
^{\frac{p_{0}}{p_{0}-1}}\left(  \int_{t}^{T}\left \vert q_{r}\right \vert
^{2}dr\right)  ^{\frac{1}{2}\frac{p_{0}}{p_{0}-1}}\right] \\
\leq &  \  \mathbb{E}\left[  \sup_{t\leq s\leq T}\left \vert X_{s}^{t,\pi
}\right \vert ^{\frac{2p_{0}}{p_{0}-1}}\right]  +\mathbb{E}\left[  \left(
\int_{t}^{T}\left \vert q_{r}\right \vert ^{2}dr\right)  ^{\frac{p_{0}}{p_{0}%
-1}}\right]  <\infty,
\end{align*}
and we conclude. Using similar arguments, we deduce that $\int_{t}^{s}%
(-X_{r}^{t,\pi}M_{r}^{t,q}\theta_{r}+M_{r}^{t,q}\pi_{r})dW_{r}^{1}$, $t\leq
s\leq T$, is also a true $\mathbb{F}$-martingale.

\subsubsection{Proof of Theorem \ref{FBSDE1}}

In this proof, we omit the superscript $P$ for the primal FBSDE for notational simplicity.

\textbf{Part (i).} \underline{Step 1}. Since $\pi^{\ast,P}$ is optimal, we
must have, for any bounded $\pi \in \mathcal{A}_{[t,T]}$ and $\varepsilon
\in(0,1)$, that a.s.,
\begin{equation}
\lim_{\varepsilon \rightarrow0}\frac{1}{\varepsilon}\mathbb{E}\left[  \left.
U\left(  T,X_{T}^{\varepsilon,\pi}+P\right)  -U\left(  T,X_{T}^{\ast
}+P\right)  \right \vert \mathcal{F}_{t}\right]  \leq0, \label{0101}%
\end{equation}
where $X_{T}^{\varepsilon,\pi}:=\xi+\int_{t}^{T}(\pi_{r}^{\ast,P}%
+\varepsilon \pi_{r})(\theta_{r}dr+dW_{r}^{1})$ and $X_{T}^{\ast}$ as in
(\ref{SDE}). By the fundamental theorem of calculus and the chain rule, we can
write
\[
\frac{1}{\varepsilon}\left[  U\left(  T,X_{T}^{\varepsilon,\pi}+P\right)
-U\left(  T,X_{T}^{\ast}+P\right)  \right]  =\int_{0}^{1}U_{x}\left(
T,X_{T}^{\delta \varepsilon,\pi}+P\right)  d\delta \int_{t}^{T}\pi_{r}\left(
\theta_{r}dr+dW_{r}^{1}\right) .
\]
Substituting this into the left-hand side of (\ref{0101}) and applying the
uniform integrability assumption along with the dominated convergence theorem,
we obtain
\begin{align*}
0  &  \geq \lim_{\varepsilon \rightarrow0}\mathbb{E}\left[  \left.  \int_{0}%
^{1}U_{x}\left(  T,X_{T}^{\delta \varepsilon,\pi}+P\right)  d\delta \int_{t}%
^{T}\pi_{r}\left(  \theta_{r}dr+dW_{r}^{1}\right)  \right \vert \mathcal{F}%
_{t}\right] \\
&  =\mathbb{E}\left[  \left.  U_{x}\left(  T,X_{T}^{\ast}+P\right)  \int
_{t}^{T}\pi_{r}\left(  \theta_{r}dr+dW_{r}^{1}\right)  \right \vert
\mathcal{F}_{t}\right]  .
\end{align*}
Replacing $\pi$ by $-\pi$ implies that $0\leq \mathbb{E}[U_{x}\left(
T,X_{T}^{\ast}+P\right)  \int_{t}^{T}\pi_{r}\left(  \theta_{r}dr+dW_{r}%
^{1}\right)  |\mathcal{F}_{t}]$, and, thus,%
\begin{equation}
0=\mathbb{E}\left[  \left.  U_{x}\left(  T,X_{T}^{\ast}+P\right)  \int_{t}%
^{T}\pi_{r}\left(  \theta_{r}dr+dW_{r}^{1}\right)  \right \vert \mathcal{F}%
_{t}\right]  =\mathbb{E}\left[  \left.  N_{T}A_{T}^{\pi}\right \vert
\mathcal{F}_{t}\right]  , \label{NA}%
\end{equation}
where, for $0\leq t\leq s\leq T$,%
\begin{equation}
N_{s}:=\mathbb{E}\left[  \left.  U_{x}\left(  T,X_{T}^{\ast}+P\right)
\right \vert \mathcal{F}_{s}\right]  \quad \text{ and }\quad A_{s}^{\pi}%
:=\int_{t}^{s}\pi_{r}\left(  \theta_{r}dr+dW_{r}^{1}\right)  . \label{N}%
\end{equation}
By the martingale representation theorem, there exists a square integrable
density process, say $\varphi=(\varphi^{1},\varphi^{2})$ on $[t,T],$ such that%
\begin{equation}
dN_{s}=\varphi_{s}^{\top}dW_{s},\quad0\leq t\leq s\leq T. \label{Nr}%
\end{equation}

\underline{Step 2}. We define an $\mathbb{F}$-progressively measurable process
$Y$ on $[t,T]$ such that%
\begin{equation}
U_{x}\left(  s,X_{s}^{\ast}+Y_{s}\right)  =N_{s}=\mathbb{E}\left[  \left.
U_{x}\left(  T,X_{T}^{\ast}+P\right)  \right \vert \mathcal{F}_{s}\right]
,\quad0\leq t\leq s\leq T. \label{NY}%
\end{equation}
In other words, $Y_{s}:=U_{x}^{-1}\left(  s,N_{s}\right)  -X_{s}^{\ast}\text{,
}t\leq s\leq T\text{, and }Y_{T}=P.$ From the dual relation (\ref{dr1}), we
have that%
\begin{equation}
Y_{s}=-\tilde{U}_{z}\left(  s,N_{s}\right)  -X_{s}^{\ast},\quad t\leq s\leq T.
\label{PY}%
\end{equation}
In turn, applying the It\^{o}-Ventzel formula to $\tilde{U}_{z}$ in
(\ref{DUSPDE}) with $N$ as in (\ref{Nr}), we obtain%
\begin{align}
dY_{s}=  &  \left(  -\tilde{\beta}_{z}\left(  s,N_{s}\right)  -\frac{1}%
{2}\tilde{U}_{zzz}\left(  s,N_{s}\right)  \left \vert \varphi_{s}\right \vert
^{2}-\tilde{\alpha}_{zz}^{\top}\left(  s,N_{s}\right)  \varphi_{s}-\pi
_{s}^{\ast,P}\theta_{s}\right)  ds\nonumber \\
&  +\left(  -\tilde{\alpha}_{z}^{1}\left(  s,N_{s}\right)  -\tilde{U}%
_{zz}\left(  s,N_{s}\right)  \varphi_{s}^{1}-\pi_{s}^{\ast,P}\right)
dW_{s}^{1}+\left(  -\tilde{\alpha}_{z}^{2}\left(  s,N_{s}\right)  -\tilde
{U}_{zz}\left(  s,N_{s}\right)  \varphi_{s}^{2}\right)  dW_{s}^{2}.
\label{2601}%
\end{align}
Recalling (\ref{dr5}) and (\ref{dr5'}), we have%
\begin{equation}
\tilde{U}_{zz}\left(  s,N_{s}\right)  =-\frac{1}{U_{xx}\left(  s,X_{s}^{\ast
}+Y_{s}\right)  } \quad \text{ and } \quad \tilde{U}_{zzz}\left(  s,N_{s}%
\right)  =\frac{U_{xxx}\left(  s,X_{s}^{\ast}+Y_{s}\right)  }{\left(
U_{xx}\left(  s,X_{s}^{\ast}+Y_{s}\right)  \right)  ^{3}}. \label{2604}%
\end{equation}
We also have by (\ref{DDA}) that%
\begin{align}
\tilde{\alpha}_{z}\left(  s,N_{s}\right)   &  =\frac{\alpha_{x}\left(
s,X_{s}^{\ast}+Y_{s}\right)  }{U_{xx}\left(  s,X_{s}^{\ast}+Y_{s}\right)
},\label{2605}\\
\tilde{\alpha}_{zz}\left(  s,N_{s}\right)   &  =-\frac{U_{xxx}\left(
s,X_{s}^{\ast}+Y_{s}\right)  \alpha_{x}\left(  s,X_{s}^{\ast}+Y_{s}\right)
}{\left(  U_{xx}\left(  s,X_{s}^{\ast}+Y_{s}\right)  \right)  ^{3}}%
+\frac{\alpha_{xx}\left(  s,X_{s}^{\ast}+Y_{s}\right)  }{\left \vert
U_{xx}\left(  s,X_{s}^{\ast}+Y_{s}\right)  \right \vert ^{2}}, \label{2606}%
\end{align}
and by (\ref{B2}) that%
\begin{align}
\tilde{\beta}_{z}\left(  s,N_{s}\right)  =  &  \frac{\beta_{x}\left(
s,X_{s}^{\ast}+Y_{s}\right)  }{U_{xx}\left(  s,X_{s}^{\ast}+Y_{s}\right)
}+\frac{1}{2}\frac{U_{xxx}\left(  s,X_{s}^{\ast}+Y_{s}\right)  \left \vert
\alpha_{x}\left(  s,X_{s}^{\ast}+Y_{s}\right)  \right \vert ^{2}}{\left(
U_{xx}\left(  s,X_{s}^{\ast}+Y_{s}\right)  \right)  ^{3}}\nonumber \\
&  -\frac{\alpha_{x}^{\top}\left(  s,X_{s}^{\ast}+Y_{s}\right)  \alpha
_{xx}\left(  s,X_{s}^{\ast}+Y_{s}\right)  }{\left \vert U_{xx}\left(
s,X_{s}^{\ast}+Y_{s}\right)  \right \vert ^{2}}, \label{2607}%
\end{align}
with $\beta_{x}(t,x)$ given in (\ref{DB}). Combining (\ref{2601})-(\ref{2607})
yields that%
\begin{equation}
dY_{s}=-f\left(  s,X_{s}^{\ast},Y_{s},Z_{s}\right)  ds+Z_{s}^{\top}dW_{s},
\quad t\leq s\leq T, \label{2104}%
\end{equation}
where%
\begin{align}
f\left(  s,x,y,z\right)  :=  &  \frac{\beta_{x}\left(  s,x+y\right)  }%
{U_{xx}\left(  s,x+y\right)  }+\frac{1}{2}\frac{U_{xxx}\left(  s,x+y\right)
\left \vert \varphi_{s}-\alpha_{x}\left(  s,x+y\right)  \right \vert ^{2}%
}{\left(  U_{xx}\left(  s,x+y\right)  \right)  ^{3}}\label{2103}\\
&  +\frac{\alpha_{xx}^{\top}\left(  s,x+y\right)  \left(  \varphi_{s}%
-\alpha_{x}\left(  s,x+y\right)  \right)  }{\left \vert U_{xx}\left(
s,x+y\right)  \right \vert ^{2}}+\pi_{s}^{\ast,P}\theta_{s}\nonumber
\end{align}
and
\begin{equation}
Z_{s}^{1}:=\dfrac{\varphi_{s}^{1}-\alpha_{x}^{1}\left(  s,X_{s}^{\ast}%
+Y_{s}\right)  }{U_{xx}\left(  s,X_{s}^{\ast}+Y_{s}\right)  }-\pi_{s}^{\ast,P}
\quad \text{ and}\quad Z_{s}^{2}:=\dfrac{\varphi_{s}^{2}-\alpha_{x}^{2}\left(
s,X_{s}^{\ast}+Y_{s}\right)  }{U_{xx}\left(  s,X_{s}^{\ast}+Y_{s}\right)  }.
\label{PZ}%
\end{equation}
Thus, the processes $\varphi=(\varphi^{1},\varphi^{2})$ in (\ref{Nr}) must
satisfy
\begin{equation}%
\begin{array}
[c]{l}%
\varphi_{s}^{1}=\alpha_{x}^{1}\left(  s,X_{s}^{\ast}+Y_{s}\right)
+U_{xx}\left(  s,X_{s}^{\ast}+Y_{s}\right)  \pi_{s}^{\ast,P}+U_{xx}\left(
s,X_{s}^{\ast}+Y_{s}\right)  Z_{s}^{1},\\
\varphi_{s}^{2}=\alpha_{x}^{2}\left(  s,X_{s}^{\ast}+Y_{s}\right)
+U_{xx}\left(  s,X_{s}^{\ast}+Y_{s}\right)  Z_{s}^{2}.
\end{array}
\label{j2}%
\end{equation}
Therefore, substituting the representation of $\varphi$ into (\ref{2103}), we
obtain that the drift $f$ in BSDE (\ref{2104}) is given by
\begin{align}
f\left(  s,x,y,z\right)  =  &  \  \frac{\beta_{x}\left(  s,x+y\right)  }%
{U_{xx}\left(  s,x+y\right)  }+\frac{1}{2}\frac{U_{xxx}\left(  s,x+y\right)
\left(  \left \vert z^{1}+\pi_{s}^{\ast,P}\right \vert ^{2}+\left \vert
z^{2}\right \vert ^{2}\right)  }{U_{xx}\left(  s,x+y\right)  }\nonumber \\
&  +\frac{\alpha_{xx}^{1}\left(  s,x+y\right)  \left(  z^{1}+\pi_{s}^{\ast
,P}\right)  +\alpha_{xx}^{2}\left(  s,x+y\right)  z^{2}}{U_{xx}\left(
s,x+y\right)  }+\pi_{s}^{\ast,P}\theta_{s}. \label{F1}%
\end{align}

\underline{Step 3}. Recalling $N$ in (\ref{Nr}) with $\varphi$ as in
(\ref{j2}) and $A^{\pi}$ in (\ref{N}), we obtain
\begin{align*}
d(N_{s}A_{s}^{\pi})=  &  \  \pi_{s}\left(  N_{s}\theta_{s}+\alpha_{x}%
^{1}\left(  s,X_{s}^{\ast}+Y_{s}\right)  +U_{xx}\left(  s,X_{s}^{\ast}%
+Y_{s}\right)  \pi_{s}^{\ast,P}+U_{xx}\left(  s,X_{s}^{\ast}+Y_{s}\right)
Z_{s}^{1}\right)  ds\\
&  +\left(  A_{s}^{\pi}\varphi_{s}^{1}+N_{s}\pi_{s}\right)  dW_{s}^{1}%
+A_{s}^{\pi}\varphi_{s}^{2}dW_{s}^{2}.
\end{align*}
By Lemma \ref{IM}, which, to ease the presentation, is provided right after
this proof, we have that $0\leq \int_{t}^{s}\left(  A_{r}^{\pi}\varphi_{r}%
^{1}+N_{r}\pi_{r}\right)  dW_{r}^{1}+\int_{t}^{s}A_{r}^{\pi}\varphi_{r}%
^{2}dW_{r}^{2}$, $0\leq t\leq s\leq T$, is a true $\mathbb{F}$-martingale.
Then, integrating both sides from $t$ to $T$ and taking conditional
expectation, we obtain using (\ref{NA}) that%
\[
\mathbb{E}\left[  \left.  \int_{t}^{T}\pi_{s}\left(  N_{s}\theta_{s}%
+\alpha_{x}^{1}\left(  s,X_{s}^{\ast}+Y_{s}\right)  +U_{xx}\left(
s,X_{s}^{\ast}+Y_{s}\right)  \pi_{s}^{\ast,P}+U_{xx}\left(  s,X_{s}^{\ast
}+Y_{s}\right)  Z_{s}^{1}\right)  ds\right \vert \mathcal{F}_{t}\right]  =0.
\]
Since $\pi$ is an arbitrary admissible control process, we deduce that, for
$0\leq t\leq s\leq T$, it must be that at the optimum
\begin{equation}
\pi_{s}^{\ast,P}=-\frac{N_{s}\theta_{s}+\alpha_{x}^{1}\left(  s,X_{s}^{\ast
}+Y_{s}\right)  }{U_{xx}\left(  s,X_{s}^{\ast}+Y_{s}\right)  }-Z_{s}%
^{1}=-\frac{U_{x}\left(  s,X_{s}^{\ast}+Y_{s}\right)  \theta_{s}+\alpha
_{x}^{1}\left(  s,X_{s}^{\ast}+Y_{s}\right)  }{U_{xx}\left(  s,X_{s}^{\ast
}+Y_{s}\right)  }-Z_{s}^{1}. \label{pistar}%
\end{equation}

\underline{Step 4}. Combining (\ref{F1}), (\ref{DB}) and (\ref{pistar})
yields
\[
f\left(  s,x,y,z\right)  =-z^{1}\theta_{s}+\frac{\frac{1}{2}U_{xxx}\left(
s,x+y\right)  \left \vert z^{2}\right \vert ^{2}+\alpha_{xx}^{2}\left(
s,x+y\right)  z^{2}}{U_{xx}\left(  s,x+y\right)  }.
\]
In turn, we easily deduce the BSDE
\[
dY_{s}=-f\left(  s,X_{s}^{\ast},Y_{s},Z_{s}\right)  ds+Z_{s}^{\top}%
dW_{s},\quad Y_{T}=P.
\]
Meanwhile, the process $X^{\ast} $, defined in (\ref{WS}) and associated with
the optimal control $\pi^{\ast,P} $ from (\ref{pistar}), provides the forward
component of the FBSDE (\ref{FBSDEsystem}). \medskip

\noindent \textbf{Part (ii).} If $(X,Y,Z)$ satisfies FBSDE (\ref{FBSDEsystem}),
it follows easily that $\pi^{\ast,P}\in \mathcal{A}_{[t,T]}$ using Assumption
\ref{Assump} (ii) and the fact that $Z^{1}\in \mathbb{L}_{BMO}^{2}[t,T]$.
Moreover, we have that
\[
X_{s}=\xi+\int_{t}^{s}\pi_{r}^{\ast,P}\left(  \theta_{r}dr+dW_{r}^{1}\right)
=X_{s}^{\ast},\quad0\leq t\leq s\leq T.
\]

For any $\pi \in \mathcal{A}_{[t,T]}$, denote $X_{s}^{\pi}:=\xi+\int_{t}^{s}%
\pi_{r}\left(  \theta_{r}dr+dW_{r}^{1}\right)  $. Next, we show that, for any
$\pi \in \mathcal{A}_{[t,T]}$,%
\begin{equation}
\mathbb{E}\left[  \left.  U\left(  T,X_{T}^{\pi}+P\right)  \right \vert
\mathcal{F}_{t}\right]  \leq \mathbb{E}\left[  \left.  U\left(  T,X_{T}%
+P\right)  \right \vert \mathcal{F}_{t}\right] , \label{0201}%
\end{equation}
where $X$ is given by the forward component of the FBSDE (\ref{FBSDEsystem}).
To this end, working as in Lemma \ref{NM}, we introduce a probability measure
\[
\left.  \frac{d\mathbb{Q}^{\ast}}{d\mathbb{P}}\right \vert _{\mathcal{F}_{T}%
}=M_{T}^{t,q^{\ast,P}}=\frac{U_{x}\left(  T,X_{T}+P\right)  }{U_{x}\left(
t,\xi+Y_{t}\right)  },
\]
where $M^{t,q^{\ast,P}}$ is the optimal state price density process. Then,
Girsanov's theorem yields that $\left(  dW_{s}^{\ast,1},dW_{s}^{\ast
,2}\right)  :=\left(  dW_{s}^{1}+\theta_{s}ds,dW_{s}^{2}+q_{s}^{\ast
,P}ds\right)  $ is a Brownian motion under $\mathbb{Q}^{\ast}$. Using the
spatial concavity of $U$, we deduce
\begin{align}
&  \mathbb{E}\left[  \left.  U\left(  T,X_{T}^{\pi}+P\right)  \right \vert
\mathcal{F}_{t}\right]  -\mathbb{E}\left[  \left.  U\left(  T,X_{T}+P\right)
\right \vert \mathcal{F}_{t}\right] \nonumber \\
\leq &  \mathbb{E}\left[  \left.  U_{x}\left(  T,X_{T}+P\right)  \left(
X_{T}^{\pi}-X_{T}\right)  \right \vert \mathcal{F}_{t}\right] \nonumber \\
=  &  U_{x}\left(  t,\xi+Y_{t}\right)  \mathbb{E}^{\mathbb{Q}^{\ast}}\left[
\left.  \int_{t}^{T}\left(  \pi_{r}-\pi_{r}^{\ast,P}\right)  dW_{r}^{\ast
,1}\right \vert \mathcal{F}_{t}\right]  . \label{TT1}%
\end{align}
Applying the reverse H\"older inequality to $M^{t, q^{\ast,P}} $, we obtain
that there exists a constant $p_{0} > 1 $ such that $\mathbb{E}\left[  \left|
M_{T}^{t, q^{\ast,P}} \right| ^{p_{0}} \right]  < \infty. $ Moreover, noting
that $\pi- \pi^{\ast,P} \in \mathbb{L}_{\mathrm{BMO}}^{2}[t,T] $ and invoking
the energy inequality (\ref{2105}), we deduce that
\[
\mathbb{E}\left[  \left(  \int_{t}^{T} \left|  \pi_{r} - \pi_{r}^{\ast,P}
\right| ^{2} dr \right) ^{\frac{n}{2}} \right]  < \infty \quad \text{for all } n
> 2.
\]

In turn, the Burkholder-Davis-Gundy inequality yields
\begin{align*}
\mathbb{E}^{\mathbb{Q}^{\ast}}\left[  \sup_{t\leq s\leq T}\left \vert \int
_{t}^{s}\left(  \pi_{r}-\pi_{r}^{\ast,P}\right)  dW_{r}^{\ast,1}\right \vert
\right]  \leq &  \ C\mathbb{E}^{\mathbb{Q}^{\ast}}\left[  \left(  \int_{t}%
^{T}\left \vert \pi_{r}-\pi_{r}^{\ast,P}\right \vert ^{2}dr\right)  ^{\frac
{1}{2}}\right] \\
\leq &  \ C\mathbb{E}\left[  \left \vert M_{T}^{t,q^{\ast,P}}\right \vert
^{p_{0}}\right]  +C\mathbb{E}\left[  \left(  \int_{t}^{T}\left \vert \pi
_{r}-\pi_{r}^{\ast,P}\right \vert ^{2}dr\right)  ^{\frac{1}{2}\frac{p_{0}%
}{p_{0}-1}}\right]  <\infty,
\end{align*}
which further implies that the right hand side of (\ref{TT1}) is zero and
(\ref{0201}) follows.

\begin{lemma}
\label{IM}For any $\pi \in \mathcal{A}_{[t,T]}$, the process
\[
\int_{t}^{s}\left(  A_{r}^{\pi}\varphi_{r}^{1}+N_{r}\pi_{r}\right)  dW_{r}%
^{1}+\int_{t}^{s}A_{r}^{\pi}\varphi_{r}^{2}dW_{r}^{2},\text{\quad}t\leq s\leq
T\text{,}%
\]
is a true $\mathbb{F}$-martingale on $[t,T]$.
\end{lemma}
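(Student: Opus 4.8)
The plan is to recognize the process in question as a continuous local martingale and then upgrade it to a true martingale by controlling the expectation of its running supremum. Setting
\[
I_s := \int_t^s \left( A_r^\pi \varphi_r^1 + N_r \pi_r \right) dW_r^1 + \int_t^s A_r^\pi \varphi_r^2 \, dW_r^2, \quad t \le s \le T,
\]
each stochastic integral is well defined, so $I$ is a continuous local martingale, since all integrands are progressively measurable with a.s. finite quadratic variation on the compact interval $[t,T]$. I would then invoke the standard fact that a continuous local martingale $I$ with $\mathbb{E}\big[\sup_{t\le s\le T}|I_s|\big]<\infty$ is automatically a true (indeed uniformly integrable) martingale; the whole proof thus reduces to this single integrability bound.

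By the Burkholder--Davis--Gundy inequality, and using $|A_r^\pi\varphi_r^1+N_r\pi_r|^2\le 2|A_r^\pi|^2|\varphi_r|^2+2|N_r|^2|\pi_r|^2$, controlling $\mathbb{E}[\sup_{t\le s\le T}|I_s|]$ reduces, up to a universal constant, to bounding
\[
\mathbb{E}\left[ \left( \int_t^T |A_r^\pi|^2 |\varphi_r|^2 \, dr \right)^{\frac12} \right] + \mathbb{E}\left[ \left( \int_t^T |N_r|^2 |\pi_r|^2 \, dr \right)^{\frac12} \right].
\]
For each term I would pull the continuous factor out of the time integral as a pathwise supremum and then apply the Cauchy--Schwarz inequality, obtaining
\[
\mathbb{E}\left[ \sup_{t\le s\le T} |A_s^\pi| \left( \int_t^T |\varphi_r|^2 dr \right)^{\frac12} \right] \le \left( \mathbb{E}\big[ \sup_{t\le s\le T}|A_s^\pi|^2 \big] \right)^{\frac12} \left( \mathbb{E}\Big[ \int_t^T |\varphi_r|^2 dr \Big] \right)^{\frac12},
\]
together with the analogous bound in which $\sup|A^\pi|$ and $\int|\varphi_r|^2dr$ are replaced by $\sup|N|$ and $\int|\pi_r|^2dr$.

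It then remains only to verify that the four scalar factors are finite. The moment $\mathbb{E}[\sup_{t\le s\le T}|A_s^\pi|^2]<\infty$ follows exactly as in the proof of Lemma \ref{XZmart}, via the energy inequality (\ref{2105}) for $\pi\in\mathbb{L}_{BMO}^2[t,T]$ combined with Burkholder--Davis--Gundy (cf. estimate (\ref{X})), since $A^\pi$ coincides with $X^{t,\pi}$ there. For the factors built from $N$ and $\varphi$, assumption (\ref{pa}) gives $N_T=U_x(T,X_T^{\ast,P}+P)\in L^2$, so $N$ is an $L^2$-bounded martingale: Doob's inequality yields $\mathbb{E}[\sup_{t\le s\le T}|N_s|^2]\le 4\,\mathbb{E}[|N_T|^2]<\infty$, while the It\^o isometry applied to $dN_s=\varphi_s^\top dW_s$ gives $\mathbb{E}[\int_t^T|\varphi_r|^2dr]=\mathbb{E}[|N_T|^2]-|N_t|^2<\infty$. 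Finally $\mathbb{E}[\int_t^T|\pi_r|^2dr]<\infty$ is immediate from $\pi\in\mathbb{L}_{BMO}^2[t,T]$. Combining these estimates gives $\mathbb{E}[\sup_{t\le s\le T}|I_s|]<\infty$, and the martingale property follows.

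The only genuine subtlety, and the step where I would take care, is the decoupling of the mixed products $A_r^\pi\varphi_r$ and $N_r\pi_r$: the argument depends on pairing the pathwise-supremum moment of one continuous factor with the $L^2$-in-time norm of the other, so that BDG and Doob supply the sup-bounds while the It\^o isometry and the BMO property supply the square-integrability. Routing through the running-supremum (an $L^1$ bound on $\langle I\rangle_T^{1/2}$) rather than attempting a direct $L^2$ bound on the integrands is essential, as the latter would require fourth moments of $\sup|A^\pi|$ and integrability of $(\int|\varphi_r|^2dr)^2$ that are not available; no self-improving or fixed-point estimate is needed, and everything reduces to the a priori $L^2$-integrability granted by (\ref{pa}) and the admissibility of $\pi$.
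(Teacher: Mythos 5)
Your proof is correct and follows essentially the same route as the paper's: Burkholder--Davis--Gundy to reduce to an $L^1$ bound on the square root of the quadratic variation, decoupling each product integrand via the running supremum of one factor against the time-$L^2$ norm of the other, and then invoking the estimate (\ref{X}) for $A^\pi$, Doob's inequality together with assumption (\ref{pa}) for $N$, square integrability of $\varphi$ from the martingale representation, and the BMO property of $\pi$. The only cosmetic difference is that you close the decoupling with Cauchy--Schwarz where the paper uses Young's inequality (sum of second moments), which changes nothing of substance.
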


\begin{proof}
By (\ref{X}), we know that $\mathbb{E}[\sup_{t\leq s\leq T}|A_{s}^{\pi}%
|^{2}]<\infty$. The Burkholder-Davis-Gundy inequality yields%
\begin{align*}
\mathbb{E}\left[  \sup_{t\leq s\leq T}\left \vert \int_{t}^{s}A_{r}^{\pi
}\varphi_{r}^{i}dW_{r}^{i}\right \vert \right]   &  \leq C\mathbb{E}\left[
\left(  \int_{t}^{T}\left \vert A_{r}^{\pi}\varphi_{r}^{i}\right \vert
^{2}dr\right)  ^{\frac{1}{2}}\right] \\
&  \leq C\mathbb{E}\left[  \sup_{t\leq s\leq T}\left \vert A_{s}^{\pi
}\right \vert ^{2}\right]  +C\mathbb{E}\left[  \int_{t}^{T}\left \vert
\varphi_{r}^{i}\right \vert ^{2}dr\right]  <\infty.
\end{align*}
Moreover, by the assumption of square integrability of $U_{x}(T,X_{T}^{\ast
}+P)$ and Doob's $L^{p}$-inequality, we deduce that%
\[
\mathbb{E}\left[  \sup_{t\leq s\leq T}\left \vert N_{s}\right \vert ^{2}\right]
\leq C\mathbb{E}\left[  \left \vert U_{x}\left(  T,X_{T}^{\ast}+P\right)
\right \vert ^{2}\right]  <\infty.
\]
Using similar arguments, we obtain that $\mathbb{E}[\sup_{t\leq s\leq T}%
|\int_{t}^{s}N_{r}\pi_{r}dW_{r}^{1}|]<\infty$.
\end{proof}

\subsubsection{Proof of Theorem \ref{DO1}}

\textbf{Part (i).} By Lemma \ref{NM}, we have that $q^{\ast,P}\in
\mathcal{Q}_{[t,T]}$. Next, we show that, for any $q\in \mathcal{Q}_{[t,T]}$,%
\begin{equation}
\mathbb{E}\left[  \left.  \tilde{U}\left(  T,\hat{\eta}M_{T}^{t,q^{\ast,P}%
}\right)  +\hat{\eta}M_{T}^{t,q^{\ast,P}}P\right \vert \mathcal{F}_{t}\right]
\leq \mathbb{E}\left[  \left.  \tilde{U}\left(  T,\hat{\eta}M_{T}^{t,q}\right)
+\hat{\eta}M_{T}^{t,q}P\right \vert \mathcal{F}_{t}\right]  . \label{Ob1}%
\end{equation}
Note that by Lemma \ref{NM} and the definition of $\hat{\eta}$ in (\ref{eta}),%
\begin{equation}
\hat{\eta}M_{T}^{t,q^{\ast,P}}=U_{x}\left(  T,X_{T}^{P}+P\right)  . \label{Z*}%
\end{equation}
From part (ii) in Theorem \ref{FBSDE1}, we have that $X_{T}^{P}=\xi+\int
_{t}^{T}\pi_{r}^{\ast,P}(\theta_{r}dr+dW_{r}^{1})$ with $\pi^{\ast,P}%
\in \mathcal{A}_{[t,T]}$ as in (\ref{Opi}). It follows from (\ref{dr0}),
(\ref{Z*}) and Lemma \ref{XZmart} that%
\begin{align}
\mathbb{E}\left[  \left.  \tilde{U}\left(  T,\hat{\eta}M_{T}^{t,q^{\ast,P}%
}\right)  +\hat{\eta}M_{T}^{t,q^{\ast,P}}P\right \vert \mathcal{F}_{t}\right]
=  &  \  \mathbb{E}\left[  \left.  \tilde{U}\left(  T,U_{x}\left(  T,X_{T}%
^{P}+P\right)  \right)  +U_{x}\left(  T,X_{T}^{P}+P\right)  P\right \vert
\mathcal{F}_{t}\right] \nonumber \\
=  &  \  \mathbb{E}\left[  \left.  U\left(  T,X_{T}^{P}+P\right)  -U_{x}\left(
T,X_{T}^{P}+P\right)  X_{T}^{P}\right \vert \mathcal{F}_{t}\right] \nonumber \\
=  &  \  \mathbb{E}\left[  \left.  U\left(  T,X_{T}^{P}+P\right)  \right \vert
\mathcal{F}_{t}\right]  -\xi \hat{\eta}. \label{0301}%
\end{align}
On the other hand, for any $q\in \mathcal{Q}_{[t,T]}$, by (\ref{dr}) and Lemma
\ref{XZmart}, we easily deduce that%
\begin{align*}
\mathbb{E}\left[  \left.  \tilde{U}\left(  T,\hat{\eta}M_{T}^{t,q}\right)
+\hat{\eta}M_{T}^{t,q}P\right \vert \mathcal{F}_{t}\right]  \geq &
\  \mathbb{E}\left[  \left.  U\left(  T,X_{T}^{P}+P\right)  -\hat{\eta}%
M_{T}^{t,q}X_{T}^{P}\right \vert \mathcal{F}_{t}\right] \\
=  &  \  \mathbb{E}\left[  \left.  U\left(  T,X_{T}^{P}+P\right)  \right \vert
\mathcal{F}_{t}\right]  -\xi \hat{\eta},
\end{align*}
which together with (\ref{0301}) proves (\ref{Ob1}).

\textbf{Part (ii).} From the bidual relation (\ref{bidr}) and Lemma
\ref{XZmart}, we have, for any $\pi \in \mathcal{A}_{[t,T]}$, $q\in
\mathcal{Q}_{[t,T]}$, $\xi \in \cap_{p\geq1}L^{p}(\mathcal{F}_{t})$ and $\eta \in
L^{0,+}(\mathcal{F}_{t})$,
\begin{align}
&  \mathbb{E}\left[  \left.  U\left(  T,\xi+\int_{t}^{T}\pi_{r}\left(
\theta_{r}dr+dW_{r}^{1}\right)  +P\right)  \right \vert \mathcal{F}_{t}\right]
\nonumber \\
\leq &  \  \mathbb{E}\left[  \left.  \tilde{U}\left(  T,\eta M_{T}%
^{t,q}\right)  +\eta M_{T}^{t,q}\left(  \xi+\int_{t}^{T}\pi_{r}\left(
\theta_{r}dr+dW_{r}^{1}\right)  +P\right)  \right \vert \mathcal{F}_{t}\right]
\nonumber \\
=  &  \  \mathbb{E}\left[  \left.  \tilde{U}\left(  T,\eta M_{T}^{t,q}\right)
+\eta M_{T}^{t,q}P\right \vert \mathcal{F}_{t}\right]  +\xi \eta. \label{0401}%
\end{align}
Thus, due to the arbitrariness of $\pi,q$ and $\eta$, we obtain
\begin{equation}
u^{P}\left(  t,\xi;T\right)  \leq \operatorname*{essinf}_{\eta \in
L^{0,+}\left(  \mathcal{F}_{t}\right)  }\left(  \tilde{u}^{P}\left(
t,\eta;T\right)  +\xi \eta \right)  . \label{0402}%
\end{equation}
On the other hand, for $q^{\ast,P}$ defined in (\ref{qstar}), using Theorem
\ref{FBSDE1} (ii), equality (\ref{0301}) and part (i) of this theorem gives%
\begin{align}
u^{P}\left(  t,\xi;T\right)   &  =\mathbb{E}\left[  \left.  U\left(
T,X_{T}^{P}+P\right)  \right \vert \mathcal{F}_{t}\right] \nonumber \\
&  =\mathbb{E}\left[  \left.  \tilde{U}\left(  T,\hat{\eta}M_{T}^{t,q^{\ast
,P}}\right)  +\hat{\eta}M_{T}^{t,q^{\ast,P}}P\right \vert \mathcal{F}%
_{t}\right]  +\xi \hat{\eta}=\tilde{u}^{P}\left(  t,\hat{\eta};T\right)
+\xi \hat{\eta}, \label{xh}%
\end{align}
which together with (\ref{0402}) proves the desired result.

\subsubsection{Proof of Theorem \ref{DFBSDE1}}

In this proof, we omit the superscript $P$ for the dual FBSDE for notational simplicity.

\textbf{Part (i).} \underline{Step 1}. Since $q^{\ast,P}$ is optimal, for any
bounded $q\in \mathcal{Q}_{[t,T]}$ and $\varepsilon \in(0,1)$, we have
\begin{equation}
\lim_{\varepsilon \rightarrow0}\frac{1}{\varepsilon}\mathbb{E}\left[  \left.
\left(  \tilde{U}\left(  T,M_{T}^{\varepsilon,q}\right)  +M_{T}^{\varepsilon
,q}P\right)  -\left(  \tilde{U}\left(  T,M_{T}^{\ast}\right)  +M_{T}^{\ast
}P\right)  \right \vert \mathcal{F}_{t}\right]  \geq0, \label{2001}%
\end{equation}
where $M^{\varepsilon,q}:=\eta M^{t,q^{\ast,P}+\varepsilon q}$ and $M^{\ast
}:=\eta M^{t,q^{\ast,P}}$. Note that%
\[
\frac{1}{\varepsilon}\left(  \left(  \tilde{U}\left(  T,M_{T}^{\varepsilon
,q}\right)  +M_{T}^{\varepsilon,q}P\right)  -\left(  \tilde{U}\left(
T,M_{T}^{\ast}\right)  +M_{T}^{\ast}P\right)  \right)  =\frac{1}{\varepsilon
}\int_{0}^{1}\left(  \tilde{U}_{z}\left(  T,M_{T}^{\delta \varepsilon
,q}\right)  +P\right)  \partial_{\delta}M_{T}^{\delta \varepsilon,q}d\delta
\]
and%
\[
\partial_{\delta}M_{T}^{\delta \varepsilon,q}=M_{T}^{\delta \varepsilon
,q}\left(  -\int_{t}^{T}\varepsilon q_{r}dW_{r}^{2}-\int_{t}^{T}\left(
q_{r}^{\ast,P}+\delta \varepsilon q_{r}\right)  \varepsilon q_{r}dr\right)  .
\]
Therefore, by the uniformly integrability assumption and dominated
convergence, (\ref{2001}) implies that%
\begin{align*}
0  &  \leq \lim_{\varepsilon \rightarrow0}\mathbb{E}\left[  \left.  \int_{0}%
^{1}\left(  \tilde{U}_{z}\left(  T,M_{T}^{\delta \varepsilon,q}\right)
+P\right)  M_{T}^{\delta \varepsilon,q}\left(  -\int_{t}^{T}q_{r}dW_{r}%
^{2}-\int_{t}^{T}\left(  q_{r}^{\ast,P}+\delta \varepsilon q_{r}\right)
q_{r}dr\right)  d\delta \right \vert \mathcal{F}_{t}\right] \\
&  =\mathbb{E}\left[  \left.  \left(  \tilde{U}_{z}\left(  T,M_{T}^{\ast
}\right)  +P\right)  M_{T}^{\ast}\left(  -\int_{t}^{T}q_{r}dW_{r}^{2}-\int
_{t}^{T}q_{r}^{\ast,P}q_{r}dr\right)  \right \vert \mathcal{F}_{t}\right]  .
\end{align*}
Replacing $q$ by $-q$, we obtain that%
\begin{equation}
0=\mathbb{E}\left[  \left.  \left(  \tilde{U}_{z}\left(  T,M_{T}^{\ast
}\right)  +P\right)  M_{T}^{\ast}\left(  \int_{t}^{T}q_{r}dW_{r}^{2}+\int
_{t}^{T}q_{r}^{\ast,P}q_{r}dr\right)  \right \vert \mathcal{F}_{t}\right]
=\mathbb{E}\left[  \left.  \Gamma_{T}H_{T}^{q}\right \vert \mathcal{F}%
_{t}\right]  , \label{MH0}%
\end{equation}
where, for $0\leq t\leq s\leq T$, the processes $\Gamma$ and $H^{q}$ are
defined as%
\begin{equation}
\Gamma_{s}:=\mathbb{E}\left[  \left.  \left(  \tilde{U}_{z}\left(
T,M_{T}^{\ast}\right)  +P\right)  M_{T}^{\ast}\right \vert \mathcal{F}%
_{s}\right]  \quad \text{and}\quad H_{s}^{q}:=\int_{t}^{s}q_{r}dW_{r}^{2}%
+\int_{t}^{s}q_{r}^{\ast,P}q_{r}dr. \label{M}%
\end{equation}
From the martingale representation theorem, there exists a square integrable
density process $\psi=(\psi^{1},\psi^{2})$ on $[t,T]$ such that%
\begin{equation}
d\Gamma_{s}=\psi_{s}^{\top}dW_{s},\quad0\leq t\leq s\leq T. \label{Mr}%
\end{equation}

\underline{Step 2}. We introduce an $\mathbb{F}$-progressively measurable
process $\tilde{Y}$ on $[t,T]$ such that%
\[
\left(  \tilde{U}_{z}\left(  s,M_{s}^{\ast}\right)  +\tilde{Y}_{s}\right)
M_{s}^{\ast}=\Gamma_{s}=\mathbb{E}\left[  \left.  \left(  \tilde{U}_{z}\left(
T,M_{T}^{\ast}\right)  +P\right)  M_{T}^{\ast}\right \vert \mathcal{F}%
_{s}\right]  ,\quad t\leq s\leq T.
\]
In other words, $\tilde{Y}_{s}:=\frac{\Gamma_{s}}{M_{s}^{\ast}}-\tilde{U}%
_{z}\left(  s,M_{s}^{\ast}\right)  ,\ t\leq s\leq T, \text{ and } \tilde
{Y}_{T}=P.$ Applying It\^{o}'s formula to $\tilde{U}_{z}$ in (\ref{DUSPDE}),
$\Gamma$ in (\ref{Mr}) and to $M^{\ast}$, we obtain%
\begin{align}
d\tilde{Y}_{s}=  &  \left(  \frac{\Gamma_{s}}{M_{s}^{\ast}}\left(  \left \vert
\theta_{s}\right \vert ^{2}+\left \vert q_{s}^{\ast,P}\right \vert ^{2}\right)
+\frac{\psi_{s}^{1}\theta_{s}+\psi_{s}^{2}q_{s}^{\ast,P}}{M_{s}^{\ast}}%
-\tilde{\alpha}_{z}^{1}\left(  s,M_{s}^{\ast}\right)  \theta_{s}-\frac
{\tilde{\alpha}_{z}^{2}\left(  s,M_{s}^{\ast}\right)  \tilde{\alpha}_{zz}%
^{2}\left(  s,M_{s}^{\ast}\right)  }{\tilde{U}_{zz}\left(  s,M_{s}^{\ast
}\right)  }\right. \nonumber \\
&  +\tilde{U}_{zz}\left(  s,M_{s}^{\ast}\right)  M_{s}^{\ast}\left \vert
\theta_{s}\right \vert ^{2}+\frac{1}{2}\tilde{U}_{zzz}\left(  s,M_{s}^{\ast
}\right)  \frac{\left \vert \tilde{\alpha}_{z}^{2}\left(  s,M_{s}^{\ast
}\right)  \right \vert ^{2}}{\left \vert \tilde{U}_{zz}\left(  s,M_{s}^{\ast
}\right)  \right \vert ^{2}}-\frac{1}{2}\tilde{U}_{zzz}\left(  s,M_{s}^{\ast
}\right)  \left \vert M_{s}^{\ast}q_{s}^{\ast,P}\right \vert ^{2}\nonumber \\
&  \bigg.+\tilde{\alpha}_{zz}^{2}\left(  s,M_{s}^{\ast}\right)  M_{s}^{\ast
}q_{s}^{\ast,P}\bigg)ds+\left(  \frac{\Gamma_{s}\theta_{s}+\psi_{s}^{1}}%
{M_{s}^{\ast}}-\tilde{\alpha}_{z}^{1}\left(  s,M_{s}^{\ast}\right)  +\tilde
{U}_{zz}\left(  s,M_{s}^{\ast}\right)  M_{s}^{\ast}\theta_{s}\right)
dW_{s}^{1}\nonumber \\
&  +\left(  \frac{\Gamma_{s}q_{s}^{\ast,P}+\psi_{s}^{2}}{M_{s}^{\ast}}%
-\tilde{\alpha}_{z}^{2}\left(  s,M_{s}^{\ast}\right)  +\tilde{U}_{zz}\left(
s,M_{s}^{\ast}\right)  M_{s}^{\ast}q_{s}^{\ast,P}\right)  dW_{s}^{2}.
\label{Ytil}%
\end{align}
Next, let, for $0\leq t\leq s\leq T$,%
\begin{equation}%
\begin{array}
[c]{l}%
\tilde{Z}_{s}^{1}:=\dfrac{\Gamma_{s}\theta_{s}+\psi_{s}^{1}}{M_{s}^{\ast}%
}-\tilde{\alpha}_{z}^{1}\left(  s,M_{s}^{\ast}\right)  +\tilde{U}_{zz}\left(
s,M_{s}^{\ast}\right)  M_{s}^{\ast}\theta_{s},\\
\tilde{Z}_{s}^{2}:=\dfrac{\Gamma_{s}q_{s}^{\ast,P}+\psi_{s}^{2}}{M_{s}^{\ast}%
}-\tilde{\alpha}_{z}^{2}\left(  s,M_{s}^{\ast}\right)  +\tilde{U}_{zz}\left(
s,M_{s}^{\ast}\right)  M_{s}^{\ast}q_{s}^{\ast,P}.
\end{array}
\label{DZ}%
\end{equation}
It, then, follows that%
\begin{equation}%
\begin{array}
[c]{l}%
\psi_{s}^{1}=\left(  \tilde{Z}_{s}^{1}+\tilde{\alpha}_{z}^{1}\left(
s,M_{s}^{\ast}\right)  -\tilde{U}_{zz}\left(  s,M_{s}^{\ast}\right)
M_{s}^{\ast}\theta_{s}\right)  M_{s}^{\ast}-\Gamma_{s}\theta_{s},\\
\psi_{s}^{2}=\left(  \tilde{Z}_{s}^{2}+\tilde{\alpha}_{z}^{2}\left(
s,M_{s}^{\ast}\right)  -\tilde{U}_{zz}\left(  s,M_{s}^{\ast}\right)
M_{s}^{\ast}q_{s}^{\ast,P}\right)  M_{s}^{\ast}-\Gamma_{s}q_{s}^{\ast,P}.
\end{array}
\label{psi}%
\end{equation}
Substituting (\ref{psi}) into (\ref{Ytil}) yields%
\begin{equation}
d\tilde{Y}_{s}=-\tilde{f}\left(  s,M_{s}^{\ast},\tilde{Y}_{s},\tilde{Z}%
_{s}\right)  ds+\tilde{Z}_{s}^{\top}dW_{s}, \quad0\leq t\leq s\leq T,
\label{yy}%
\end{equation}
where%
\begin{align}
\tilde{f}\left(  s,d,y,z\right)  :=  &  -z^{1}\theta_{s}-z^{2}q_{s}^{\ast
,P}-\tilde{\alpha}_{z}^{2}\left(  s,d\right)  q_{s}^{\ast,P}+\tilde{U}%
_{zz}\left(  s,d\right)  d\left \vert q_{s}^{\ast,P}\right \vert ^{2}%
+\frac{\tilde{\alpha}_{z}^{2}\left(  s,d\right)  \tilde{\alpha}_{zz}%
^{2}\left(  s,d\right)  }{\tilde{U}_{zz}\left(  s,d\right)  }\nonumber \\
&  -\frac{1}{2}\tilde{U}_{zzz}\left(  s,d\right)  \frac{\left \vert
\tilde{\alpha}_{z}^{2}\left(  s,d\right)  \right \vert ^{2}}{\left \vert
\tilde{U}_{zz}\left(  s,d\right)  \right \vert ^{2}}+\frac{1}{2}\tilde{U}%
_{zzz}\left(  s,d\right)  \left \vert dq_{s}^{\ast,P}\right \vert ^{2}%
-\tilde{\alpha}_{zz}^{2}\left(  s,d\right)  dq_{s}^{\ast,P}. \label{ff}%
\end{align}

\underline{Step 3}. Applying It\^{o}'s formula to $\Gamma$ in (\ref{Mr}) with
$\psi$ as in (\ref{psi}) and $H^{q}$ as in (\ref{M}), we obtain that, for any
bounded $q\in \mathcal{Q}_{[t,T]}$,
\[
d\left(  \Gamma_{s}H_{s}^{q}\right)  =\ q_{s}\left(  \tilde{Z}_{s}^{2}%
+\tilde{\alpha}_{z}^{2}\left(  s,M_{s}^{\ast}\right)  -\tilde{U}_{zz}\left(
s,M_{s}^{\ast}\right)  M_{s}^{\ast}q_{s}^{\ast,P}\right)  M_{s}^{\ast}%
ds+H_{s}^{q}\psi_{s}^{1}dW_{s}^{1}+\left(  \Gamma_{s}q_{s}+H_{s}^{q}\psi
_{s}^{2}\right)  dW_{s}^{2}.
\]

From the definition of $H^{q}$ in (\ref{M}) and the Burkholder-Davis-Gundy
inequality, we have that $\mathbb{E}[\sup_{t\leq s\leq T}|H_{s}^{q}%
|^{2}]<\infty$. In turn, applying similar arguments to the ones used to
establish Lemma \ref{IM}, we deduce that $\int_{t}^{s}H_{r}^{q}\psi_{r}%
^{1}dW_{r}^{1}+\int_{t}^{s}\left(  \Gamma_{r}q_{r}+H_{r}^{q}\psi_{r}%
^{2}\right)  dW_{r}^{2}$, $0\leq t\leq s\leq T$, is a true $\mathbb{F}$-martingale.

Thus, (\ref{MH0}) gives%
\[
\mathbb{E}\left[  \left.  \int_{t}^{T}q_{s}\left(  \tilde{Z}_{s}^{2}%
+\tilde{\alpha}_{z}^{2}\left(  s,M_{s}^{\ast}\right)  -\tilde{U}_{zz}\left(
s,M_{s}^{\ast}\right)  M_{s}^{\ast}q_{s}^{\ast,P}\right)  M_{s}^{\ast
}ds\right \vert \mathcal{F}_{t}\right]  =0,
\]
which together with the arbitrariness of $q$ implies that, for $0\leq t\leq
s\leq T$,%
\begin{equation}
q_{s}^{\ast,P}=\frac{\tilde{Z}_{s}^{2}+\tilde{\alpha}_{z}^{2}\left(
s,M_{s}^{\ast}\right)  }{\tilde{U}_{zz}\left(  s,M_{s}^{\ast}\right)
M_{s}^{\ast}}. \label{qq}%
\end{equation}

\underline{Step 4}. Combining (\ref{qq}) and (\ref{ff}) yields that $\tilde
{f}$ in (\ref{ff}) is given by
\[
\tilde{f}\left(  s,d,y,z\right)  =-z^{1}\theta_{s}+\frac{1}{2}\tilde{U}%
_{zzz}\left(  s,d\right)  \frac{\left \vert z^{2}\right \vert ^{2}}{\left \vert
\tilde{U}_{zz}\left(  s,d\right)  \right \vert ^{2}}+\tilde{U}_{zzz}\left(
s,d\right)  \frac{\tilde{\alpha}_{z}^{2}\left(  s,d\right)  z^{2}}{\left \vert
\tilde{U}_{zz}\left(  s,d\right)  \right \vert ^{2}}-\frac{\tilde{\alpha}%
_{zz}^{2}\left(  s,d\right)  z^{2}}{\tilde{U}_{zz}\left(  s,d\right)  }.
\]
Thus,
\[
d\tilde{Y}_{s}=-\tilde{f}\left(  s,D_{s},\tilde{Y}_{s},\tilde{Z}_{s}\right)
ds+\tilde{Z}_{s}^{\top}dW_{s},\quad \tilde{Y}_{T}=P\text{,}%
\]
and the process $M^{\ast} = \eta M^{t, q^{\ast,P}} $, with $q^{\ast,P} $ given
in (\ref{qq}), provides the forward component of the FBSDE (\ref{DFBSDEsystem}).

\textbf{Part (ii).} Using the duality relations (\ref{dr1}) and (\ref{dr5}),
we rewrite $q^{\ast,P}$ as
\[
q_{s}^{\ast,P}=-\frac{U_{xx}\left(  s,-\tilde{U}_{z}\left(  s,D_{s}\right)
\right)  }{U_{x}\left(  s,-\tilde{U}_{z}\left(  s,D_{s}\right)  \right)
}\left(  \frac{\alpha_{x}^{2}\left(  s,-\tilde{U}_{z}\left(  s,D_{s}\right)
\right)  }{U_{xx}\left(  s,-\tilde{U}_{z}\left(  s,D_{s}\right)  \right)
}+\tilde{Z}_{s}^{2}\right)  ,
\]
which, together with Assumption \ref{Assump} (ii) yields that $q^{\ast,P}%
\in \mathcal{Q}_{[t,T]}$.

Next, we show that, for any $q\in \mathcal{Q}_{[t,T]}$,
\[
\mathbb{E}\left[  \left.  \tilde{U}\left(  T,D_{T}\right)  +D_{T}P\right \vert
\mathcal{F}_{t}\right]  \leq \mathbb{E}\left[  \left.  \tilde{U}\left(  T,\eta
M_{T}^{t,q}\right)  +\eta M_{T}^{t,q}P\right \vert \mathcal{F}_{t}\right]  .
\]
Indeed, from the definition of $q^{\ast,P}$, we have that $\eta M^{t,q^{\ast
,P}}=D$. Furthermore, from Lemma \ref{UZ} and Lemma \ref{XZmart}, we have that
$(\tilde{U}_{z}(s,D_{s})+\tilde{Y}_{s})D_{s}$ is a true $\mathbb{F}%
$-martingale. Thus, (\ref{dr0}) implies
\begin{align}
\mathbb{E}\left[  \left.  \tilde{U}\left(  T,D_{T}\right)  +D_{T}P\right \vert
\mathcal{F}_{t}\right]   &  =\mathbb{E}\left[  \left.  U\left(  T,-\tilde
{U}_{z}\left(  T,D_{T}\right)  \right)  +D_{T}\left(  \tilde{U}_{z}\left(
T,D_{T}\right)  +P\right)  \right \vert \mathcal{F}_{t}\right] \nonumber \\
&  =\mathbb{E}\left[  \left.  U\left(  T,-\tilde{U}_{z}\left(  T,D_{T}\right)
\right)  \right \vert \mathcal{F}_{t}\right]  +\eta \left(  \tilde{U}_{z}\left(
t,\eta \right)  +\tilde{Y}_{t}\right)  . \label{1901}%
\end{align}
On the other hand, from (\ref{dr}) and (\ref{UZpi}), we have
\begin{align}
&  \mathbb{E}\left[  \left.  \tilde{U}\left(  T,\eta M_{T}^{t,q}\right)  +\eta
M_{T}^{t,q}P\right \vert \mathcal{F}_{t}\right] \nonumber \\
\geq &  \  \mathbb{E}\left[  \left.  U\left(  T,-\tilde{U}_{z}\left(
T,D_{T}\right)  \right)  +\eta M_{T}^{t,q}\tilde{U}_{z}\left(  T,D_{T}\right)
+\eta M_{T}^{t,q}P\right \vert \mathcal{F}_{t}\right] \nonumber \\
=  &  \  \mathbb{E}\left[  \left.  U\left(  T,-\tilde{U}_{z}\left(
T,D_{T}\right)  \right)  +\eta M_{T}^{t,q}\left(  \tilde{U}_{z}\left(
t,\eta \right)  +\tilde{Y}_{t}-\int_{t}^{T}\pi_{r}^{\ast,P}\left(  \theta
_{r}dr+dW_{r}^{1}\right)  \right)  \right \vert \mathcal{F}_{t}\right]
\nonumber \\
=  &  \  \mathbb{E}\left[  \left.  U\left(  T,-\tilde{U}_{z}\left(
T,D_{T}\right)  \right)  \right \vert \mathcal{F}_{t}\right]  +\eta \left(
\tilde{U}_{z}\left(  t,\eta \right)  +\tilde{Y}_{t}\right)  , \label{1902}%
\end{align}
where, in the last equality, we used Lemma~\ref{XZmart} with $\pi$ replaced by
$\pi^{\ast,P} $, and $q \in \mathbb{L}_{\mathrm{BMO}}^{2}[t,T] $.  Combining
(\ref{1901}) and (\ref{1902}) we conclude.

\subsubsection{Proof of Theorem \ref{DOP1}}

\textbf{Part (i).} From Lemma \ref{UZ} we have that $\pi^{\ast,P}%
\in \mathcal{A}_{[t,T]}$ and
\[
\mathbb{E}\left[  \left.  U\left(  T,\hat{\xi}+\int_{t}^{T}\pi_{r}^{\ast
,P}\left(  \theta_{r}dr+dW_{r}^{1}\right)  +P\right)  \right \vert
\mathcal{F}_{t}\right]  =\mathbb{E}\left[  \left.  U\left(  T,-\tilde{U}%
_{z}\left(  T,D_{T}^{P}\right)  \right)  \right \vert \mathcal{F}_{t}\right]
.
\]
Next, we verify that, for any $\pi \in \mathcal{A}_{[t,T]}$,
\begin{equation}
\mathbb{E}\left[  \left.  U\left(  T,\hat{\xi}+\int_{t}^{T}\pi_{r}\left(
\theta_{r}dr+dW_{r}^{1}\right)  +P\right)  \right \vert \mathcal{F}_{t}\right]
\leq \mathbb{E}\left[  \left.  U\left(  T,\hat{\xi}+\int_{t}^{T}\pi_{r}%
^{\ast,P}\left(  \theta_{r}dr+dW_{r}^{1}\right)  +P\right)  \right \vert
\mathcal{F}_{t}\right]  . \label{1905}%
\end{equation}
To this end, by the concavity of $U$, the equality (\ref{UZpi}) and the
duality equalities (\ref{dr1}), we obtain that%
\begin{align}
&  \mathbb{E}\left[  \left.  U\left(  T,\hat{\xi}+\int_{t}^{T}\pi_{r}\left(
\theta_{r}dr+dW_{r}^{1}\right)  +P\right)  \right \vert \mathcal{F}_{t}\right]
-\mathbb{E}\left[  \left.  U\left(  T,\hat{\xi}+\int_{t}^{T}\pi_{r}^{\ast
,P}\left(  \theta_{r}dr+dW_{r}^{1}\right)  +P\right)  \right \vert
\mathcal{F}_{t}\right] \nonumber \\
\leq &  \  \mathbb{E}\left[  \left.  U_{x}\left(  T,\hat{\xi}+\int_{t}^{T}%
\pi_{r}^{\ast,P}\left(  \theta_{r}dr+dW_{r}^{1}\right)  +P\right)  \int
_{t}^{T}\left(  \pi_{r}-\pi_{r}^{\ast,P}\right)  \left(  \theta_{r}%
dr+dW_{r}^{1}\right)  \right \vert \mathcal{F}_{t}\right] \nonumber \\
=  &  \  \mathbb{E}\left[  \left.  U_{x}\left(  T,-\tilde{U}_{z}\left(
T,D_{T}^{P}\right)  \right)  \int_{t}^{T}\left(  \pi_{r}-\pi_{r}^{\ast
,P}\right)  \left(  \theta_{r}dr+dW_{r}^{1}\right)  \right \vert \mathcal{F}%
_{t}\right] \nonumber \\
=  &  \  \mathbb{E}\left[  \left.  D_{T}^{P}\int_{t}^{T}\left(  \pi_{r}-\pi
_{r}^{\ast,P}\right)  \left(  \theta_{r}dr+dW_{r}^{1}\right)  \right \vert
\mathcal{F}_{t}\right]  . \label{1904}%
\end{align}
Note that, by Theorem \ref{DFBSDE1} (ii), we have $D^{P}=\eta M^{t,q^{\ast,P}%
}$ with $q^{\ast,P}\in \mathcal{Q}_{[t,T]}$, and $\pi-\pi^{\ast,P}%
\in \mathcal{A}_{[t,T]}$. Thus, Lemma \ref{XZmart} gives $\mathbb{E}[D_{T}^{P}%
{\textstyle \int_{t}^{T}}
\left(  \pi_{r}-\pi_{r}^{\ast,P}\right)  \left(  \theta_{r}dr+dW_{r}%
^{1}\right)  |\mathcal{F}_{t}]=0$, which, together with (\ref{1904}),
establishes (\ref{1905}).

\textbf{Part (ii).} From the inequality (\ref{0401}), we obtain that
\[
\tilde{u}^{P}\left(  t,\eta;T\right)  \geq \operatorname*{esssup}_{\xi \in
\cap_{p\geq1}L^{p}(\mathcal{F}_{t})}(u^{P}\left(  t,\xi;T\right)  -\xi \eta).
\]
On the other hand, using (i) of the theorem and the equalities (\ref{1901}),
we have that
\[
u^{P}\left(  t,\hat{\xi};T\right)  =\mathbb{E}\left[  \left.  U\left(
T,-\tilde{U}_{z}\left(  T,D_{T}^{P}\right)  \right)  \right \vert
\mathcal{F}_{t}\right]  =\mathbb{E}\left[  \left.  \tilde{U}\left(
T,D_{T}^{P}\right)  +D_{T}^{P}P\right \vert \mathcal{F}_{t}\right]  -\eta
\hat{\xi},
\]
which, combined with Theorem \ref{DFBSDE1} (ii), yields that $u^{P}(t,\hat
{\xi};T)=\tilde{u}^{P}(t,\eta;T)+\hat{\xi}\eta.$

\subsubsection{Proof of Proposition \ref{MI}}

The key idea behind the proof of the maturity independence of the value
function $\tilde{u}^{P} $ is to exploit the self-generation property of
$\tilde{U} $.

For $0\leq t\leq T\leq T^{\prime}$ and any $q\in \mathcal{Q}_{[t,T^{\prime}]}$,
we have
\[
M_{T^{\prime}}^{t,q}=M_{T}^{t,q}M_{T^{\prime}}^{T,q}\quad \text{and}%
\quad \mathbb{E}\mathcal{[}M_{T^{\prime}}^{T,q}|\mathcal{F}_{T}]=1,
\]
since $q\in \mathbb{L}_{BMO}^{2}[t,T^{\prime}]$. Then, from (\ref{CP}) and the
tower property of conditional expectation, we obtain%
\begin{align}
\tilde{u}^{P}\left(  t,\eta;T^{\prime}\right)  =  &  \  \operatorname*{essinf}%
_{q\in \mathcal{Q}_{[t,T^{\prime}]}}\mathbb{E}\left[  \left.  \mathbb{E}\left[
\left.  \tilde{U}\left(  T^{\prime},\eta M_{T}^{t,q}M_{T^{\prime}}%
^{T,q}\right)  +\eta M_{T}^{t,q}M_{T^{\prime}}^{T,q}P\right \vert
\mathcal{F}_{T}\right]  \right \vert \mathcal{F}_{t}\right] \nonumber \\
=  &  \  \operatorname*{essinf}_{q\in \mathcal{Q}_{[t,T^{\prime}]}}%
\mathbb{E}\left[  \left.  \mathbb{E}\left[  \left.  \tilde{U}\left(
T^{\prime},\eta M_{T}^{t,q}M_{T^{\prime}}^{T,q}\right)  \right \vert
\mathcal{F}_{T}\right]  +\eta M_{T}^{t,q}P\right \vert \mathcal{F}_{t}\right]
. \label{0501}%
\end{align}
On the other hand, from Proposition \ref{SG}, we have that, for any
$q\in \mathcal{Q}_{\left[  T,T^{\prime}\right]  }$,
\[
\mathbb{E}\left[  \left.  \tilde{U}\left(  T^{\prime},\eta M_{T}%
^{t,q}M_{T^{\prime}}^{T,q}\right)  \right \vert \mathcal{F}_{T}\right]
\geq \tilde{U}\left(  T,\eta M_{T}^{t,q}\right)  ,
\]
and, furthermore, that there exists $q^{\ast}\in \mathcal{Q}[T,T^{\prime}]$
such that%
\[
\tilde{U}\left(  t,\eta M_{T}^{t,q}\right)  =\mathbb{E}\left[  \left.
\tilde{U}\left(  T^{\prime},\eta M_{T}^{t,q}M_{T^{\prime}}^{T,q^{\ast}%
}\right)  \right \vert \mathcal{F}_{T}\right]  .
\]
It, then, follows that%
\[
\tilde{u}^{P}\left(  t,\eta;T^{\prime}\right)  \geq \operatorname*{essinf}%
_{q\in \mathcal{Q}_{[t,T]}}\mathbb{E}\left[  \left.  \tilde{U}\left(  T,\eta
M_{T}^{t,q}\right)  +\eta M_{T}^{t,q}P\right \vert \mathcal{F}_{t}\right]  ,
\]
and%
\begin{align*}
\tilde{u}^{P}\left(  t,\eta;T^{\prime}\right)   &  \leq \operatorname*{essinf}%
_{q\in \mathcal{Q}_{[t,T]}}\mathbb{E}\left[  \left.  \mathbb{E}\left[  \left.
\tilde{U}\left(  T^{\prime},\eta M_{T}^{t,q}M_{T^{\prime}}^{T,q^{\ast}%
}\right)  \right \vert \mathcal{F}_{T}\right]  +\eta M_{T}^{t,q}P\right \vert
\mathcal{F}_{t}\right] \\
&  =\operatorname*{essinf}_{q\in \mathcal{Q}_{[t,T]}}\mathbb{E}\left[  \left.
\tilde{U}\left(  t,\eta M_{T}^{t,q}\right)  +\eta M_{T}^{t,q}P\right \vert
\mathcal{F}_{t}\right]  .
\end{align*}
Combining the above yields $\tilde{u}^{P}(t,\eta;T^{\prime})=\tilde{u}%
^{P}(t,\eta;T)$.

To show the maturity independence of the value function $u^{P}$, we first
observe that FBSDE (\ref{FBSDEsystem}) on $[t,T]$ with initial-terminal
condition $(\xi,P)$ can be extended to $[t,T^{\prime}]$. Indeed, if
$(X^{P},Y^{P},Z^{P})$ is a solution to FBSDE (\ref{FBSDEsystem}) on $[t,T]$
with initial-terminal condition $(\xi,P)$, then $(\bar{X}^{P},\bar{Y}^{P}%
,\bar{Z}^{P})$ defined by
\begin{align*}
\bar{X}_{s}^{P}  &  :=X_{s}^{P}\mathbf{1}_{\left[  t,T\right]  }%
(s)+X_{s}^{T^{\prime},P}\mathbf{1}_{\left(  T,T^{\prime}\right]  }(s),\\
\bar{Y}_{s}^{P}  &  :=Y_{s}^{P}\mathbf{1}_{\left[  t,T\right]  }%
(s)+P\mathbf{1}_{\left(  T,T^{\prime}\right]  }(s),\\
\bar{Z}_{s}^{P}  &  :=Z_{s}^{P}\mathbf{1}_{\left[  t,T\right]  }%
(s)+0\cdot \mathbf{1}_{\left(  T,T^{\prime}\right]  }(s),
\end{align*}
on $0\leq t\leq s\leq T^{\prime}$, is a solution to FBSDE (\ref{FBSDEsystem})
on $[t,T^{\prime}]$ with initial-terminal condition $(\xi,P)$, where
$X_{s}^{T^{\prime},P}:=\tilde{X}_{s}^{P}-P$, $T\leq s\leq T^{\prime}$, with
$\tilde{X}^{P}$ satisfying SDE (\ref{SDE}) on $[T,T^{\prime}]$ with initial
condition $\tilde{X}_{T}^{P}=X_{T}^{P}+P$. Thus, using the first part of the
proposition and Theorem \ref{DO1} (ii), we obtain that%
\[
u^{P}\left(  t,\xi;T^{\prime}\right)  =\operatorname*{essinf}_{\eta \in
L^{0,+}\left(  \mathcal{F}_{t}\right)  }\left(  \tilde{u}^{P}\left(
t,\eta;T^{\prime}\right)  +\xi \eta \right)  =\operatorname*{essinf}_{\eta \in
L^{0,+}\left(  \mathcal{F}_{t}\right)  }\left(  \tilde{u}^{P}\left(
t,\eta;T\right)  +\xi \eta \right)  =u^{P}\left(  t,\xi;T\right)  ,
\]
and we easily conclude.

\subsubsection{\emph{Proof of Theorem \ref{Mar}}}

In line with \cite[Theorem 3.4]{FI2020}, the main steps are to verify all the
conditions in Theorem \ref{3.3}, which in turn yields the global existence and
uniqueness of the FBSDE \eqref{MarFBSDE}. For notational simplicity, we omit
the superscript $P $ from the FBSDE in the following.

Denote%
\[
B_{s}^{1}:=W_{s}^{1}+\int_{0}^{s}\theta \left(  V_{r}\right)  dr\quad \text{and}
\quad B_{s}^{2}:=W_{s}^{2}\text{,}\quad0\leq s\leq T\text{.}%
\]
Then, $B$ is a Brownian motion under a probability measure $\mathbb{Q}$,
equivalent to $\mathbb{P}$, by Girsanov's theorem. We note that $B$ and $W$
generate the same augmented filtration $\mathbb{F}=(\mathcal{F}_{t})_{t\geq0}$.

Therefore, the aim is now to solve FBSDE (\ref{MarFBSDE}) including the SDE
for stochastic factor $V$ in (\ref{sf}). To this end, we consider a FBSDE
satisfying MLLC in Definition \ref{Def_MLLC}. Namely, for sufficient small
$\varepsilon>0$ and any $0\leq t\leq s\leq T$,%
\begin{equation}
\left \{
\begin{array}
[c]{l}%
\tilde{V}_{s}=\tilde{v}+%
{\displaystyle \int_{t}^{s}}
\dfrac{1}{\varepsilon}\left(  l\left(  \varepsilon \tilde{V}_{r}\right)
-\rho \theta \left(  \varepsilon \tilde{V}_{r}\right)  \right)  dr+\dfrac
{1}{\varepsilon}%
{\displaystyle \int_{t}^{s}}
\left(  \rho dB_{r}^{1}+\sqrt{1-\rho^{2}}dB_{r}^{2}\right)  ,\text{ }\rho
\in(0,1),\\
X_{s}=x-%
{\displaystyle \int_{t}^{s}}
\left(  \psi \left(  r,\varepsilon \tilde{V}_{r},X_{r}+Y_{r}\right)  +Z_{r}%
^{1}\right)  dB_{r}^{1},\\
Y_{s}=P\left(  \varepsilon \tilde{V}_{T},X_{T}\right)  +%
{\displaystyle \int_{s}^{T}}
\left(  \dfrac{1}{2}\phi^{1}\left(  r,\varepsilon \tilde{V}_{r},X_{r}%
+Y_{r}\right)  \left \vert Z_{r}^{2}\right \vert ^{2}+\sqrt{1-\rho^{2}}\phi
^{2}\left(  r,\varepsilon \tilde{V}_{r},X_{r}+Y_{r}\right)  Z_{r}^{2}\right)
dr\\
\text{ \  \  \  \ }-%
{\displaystyle \int_{s}^{T}}
Z_{r}^{1}dB_{r}^{1}-%
{\displaystyle \int_{s}^{T}}
Z_{r}^{2}dB_{r}^{2}.
\end{array}
\right.  \label{FBSDEn}%
\end{equation}
This FBSDE is equivalent to the original one in that, if $(\tilde{V},X,Y,Z)$
solves FBSDE (\ref{FBSDEn}), then $V=\varepsilon \tilde{V}$ solves SDE
(\ref{sf}) with a modified initial condition and $(X,Y,Z)$ solves the original
FBSDE (\ref{MarFBSDE}). Therefore, it suffices to study FBSDE (\ref{FBSDEn})
which satisfies MLLC in Definition \ref{Def_MLLC}. To align with the form of
FBSDE (\ref{abMLLC}), we use the notation%
\[%
\begin{array}
[c]{l}%
H\left(  v,x\right)  =P\left(  v,x\right)  ,\quad \tilde{\mu}\left(
r,v\right)  =l\left(  v\right)  -\rho \theta \left(  v\right)  ,\\
\bar{\sigma}\left(  r,v,x+y,z^{1}\right)  =-\psi \left(  r,v,x+y\right)
-z^{1},\\
f\left(  r,v,x+y,z^{2}\right)  =-\dfrac{1}{2}\phi^{1}\left(  r,v,x+y\right)
\left \vert z^{2}\right \vert ^{2}-\sqrt{1-\rho^{2}}\phi^{2}\left(
r,v,x+y\right)  z^{2},
\end{array}
\]
and $B,\tilde{V},X$, respectively, play the role of $W,\tilde{X},\bar{X}$.

Next, we verify that all the conditions of Theorem \ref{3.3} hold which will,
in turn, yield that $I_{\max}^{M}=[0,T]$ and the global existence of the
solution would follow.

To this end, we first note that all conditions listed before Theorem \ref{3.3}
are satisfied. Indeed, the conditions for $H$ and $\tilde{\mu}$ follow from
Assumption \ref{A.2} (ii) and (iii), and one can easily verify the conditions
for $\bar{\sigma}$ and $f$ by taking derivatives and using Assumption
\ref{A.1} (ii) and Assumption \ref{A.2} (i).

Thus, the main task is to deduce a uniform bound of $w_{x}$ for any weakly
regular decoupling field $w:[t,T]\times \mathbb{R}\times \mathbb{R}%
\rightarrow \mathbb{R}$ of (\ref{FBSDEn}). Similarly to the proof of
\cite[Theorem 3.4]{FI2020}, the main tool is to use the dynamics of the
process%
\[
\Phi_{s}:=w_{x}\left(  s,\tilde{V}_{s},X_{s}\right)  \text{,}%
\]
which is bounded by $L_{\bar{\sigma},z^{1}}^{-1}=1$ (but not uniformly) due to
the weakly regularity. Note, however, that $w$ is also strongly regular by
Theorem \ref{MLLC}. Thus, differentiating FBSDE (\ref{FBSDEn}) with respect to
$x$, using the chain rule (see \cite[Lemma A.3.1]{Fthesis}), we obtain that
the weak derivatives of $X$ and $Y$ with respect to $x$ are
\begin{align*}
\partial_{x}X_{s}  &  =1+\int_{t}^{s}\left(  \delta_{x+y}^{\bar{\sigma}%
}\left(  \partial_{x}X_{r}+\partial_{x}Y_{r}\right)  -\partial_{x}Z_{r}%
^{1}\right)  dB_{r}^{1},\\
\partial_{x}Y_{s}  &  =\partial_{x}P\left(  \varepsilon \tilde{V}_{T}%
,X_{T}\right)  -\int_{s}^{T}\left(  \delta_{x+y}^{f}\left(  \partial_{x}%
X_{r}+\partial_{x}Y_{r}\right)  +\delta_{z}^{f}\partial_{x}Z_{r}^{2}\right)
dr-\int_{s}^{T}\partial_{x}Z_{r}^{1}dB_{r}^{1}-\int_{s}^{T}\partial_{x}%
Z_{r}^{2}dB_{r}^{2},
\end{align*}
where
\begin{align}
&  \delta_{x+y}^{\bar{\sigma}}:=-\psi_{x}\left(  r,\varepsilon \tilde{V}%
_{r},X_{r}+Y_{r}\right)  ,\nonumber \\
&  \delta_{x+y}^{f}:=-\dfrac{1}{2}\phi_{x}^{1}\left(  r,\varepsilon \tilde
{V}_{r},X_{r}+Y_{r}\right)  \left \vert Z_{r}^{2}\right \vert ^{2}-\sqrt
{1-\rho^{2}}\phi_{x}^{2}\left(  r,\varepsilon \tilde{V}_{r},X_{r}+Y_{r}\right)
Z_{r}^{2},\label{df}\\
&  \delta_{z}^{f}:=-\phi^{1}\left(  r,\varepsilon \tilde{V}_{r},X_{r}%
+Y_{r}\right)  Z_{r}^{2}-\sqrt{1-\rho^{2}}\phi^{2}\left(  r,\varepsilon
\tilde{V}_{r},X_{r}+Y_{r}\right)  .\nonumber
\end{align}
Next, we observe that, by Assumption \ref{A.1}, Assumption \ref{A.2} and the
uniform boundedness of $Z$ (see Remark \ref{ReZ}), $\delta_{x+y}^{\bar{\sigma
}}$, $\delta_{x+y}^{f}$ and $\delta_{z}^{f}$ are all uniformly bounded
independently of $t$ and $\varepsilon$.

Let $\tau$, $t\leq \tau \leq T$, be any stopping time such that $\partial
_{x}X>0$ a.e. on $[t,\tau]$. Using the chain rule in \cite[Lemma
A.3.1]{Fthesis}, we have%
\[
\partial_{x}Y_{s}=\partial_{x}\left(  w\left(  s,\tilde{V}_{s},X_{s}\right)
\right)  =\Phi_{s}\partial_{x}X_{s}\text{,}\quad0\leq t\leq s\leq \tau,
\]
i.e., $\Phi_{s}=\frac{\partial_{x}Y_{s}}{\partial_{x}X_{s}}$. Applying
It\^{o}'s formula yields that%
\[
d(\frac{1}{\partial_{x}X_{r}})=-\frac{1}{\partial_{x}X_{r}}\left(
\delta_{x+y}^{\bar{\sigma}}\left(  1+\Phi_{r}\right)  -\frac{\partial_{x}%
Z_{r}^{1}}{\partial_{x}X_{r}}\right)  dB_{r}^{1}+\frac{1}{\partial_{x}X_{r}%
}\left \vert \delta_{x+y}^{\bar{\sigma}}\left(  1+\Phi_{r}\right)
-\frac{\partial_{x}Z_{r}^{1}}{\partial_{x}X_{r}}\right \vert ^{2}dr,
\]
and, thus,%
\begin{align*}
&  d\Phi_{r}=\left(  \frac{\partial_{x}Z_{r}^{1}}{\partial_{x}X_{r}}-\Phi
_{r}\delta_{x+y}^{\bar{\sigma}}\right)  \left(  1+\Phi_{r}\right)  dB_{r}%
^{1}+\frac{\partial_{x}Z_{r}^{2}}{\partial_{x}X_{r}}dB_{r}^{2}\\
&  +\left(  \Phi_{r}\left \vert \delta_{x+y}^{\bar{\sigma}}\left(  1+\Phi
_{r}\right)  -\frac{\partial_{x}Z_{r}^{1}}{\partial_{x}X_{r}}\right \vert
^{2}+\delta_{x+y}^{f}\left(  1+\Phi_{r}\right)  +\delta_{z}^{f}\frac
{\partial_{x}Z_{r}^{2}}{\partial_{x}X_{r}}-\frac{\partial_{x}Z_{r}^{1}%
}{\partial_{x}X_{r}}\left(  \delta_{x+y}^{\bar{\sigma}}\left(  1+\Phi
_{r}\right)  -\frac{\partial_{x}Z_{r}^{1}}{\partial_{x}X_{r}}\right)  \right)
dr.
\end{align*}
Let,
\[
\hat{Z}_{r}^{1}:=\left(  \frac{\partial_{x}Z_{r}^{1}}{\partial_{x}X_{r}}%
-\Phi_{r}\delta_{x+y}^{\bar{\sigma}}\right)  \left(  1+\Phi_{r}\right)
\quad \text{and}\quad \hat{Z}_{r}^{2}:=\frac{\partial_{x}Z_{r}^{2}}{\partial
_{x}X_{r}},\quad s\leq r\leq \tau.
\]
Then, we obtain that $(\Phi,\hat{Z})$ solves
\begin{align}
\Phi_{s}=  &  \Phi_{\tau}-\int_{s}^{\tau}\left(  \delta_{x+y}^{f}\left(
1+\Phi_{r}\right)  +\delta_{z}^{f}\hat{Z}_{r}^{2}-\hat{Z}_{r}^{1}\left(
\delta_{x+y}^{\bar{\sigma}}\left(  1+\Phi_{r}\right)  -\frac{\hat{Z}_{r}%
^{1}+\Phi_{r}\delta_{x+y}^{\bar{\sigma}}\left(  1+\Phi_{r}\right)  }%
{1+\Phi_{r}}\right)  \right)  dr\label{2202}\\
&  -\int_{s}^{\tau}\hat{Z}_{r}^{\top}dB_{r},\nonumber
\end{align}
which is a quadratic BSDE. According to \cite[Theorem A.1.11]{Fthesis}, and
using that $\Phi$ is uniformly bounded and $\hat{Z}$ is well-defined in
$L^{2}$ by strong regularity, we have that $\hat{Z}$ is a BMO process. Using a
similar argument as in \cite[Lemma A.1]{FI2020}, it can be shown that $\tau=T$
in (\ref{2202}). In turn, applying It\^{o}'s formula gives
\begin{equation}
\ln \left(  1+\Phi_{s}\right)  =\ln \left(  1+\Phi_{T}\right)  -\int_{s}%
^{T}\left(  \delta_{z^{2}}^{f}\bar{Z}_{r}^{2}+\delta_{x+y}^{f}-\delta
_{x+y}^{\bar{\sigma}}\bar{Z}_{r}^{1}+\frac{1}{2}\left \vert \bar{Z}_{r}%
^{1}\right \vert ^{2}-\frac{1}{2}\left \vert \bar{Z}_{r}^{2}\right \vert
^{2}\right)  dr-\int_{s}^{T}\bar{Z}_{r}^{\top}dB_{r}, \label{12.4}%
\end{equation}
where $\bar{Z}^{i}:=\frac{\hat{Z}^{i}}{1+\Phi},i=1,2$. Using once more
\cite[Theorem A.1.11]{Fthesis}, we obtain that $\bar{Z}^{i}$ is also a BMO process.

Next, we use (\ref{12.4}) to obtain the upper and lower bounds for $w_{x}$.

\underline{Upper bound:} Rewrite (\ref{12.4}) as%
\begin{align*}
\ln \left(  1+\Phi_{s}\right)  =  &  \ln \left(  1+\Phi_{T}\right)  -\int
_{s}^{T}\delta_{x+y}^{f}dr\\
&  -\int_{s}^{T}\bar{Z}_{r}^{1}\left(  dB_{r}^{1}+\left(  \frac{1}{2}\bar
{Z}_{r}^{1}-\delta_{x+y}^{\bar{\sigma}}\right)  dr\right)  -\int_{s}^{T}%
\bar{Z}_{r}^{2}\left(  dB_{r}^{2}+\left(  \delta_{z^{2}}^{f}-\frac{1}{2}%
\bar{Z}_{r}^{2}\right)  dr\right) \\
:=  &  \ln \left(  1+\Phi_{T}\right)  -\int_{s}^{T}\delta_{x+y}^{f}dr-\int
_{s}^{T}\bar{Z}_{r}^{\top}d\tilde{B}_{r},
\end{align*}
where%
\[
d\tilde{B}_{r}^{1}:=dB_{r}^{1}+\left(  \frac{1}{2}\bar{Z}_{r}^{1}-\delta
_{x+y}^{\bar{\sigma}}\right)  dr\text{, \  \ }d\tilde{B}_{r}^{2}:=dB_{r}%
^{2}+\left(  \delta_{z^{2}}^{f}-\frac{1}{2}\bar{Z}_{r}^{2}\right)  dr\text{,}%
\]
and $\frac{1}{2}\bar{Z}_{r}^{1}-\delta_{x+y}^{\bar{\sigma}}$ and
$\delta_{z^{2}}^{f}-\frac{1}{2}\bar{Z}_{r}^{2}$ are both BMO process. Using
Girsanov's theorem for BMO processes, we have that $\tilde{B}$ is Brownian
motion under an equivalent probability measure $\mathbb{\tilde{Q}}$.
Therefore,
\begin{equation}
\mathbb{E}^{\mathbb{\tilde{Q}}}\left[  \ln \left(  1+\Phi_{s}\right)  \right]
=\mathbb{E}^{\mathbb{\tilde{Q}}}\left[  \ln \left(  1+\Phi_{T}\right)
-\int_{s}^{T}\delta_{x+y}^{f}dr\right]  . \label{12.6}%
\end{equation}
Recall the definition of $\delta_{x+y}^{f}$ in (\ref{df}), where $\phi_{x}%
^{1}<0$ from Assumption \ref{A.2} (i). Noting that $\delta_{x+y}^{f}$ can be
in fact seen as a quadratic function of $Z^{2}$, we have%
\begin{equation}
\delta_{x+y}^{f}\geq \frac{1-\rho^{2}}{2}\frac{\left \vert \phi_{x}^{2}\left(
r,\varepsilon \tilde{V}_{r},X_{r}+Y_{r}\right)  \right \vert ^{2}}{\phi_{x}%
^{1}\left(  r,\varepsilon \tilde{V}_{r},X_{r}+Y_{r}\right)  }. \label{CFA}%
\end{equation}
Next, we look at the following distinct cases.

Case 1. $\phi_{x}^{2}=0$. Then, it is obvious that $\delta_{x+y}^{f}\geq0$
and, thus,
\[
\mathbb{E}^{\mathbb{\tilde{Q}}}\left[  \ln \left(  1+\Phi_{s}\right)  \right]
\leq \mathbb{E}^{\mathbb{\tilde{Q}}}\left[  \ln \left(  1+\Phi_{T}\right)
\right]  \leq \ln \left(  1+L_{P,x}\right) .
\]
In particular, $\ln(1+w_{x}(t,\tilde{v},x))\leq \ln \left(  1+L_{P,x}\right)  $.
It follows that $w_{x}(t,\tilde{v},x)\leq L_{P,x}<1$.

Case 2. $\phi_{x}^{2}\neq0$. Since $\frac{1-\rho^{2}}{2}\frac{|\phi_{x}%
^{2}|^{2}}{\phi_{x}^{1}}\geq-K$, we obtain from (\ref{CFA}) and (\ref{12.6})
that%
\[
\mathbb{E}^{\mathbb{\tilde{Q}}}\left[  \ln \left(  1+\Phi_{s}\right)  \right]
\leq \mathbb{E}^{\mathbb{\tilde{Q}}}\left[  \ln \left(  1+\Phi_{T}\right)
\right]  +KT\leq \ln \left(  1+L_{P,x}\right)  +KT,
\]
and, in particular,
\[
\ln(1+w_{x}(t,\tilde{v},x))\leq \ln \left(  1+L_{P,x}\right)  +KT.
\]
Therefore,%
\[
w_{x}(t,\tilde{v},x)\leq \left(  1+L_{P,x}\right)  e^{KT}-1<1.
\]

In summary, under either Case 1 or Case 2, $w_{x}(t,\tilde{v},x)$ is uniformly
bounded by $1$ from above.

\underline{Lower bound:} Using (\ref{12.4}) yields%
\begin{align*}
\ln \left(  1+\Phi_{s}\right)  =  &  \ln \left(  1+\Phi_{T}\right)  -\int
_{s}^{T}\left(  \delta_{x+y}^{f}-\frac{1}{2}\left \vert \bar{Z}_{r}%
^{1}\right \vert ^{2}-\frac{1}{2}\left \vert \bar{Z}_{r}^{2}\right \vert
^{2}\right)  dr\\
&  -\int_{s}^{T}\bar{Z}_{r}^{1}\left(  dB_{r}^{1}+\left(  \bar{Z}_{r}%
^{1}-\delta_{x+y}^{\bar{\sigma}}\right)  dr\right)  -\int_{s}^{T}\bar{Z}%
_{r}^{2}\left(  dB_{r}^{2}+\delta_{z^{2}}^{f}dr\right) \\
:=  &  \ln \left(  1+\Phi_{T}\right)  -\int_{s}^{T}\left(  \delta_{x+y}%
^{f}-\frac{1}{2}\left \vert \bar{Z}_{r}^{1}\right \vert ^{2}-\frac{1}%
{2}\left \vert \bar{Z}_{r}^{2}\right \vert ^{2}\right)  dr-\int_{s}^{T}\bar
{Z}_{r}^{\top}d\hat{B}_{r},
\end{align*}
where $\hat{B}$ defined by
\begin{align*}
d\hat{B}_{r}^{1}  &  :=dB_{r}^{1}+\left(  \bar{Z}_{r}^{1}-\delta_{x+y}%
^{\bar{\sigma}}\right)  dr=dB_{r}^{1}+\left(  \bar{Z}_{r}^{1}+\theta
\varphi_{x}+\rho \left(  \phi^{2}-\frac{u_{xv}}{u_{xx}}\phi^{1}\right)
\right)  dr,\\
d\hat{B}_{r}^{2}  &  :=dB_{r}^{2}+\delta_{z^{2}}^{f}dr=dB_{r}^{2}-\left(
\phi^{1}Z_{r}^{2}+\sqrt{1-\rho^{2}}\phi^{2}\right)  dr,
\end{align*}
is a Brownian motion under an equivalent probability measure $\mathbb{\hat{Q}%
}$ due to the fact that $\bar{Z}^{1}-\delta_{x+y}^{\bar{\sigma}}$ and
$\delta_{z^{2}}^{f}$ are BMO processes. It, then, follows that
\begin{equation}
\mathbb{E}^{\mathbb{\hat{Q}}}\left[  \ln \left(  1+\Phi_{s}\right)  \right]
=\mathbb{E}^{\mathbb{\hat{Q}}}\left[  \ln \left(  1+\Phi_{T}\right)  -\int
_{s}^{T}\left(  \delta_{x+y}^{f}-\frac{1}{2}\left \vert \bar{Z}_{r}%
^{1}\right \vert ^{2}-\frac{1}{2}\left \vert \bar{Z}_{r}^{2}\right \vert
^{2}\right)  dr\right]  . \label{13.1}%
\end{equation}
Next, we obtain a uniform upper bound for $\delta_{x+y}^{f}$ in (\ref{df}). To
this end, applying It\^{o}'s formula to $\varphi(r,\varepsilon \tilde{V}%
_{r},X_{r}+Y_{r})$, we deduce, by using the definitions of $\hat{B}$ and of
$\varphi,\phi^{i}$ and $\psi$, that,
\begin{align*}
&  d\varphi \left(  r,\varepsilon \tilde{V}_{r},X_{r}+Y_{r}\right) \\
=  &  \left(  -\frac{1}{2}\varphi \phi_{x}^{1}\left \vert Z_{r}^{2}\right \vert
^{2}-\sqrt{1-\rho^{2}}\varphi \phi_{x}^{2}Z_{r}^{2}+\left(  \varphi_{x}%
\psi-\rho \varphi_{v}\right)  \bar{Z}_{r}^{1}\right. \\
&  \left.  +\varphi_{t}+\varphi_{v}\left(  l-\rho \theta \right)  +\frac{1}%
{2}\varphi_{vv}+\frac{1}{2}\varphi_{xx}\left \vert \psi \right \vert ^{2}%
+\psi \left(  \varphi_{x}\psi_{x}-\rho \varphi_{xv}\right)  +\varphi_{v}\left(
-\rho \psi_{x}+\left(  1-\rho^{2}\right)  \phi^{2}\right)  \right)  dr\\
&  +\left(  \rho \varphi_{v}-\varphi_{x}\psi \right)  d\hat{B}_{r}^{1}+\left(
\sqrt{1-\rho^{2}}\varphi_{v}+\varphi_{x}Z_{r}^{2}\right)  d\hat{B}_{r}^{2}\\
=:  &  \left(  -\frac{1}{2}\varphi \phi_{x}^{1}\left \vert Z_{r}^{2}\right \vert
^{2}-\sqrt{1-\rho^{2}}\varphi \phi_{x}^{2}Z_{r}^{2}+\beta_{r}\bar{Z}_{r}%
^{1}+\alpha_{r}\right)  dr+\left(  \rho \varphi_{v}-\varphi_{x}\psi \right)
d\hat{B}_{r}^{1}+\left(  \sqrt{1-\rho^{2}}\varphi_{v}+\varphi_{x}Z_{r}%
^{2}\right)  d\hat{B}_{r}^{2},
\end{align*}
where%
\begin{align*}
\beta_{r}  &  :=\varphi_{x}\psi-\rho \varphi_{v},\\
\alpha_{r}  &  :=\varphi_{t}+\varphi_{v}\left(  l-\rho \theta \right)  +\frac
{1}{2}\varphi_{vv}+\frac{1}{2}\varphi_{xx}\left \vert \psi \right \vert ^{2}%
+\psi \left(  \varphi_{x}\psi_{x}-\rho \varphi_{xv}\right)  +\varphi_{v}\left(
-\rho \psi_{x}+\left(  1-\rho^{2}\right)  \phi^{2}\right)  .
\end{align*}
Above, we omitted arguments $(r,\varepsilon \tilde{V}_{r},X_{r}+Y_{r})$ for
simplicity. From Assumptions \ref{A.1} and \ref{A.2}, we deduce that
$\alpha,\beta,\varphi \phi_{x}^{1},\varphi \phi_{x}^{2}$ and $\varphi
_{v},\varphi_{x},\psi$ are uniformly bounded. Thus, taking expectation on both
sides above yields
\begin{align}
&  \mathbb{E}^{\mathbb{\hat{Q}}}\left[  \varphi \left(  s,\varepsilon \tilde
{V}_{s},X_{s}+Y_{s}\right)  \right] \nonumber \\
=  &  \mathbb{E}^{\mathbb{\hat{Q}}}\left[  \varphi \left(  T,\varepsilon
\tilde{V}_{T},X_{T}+Y_{T}\right)  \right]  -\mathbb{E}^{\mathbb{\hat{Q}}%
}\left[  \int_{s}^{T}\left(  -\frac{1}{2}\varphi \phi_{x}^{1}\left \vert
Z_{r}^{2}\right \vert ^{2}-\sqrt{1-\rho^{2}}\varphi \phi_{x}^{2}Z_{r}^{2}%
+\beta_{r}\bar{Z}_{r}^{1}+\alpha_{r}\right)  dr\right] \nonumber \\
=  &  \mathbb{E}^{\mathbb{\hat{Q}}}\left[  \varphi \left(  T,\varepsilon
\tilde{V}_{T},X_{T}+Y_{T}\right)  \right]  +\mathbb{E}^{\mathbb{\hat{Q}}%
}\left[  \int_{s}^{T}\left(  -\varphi \right)  \delta_{x+y}^{f}dr\right]
-\mathbb{E}^{\mathbb{\hat{Q}}}\left[  \int_{s}^{T}\left(  \beta_{r}\bar{Z}%
_{r}^{1}+\alpha_{r}\right)  dr\right]  . \label{BB}%
\end{align}
It follows by Assumption \ref{A.1} (ii) that
\begin{align*}
\mathbb{E}^{\mathbb{\hat{Q}}}\left[  \int_{s}^{T}\delta_{x+y}^{f}dr\right]
&  =\mathbb{E}^{\mathbb{\hat{Q}}}\left[  \int_{s}^{T}\delta_{x+y}%
^{f}\mathbf{1}_{\delta_{x+y}^{f}<0}dr\right]  +\mathbb{E}^{\mathbb{\hat{Q}}%
}\left[  \int_{s}^{T}\delta_{x+y}^{f}\mathbf{1}_{\delta_{x+y}^{f}\geq
0}dr\right] \\
&  \leq \frac{1}{C_{u}}\mathbb{E}^{\mathbb{\hat{Q}}}\left[  \int_{s}^{T}\left(
-\varphi \right)  \delta_{x+y}^{f}\mathbf{1}_{\delta_{x+y}^{f}<0}dr\right]
+\frac{1}{C_{l}}\mathbb{E}^{\mathbb{\hat{Q}}}\left[  \int_{s}^{T}\left(
-\varphi \right)  \delta_{x+y}^{f}\mathbf{1}_{\delta_{x+y}^{f}\geq0}dr\right]
\\
&  =\frac{1}{C_{l}}\mathbb{E}^{\mathbb{\hat{Q}}}\left[  \int_{s}^{T}\left(
-\varphi \right)  \delta_{x+y}^{f}dr\right]  -\frac{C_{u}-C_{l}}{C_{l}C_{u}%
}\mathbb{E}^{\mathbb{\hat{Q}}}\left[  \int_{s}^{T}\left(  -\varphi \right)
\delta_{x+y}^{f}\mathbf{1}_{\delta_{x+y}^{f}<0}dr\right]  ,
\end{align*}
which, together with (\ref{CFA}), gives
\[
\mathbb{E}^{\mathbb{\hat{Q}}}\left[  \int_{s}^{T}\delta_{x+y}^{f}dr\right]
\leq \frac{1}{C_{l}}\mathbb{E}^{\mathbb{\hat{Q}}}\left[  \int_{s}^{T}\left(
-\varphi \right)  \delta_{x+y}^{f}dr\right]  +\frac{C_{u}-C_{l}}{C_{l}}KT,
\]
for both Case 1 and Case 2. Combining the above with (\ref{BB}), we deduce%
\begin{align*}
\mathbb{E}^{\mathbb{\hat{Q}}}\left[  \int_{s}^{T}\delta_{x+y}^{f}dr\right]
&  \leq \frac{1}{C_{l}}\left(  C_{u}+T\left \Vert \alpha \right \Vert _{\infty
}\right)  +\mathbb{E}^{\mathbb{\hat{Q}}}\left[  \int_{s}^{T}\frac{\beta_{r}%
}{C_{l}}\bar{Z}_{r}^{1}dr\right]  +\frac{C_{u}-C_{l}}{C_{l}}KT\\
&  :=\mathbb{E}^{\mathbb{\hat{Q}}}\left[  \int_{s}^{T}\frac{\beta_{r}}{C_{l}%
}\bar{Z}_{r}^{1}dr\right]  +K_{b}.
\end{align*}
Thus (\ref{13.1}) implies%
\begin{align*}
\mathbb{E}^{\mathbb{\hat{Q}}}\left[  \ln \left(  1+\Phi_{s}\right)  \right]
&  \geq \mathbb{E}^{\mathbb{\hat{Q}}}\left[  \ln \left(  1+\Phi_{T}\right)
\right]  -\mathbb{E}^{\mathbb{\hat{Q}}}\left[  \int_{s}^{T}\delta_{x+y}%
^{f}dr\right]  +\mathbb{E}^{\mathbb{\hat{Q}}}\left[  \int_{s}^{T}\frac{1}%
{2}\left \vert \bar{Z}_{r}^{1}\right \vert ^{2}dr\right] \\
&  \geq \ln \left(  1-L_{H,x}\right)  -\mathbb{E}^{\mathbb{\hat{Q}}}\left[
\int_{s}^{T}\frac{\beta_{r}}{C_{l}}\bar{Z}_{r}^{1}dr\right]  +\mathbb{E}%
^{\mathbb{\hat{Q}}}\left[  \int_{s}^{T}\frac{1}{2}\left \vert \bar{Z}_{r}%
^{1}\right \vert ^{2}dr\right]  -K_{b}\\
&  \geq \ln \left(  1-L_{H,x}\right)  -\frac{1}{2}\mathbb{E}^{\mathbb{\hat{Q}}%
}\left[  \int_{s}^{T}\left \vert \frac{\beta_{r}}{C_{l}}\right \vert
^{2}dr\right]  -K_{b}\\
&  \geq \ln \left(  1-L_{H,x}\right)  -\frac{T}{2\left \vert C_{l}\right \vert
^{2}}\left \Vert \beta \right \Vert _{\infty}^{2}-K_{b}=:-C_{b},
\end{align*}
where $C_{b}>0$ is independent of both $\varepsilon$ and time. It, then,
follows that $\ln \left(  1+w_{x}\left(  t,\tilde{v},x\right)  \right)
\geq-C_{b}$, and, in turn,
\[
w_{x}\left(  t,\tilde{v},x\right)  \geq e^{-C_{b}}-1>-1\text{.}%
\]

To conclude, we note that $w_{x}$ can be bounded uniformly by $1$ and, thus,
we can apply Theorem \ref{3.3} to obtain the global existence result. The
uniqueness can be proved by Lemma 2.5 in \cite{FI2020} and is omitted.

\end{document}